\title{Metric Dimension Parameterized by Treewidth}
\titlerunning{Metric Dimension Parameterized by Treewidth}
\author{\'{E}douard Bonnet}{Univ Lyon, CNRS, ENS de Lyon, Université Claude Bernard Lyon 1, LIP UMR5668, France}{edouard.bonnet@ens-lyon.fr}{https://orcid.org/0000-0002-1653-5822}{}
\author{Nidhi Purohit}{Univ Lyon, CNRS, ENS de Lyon, Université Claude Bernard Lyon 1, LIP UMR5668, France}{nidhi.purohit@ens-lyon.fr}{https://orcid.org/0000-0003-4869-0031}{}
\authorrunning{\'E. Bonnet, N. Purohit}
\keywords{Metric Dimension, Treewidth, Parameterized Hardness}
\algrenewcommand\algorithmicrequire{\textbf{Precondition:}}
\algrenewcommand\algorithmicensure{\textbf{Postcondition:}}
\newcommand\footnoteref[1]{\protected@xdef\@thefnmark{\ref{#1}}\@footnotemark}
\newcommand{\kmIS}{\textsc{$k$-Multicolored Independent Set}\xspace}
\newcommand{\skmIS}{\textsc{$k$-MIS}\xspace}
\newcommand{\md}{\textsc{Metric Dimension}\xspace}
\newcommand{\gmd}{\textsc{$k$-Multicolored Resolving Set}\xspace}
\newcommand{\smd}{\textsc{MD}\xspace}
\newcommand{\sgmd}{\textsc{$k$-MRS}\xspace}
\newcommand{\dist}{\text{dist}}
\newcommand{\diam}{\text{diam}}
\newcommand{\tw}{\text{tw}}
\newcommand{\pw}{\text{pw}}
\newcommand{\ctw}{\text{ctw}}
\newcommand{\tl}{\text{tl}}
\newcommand{\swle}{$\text{sw}^{j}_{i}$\xspace}
\newcommand{\sele}{$\text{se}^{j}_{i}$\xspace}
\newcommand{\nwle}{$\text{nw}^{j}_{i}$\xspace}
\newcommand{\nele}{$\text{ne}^{j}_{i}$\xspace}
\newcommand{\swge}{$\text{sw}^{j}_{i}$\xspace}
\newcommand{\sege}{$\text{se}^{j}_{i}$\xspace}
\newcommand{\nwge}{$\text{nw}^{j}_{i}$\xspace}
\newcommand{\nege}{$\text{ne}^{j}_{i}$\xspace}
\newcommand{\sw}{\text{sw}^{j}_{i}\xspace}
\newcommand{\se}{\text{se}^{j}_{i}\xspace}
\newcommand{\nw}{\text{nw}^{j}_{i}\xspace}
\newcommand{\nee}{\text{ne}^{j}_{i}\xspace}
\newcommand{\tle}{\text{tl}^{j}_{i}\xspace}
\newcommand{\tri}{\text{tr}^{j}_{i}\xspace}
\newcommand{\ble}{\text{bl}^{j}_{i}\xspace}
\newcommand{\bri}{\text{br}^{j}_{i}\xspace}
\newcommand{\tc}{\text{tc}^{j}_{i}\xspace}
\newcommand{\bc}{\text{bc}^{j}_{i}\xspace}
\newcommand{\tbr}{\text{tb}^{j}_{i}\xspace}
\newcommand{\bbr}{\text{bb}^{j}_{i}\xspace}
\theoremstyle{plain}
\newcommand{\defparproblem}[4]{
 \vspace{1mm}
\noindent\fbox{
 \begin{minipage}{0.96\textwidth}
 \begin{tabular*}{\textwidth}{@{\extracolsep{\fill}}lr} #1 & {\bf{Parameter:}} #3 \\ \end{tabular*}
 {\bf{Input:}} #2 \\
 {\bf{Question:}} #4
 \end{minipage}
 }
 \vspace{1mm}
}
\begin{document}

\maketitle

\begin{abstract}
  A resolving set $S$ of a graph $G$ is a subset of its vertices such that no two vertices of $G$ have the same distance vector to $S$.
  The \textsc{Metric Dimension} problem asks for a resolving set of minimum size, and in its decision form, a resolving set of size at most some specified integer.
  This problem is NP-complete, and remains so in very restricted classes of graphs.
  It is also W[2]-complete with respect to the size of the solution.
  \textsc{Metric Dimension} has proven elusive on graphs of bounded treewidth.
  On the algorithmic side, a polytime algorithm is known for trees, and even for outerplanar graphs, but the general case of treewidth at most two is open.
  On the complexity side, no parameterized hardness is known.
  This has led several papers on the topic to ask for the parameterized complexity of \textsc{Metric Dimension} with respect to treewidth.
  
  We provide a first answer to the question.
  We show that \textsc{Metric Dimension} parameterized by the treewidth of the input graph is W[1]-hard.
  More refinedly we prove that, unless the Exponential Time Hypothesis fails, there is no algorithm solving \textsc{Metric Dimension} in time $f(\text{pw})n^{o(\text{pw})}$ on $n$-vertex graphs of constant degree, with $\text{pw}$ the pathwidth of the input graph, and $f$ any computable function.
  This is in stark contrast with an FPT algorithm of Belmonte et al. [SIAM J. Discrete Math. '17] with respect to the combined parameter $\text{tl}+\Delta$, where $\text{tl}$ is the tree-length and $\Delta$ the maximum-degree of the input graph. 
\end{abstract}

\section{Introduction}

The \md problem has been introduced in the 1970s independently by Slater \cite{Slater75} and by Harary and Melter \cite{Harary76}.
Given a graph $G$ and an integer $k$, \md asks for a subset $S$ of vertices of $G$ of size at most $k$ such that every vertex of $G$ is uniquely determined by its distances to the vertices of $S$.
Such a set $S$ is called a \emph{resolving set}, and a resolving set of minimum-cardinality is called a \emph{metric basis}.
The metric dimension of graphs finds application in various areas including network verification \cite{Beerliova06}, chemistry \cite{Chartrand00}, robot navigation \cite{Khuller96},  and solving the Mastermind game \cite{Chvatal83}.


\md is an entry of the celebrated book on intractability by Garey and Johnson~\cite{GareyJ79} where the authors show that it is NP-complete.
In fact \md remains NP-complete in many restricted classes of graphs such as planar graphs~\cite{Diaz17}, split, bipartite, co-bipartite graphs, and line graphs of bipartite graphs~\cite{Epstein15}, graphs that are both interval graphs of diameter two and permutation graphs~\cite{Foucaud17}, and in a subclass of unit disk graphs~\cite{Hoffmann12}.
On the positive side, the problem is polynomial-time solvable on trees~\cite{Slater75, Harary76, Khuller96}.
Diaz et al.~\cite{Diaz17} generalize this result to outerplanar graphs.
Fernau et al.~\cite{Fernau15} give a polynomial-time algorithm on chain graphs.
Epstein et al.~\cite{Epstein15} show that \md (and even its vertex-weighted variant) can be solved in polynomial time on co-graphs and forests augmented by a constant number of edges.
Hoffmann et al.~\cite{Hoffmann16} obtain a linear algorithm on cactus block graphs.

Hartung and Nichterlein~\cite{Hartung13} prove that \md is W[2]-complete (parameterized by the size of the solution $k$) even on subcubic graphs.
Therefore an FPT algorithm solving the problem is unlikely.
However Foucaud et al.~\cite{Foucaud17} give an FPT algorithm with respect to $k$ on interval graphs.
This result is later generalized by Belmonte et al.~\cite{Belmonte17} who obtain an FPT algorithm with respect to $\tl+\Delta$ (where $\tl$ is the tree-length and $\Delta$ is the maximum-degree of the input graph), implying one for parameter $\tl+k$.
Indeed interval graphs, and even chordal graphs, have constant tree-length.
Hartung and Nichterlein~\cite{Hartung13} presents an FPT algorithm parameterized by the vertex cover number, Eppstein~\cite{Eppstein15}, by the max leaf number, and Belmonte et al.~\cite{Belmonte17}, by the modular-width (a larger parameter than clique-width).

The complexity of \md parameterized by treewidth is quite elusive.
It is discussed~\cite{Eppstein15} or raised as an open problem in several papers~\cite{Belmonte17,Diaz17}.
On the one hand, it was not known, prior to our paper, if this problem is W[1]-hard.
On the other hand, the complexity of \md in graphs of treewidth at most two is still an open question.

\subsection{Our contribution}
We settle the parameterized complexity of \md with respect to treewidth.
We show that this problem is W[1]-hard, and we rule out, under the Exponential Time Hypothesis (ETH), an algorithm running in $f(\tw)|V(G)|^{o(\tw)}$, where $G$ is the input graph, $\tw$ its treewidth, and $f$ any computable function.
Our reduction even shows that an algorithm in time $f(\pw)|V(G)|^{o(\pw)}$ is unlikely on constant-degree graphs, for the larger parameter pathwidth $\pw$.
This is in stark contrast with the FPT algorithm of Belmonte et al. \cite{Belmonte17} for the parameter $\tl+\Delta$ where $\tl$ is the tree-length and $\Delta$ is the maximum-degree of the graph.
We observe that this readily gives an FPT algorithm for $\ctw+\Delta$ where $\ctw$ is the connected treewidth, since $\ctw \geqslant \tl$.
This unravels an interesting behavior of \md, at least on bounded-degree graphs: usual tree-decompositions are not enough for efficient solving.
Instead one needs tree-decompositions with an additional guarantee that the vertices of a same bag are at a bounded distance from each other.  

As our construction is quite technical, we chose to introduce an intermediate problem dubbed \gmd in the reduction from \kmIS to \md.
The first half of the reduction, from \kmIS to \gmd, follows a generic and standard recipe to design parameterized hardness with respect to treewidth. 
The main difficulty is to design an effective \emph{propagation gadget} with a constant-size left-right cut.
The second half brings some new local attachments to the produced graph, to bridge the gap between \gmd and \md.
Along the way, we introduce a number of gadgets: edge, propagation, forced set, forced vertex. 
They are quite streamlined and effective.
Therefore, we believe these building blocks may help in designing new reductions for \md.

\subsection{Organization of the paper}
In \cref{sec:prelim} we introduce the definitions, notations, and terminology used throughout the paper.
In \cref{sec:HighLevel} we present the high-level ideas to establish our result.
We define the \gmd problem which serves as an intermediate step for our reduction.
In \cref{sec:gmd} we design a parameterized reduction from the W[1]-complete \kmIS to \gmd parameterized by treewidth.
In \cref{sec:md} we show how to transform the produced instances of \gmd to \md-instances (while maintaining bounded treewidth).
In \cref{sec:perspectives} we conclude with some open questions.

\section{Preliminaries}\label{sec:prelim}
We denote by $[i,j]$ the set of integers $\{i,i+1,\ldots, j-1, j\}$, and by $[i]$ the set of integers $[1,i]$.
If $\mathcal X$ is a set of sets, we denote by $\cup \mathcal X$ the union of them.

\subsection{Graph notations}
All our graphs are undirected and simple (no multiple edge nor self-loop).
We denote by $V(G)$, respectively $E(G)$, the set of vertices, respectively of edges, of the graph $G$. 
For $S \subseteq V(G)$, we denote the \emph{open neighborhood} (or simply \emph{neighborhood}) of $S$ by $N_G(S)$, i.e., the set of neighbors of $S$ deprived of $S$, and the \emph{closed neighborhood} of $S$ by $N_G[S]$, i.e., the set $N_G(S) \cup S$.
For singletons, we simplify $N_G(\{v\})$ into $N_G(v)$, and $N_G[\{v\}]$ into $N_G[v]$.
We denote by $G[S]$ the subgraph of $G$ induced by $S$, and $G - S := G[V(G) \setminus S]$.
For $S \subseteq V(G)$ we denote by $\overline S$ the complement $V(G) \setminus S$.
For $A, B \subseteq V(G)$, $E(A,B)$ denotes the set of edges in $E(G)$ with one endpoint in $A$ and the other one in $B$.

The length of a path in an unweighted graph is simply the number of edges of the path.
For two vertices $u, v \in V(G)$, we denote by $\dist_G(u,v)$, the distance between $u$ and $v$ in $G$, that is the length of the shortest path between $u$ and $v$.
The diameter of a graph is the longest distance between a pair of its vertices.
The diameter of a subset $S \subseteq V(G)$, denoted by $\diam_G(S)$, is the longest distance between a pair of vertices in $S$.
Note that the distance is taken in $G$, \emph{not} in $G[S]$.
In particular, when $G$ is connected, $\diam_G(S)$ is finite for every $S$.
A \emph{pendant} vertex is a vertex with degree one.
A vertex $u$ is \emph{pendant to $v$} if $v$ is the only neighbor of $u$.
Two distinct vertices $u, v$ such that $N(u) = N(v)$ are called \emph{true twins}, and \emph{false twins} if $N[u] = N[v]$.
In particular, false twins are adjacent.
In all the above notations with a subscript, we omit it whenever the graph is implicit from the context.

\subsection{Treewidth, pathwidth, connected treewidth, and tree-length}
A \emph{tree-decomposition} of a graph $G$, is a tree $T$ whose nodes are labeled by subsets of $V(G)$, called \emph{bags}, such that for each vertex $v \in V(G)$, the bags containing $v$ induce a non-empty subtree of $T$, and for each edge $e \in E(G)$, there is at least one bag containing both endpoints of $e$.
A \emph{connected tree-decomposition} further requires that each bag induces a connected subgraph in $G$.
The width of a (connected) tree-decomposition is the size of its largest bag minus one.
The treewidth (resp. connected treewidth) of a graph $G$ is the minimum width of a tree-decomposition (resp. a connected tree-decomposition) of $G$.
The length of a tree-decomposition is the maximum diameter of its bags in $G$.
The tree-length of a graph $G$ is the minimum length of a tree-decomposition of $G$.
We denote the treewidth, connected treewidth, and tree-length of a graph by $\tw$, $\ctw$, and $\tl$ respectively. 
Since a connected graph on $n$ vertices has diameter at most $n-1$, it holds that $\ctw \geqslant \tl$.

The pathwidth is the same as treewidth except the tree $T$ is now required to be a path, and hence is called a path-decomposition.
In particular pathwidth is always larger than treewidth.
Later we will need to upper bound the pathwidth of our constructed graph.
Since writing down a path-decomposition is a bit cumbersome, we will rely on the following characterization of pathwidth.
Kirousis and Papadimitriou \cite{Kirousis85} show the equality between the interval thickness number, which is known to be pathwidth plus one, and the \emph{node searching number}.
Thus we will only need to show that the number of searchers required to win the following one-player game is bounded by a suitable function.
We imagine the edges of a graph to be contaminated by a gas.
The task is to move around a team of searchers, placed at the vertices, in order to clean all the edges.
A move consists of removing a searcher from the graph, adding a searcher at an unoccupied vertex, or displacing a searcher from a vertex to any other vertex (not necessarily adjacent).
An edge is cleaned when both its endpoints are occupied by a searcher.
However after each move, all the cleaned edges admitting a free-of-searchers path from one of its endpoints to the endpoint of a contaminated edge are recontaminated.
The node searching number is the minimum number of searchers required to win the game.

\subsection{Parameterized problems and algorithms}
Parameterized complexity aims to solve hard problems in time $f(k)|\mathcal I|^{O(1)}$, where $k$ is a parameter of the instance $\mathcal I$ which is hopefully (much) smaller than the total size of $\mathcal I$.
More formally, a \emph{parameterized problem} is a pair $(\Pi,\kappa)$ where $\Pi \subseteq L$ for some language $L \subseteq \Sigma^*$ over a finite alphabet $\Sigma$ (e.g., the set of words, graphs, etc.), and $\kappa$ is a mapping from $L$ to $\mathbb N$.
An element $\mathcal I \in L$ is called an \emph{instance} (or \emph{input}).
The mapping $\kappa$ associates each instance to an integer called \emph{parameter}.
An instance is said \emph{positive} if $\mathcal I \in \Pi$, and a \emph{negative} otherwise.
We denote by $|\mathcal I|$ the size of $\mathcal I$, that can be thought of as the length of the \emph{word} $\mathcal I$.
An \emph{FPT algorithm} is an algorithm which solves a parameterized problem $(\Pi,\kappa)$, i.e., decides whether or not an input $\mathcal I \in L$ is positive, in time $f(\kappa(\mathcal I))|\mathcal I|^{O(1)}$ for some computable function $f$.
We refer the interested reader to recent textbooks in parameterized algorithms and complexity~\cite{DowneyF13, Cygan15}.

\subsection{Exponential Time Hypothesis, FPT reductions, and W[1]-hardness}
The \emph{Exponential Time Hypothesis} (ETH) is a conjecture by Impagliazzo et al.~\cite{ImpagliazzoETH} asserting that there is no $2^{o(n)}$-time algorithm for \textsc{3-SAT} on instances with $n$ variables. 
Lokshtanov et al.~\cite{surveyETH} survey conditional lower bounds under the ETH.

An \emph{FPT reduction} from a parameterized problem $(\Pi \subseteq L,\kappa)$ to a parameterized problem $(\Pi' \subseteq L',\kappa')$ is a mapping $\rho: L \mapsto L'$ such that for every $\mathcal I \in L$:
\begin{itemize}
\item(1) $\mathcal I \in \Pi \Leftrightarrow \rho(\mathcal I) \in \Pi'$,
\item(2) $|\rho(\mathcal I)| \leqslant f(\kappa(\mathcal I))|\mathcal I|^{O(1)}$ for some computable function $f$, and
\item(3) $\kappa(\rho(\mathcal I)) \leqslant g(\kappa(\mathcal I))$ for some computable function $g$.
\end{itemize}
We further require that for every $\mathcal I$, we can compute $\rho(\mathcal I)$ in FPT time $h(\kappa(\mathcal I))|\mathcal I|^{O(1)}$ for some computable function $h$.
Condition (1) makes $\rho$ a valid reduction, condition (2) together with the further requirement on the time to compute $\rho(\mathcal I)$ make the mapping $\rho$ \emph{FPT}, and condition (3) controls that the new parameter $\kappa(\rho(\mathcal I))$ is bounded by a function of the original parameter $\kappa(\mathcal I)$.
One can therefore observe that using $\rho$ in combination with an FPT algorithm solving $(\Pi',\kappa')$ yields an FPT procedure to solve the initial problem $(\Pi,\kappa)$.

A standard use of an FPT reduction is to derive conditional lower bounds: if a problem $(\Pi,\kappa)$ is thought not to admit an FPT algorithm, then an FPT reduction from $(\Pi,\kappa)$ to $(\Pi',\kappa')$ indicates that $(\Pi',\kappa')$ is also unlikely to admit an FPT algorithm.
We refer the reader to the textbooks \cite{DowneyF13,Cygan15} for a formal definition of W[1]-hardness.
For the purpose of this paper, we will just state that W[1]-hard are parameterized problems that are unlikely to be FPT, and that the following problem is W[1]-complete even when all the $V_i$ have the same number of elements, say $t$ (see for instance \cite{Pietrzak03}).

\defparproblem{\kmIS (\skmIS)}{An undirected graph $G$, an integer $k$, and $(V_1, \ldots,V_k)$ a partition of $V(G)$.}{$k$}{Is there a set $I \subseteq V(G)$ such that $|I \cap V_i|~=1$ for every $i \in [k]$, and $G[I]$ is edgeless?}

Every parameterized problem that \kmIS FPT-reduces to is W[1]-hard. 
Our paper is thus devoted to designing an FPT reduction from \kmIS to \md parameterized by $\tw$.
Let us observe that the ETH implies that one (equivalently, every) W[1]-hard problem is not in the class of problems solvable in FPT time (FPT$\neq$W[1]).
Thus if we admit that there is no subexponential algorithm solving \textsc{3-SAT}, then \kmIS is not solvable in time $f(k)|V(G)|^{O(1)}$. 
Actually under this stronger assumption, \kmIS is not solvable in time $f(k)|V(G)|^{o(k)}$. 
A concise proof of that fact can be found in the survey on the consequences of ETH \cite{surveyETH}.

\subsection{Metric dimension, resolved pairs, distinguished vertices}

A pair of vertices $\{u,v\} \subseteq V(G)$ is said to be \emph{resolved} by a set $S$ if there is a vertex $w \in S$ such that $\dist(w,u) \neq \dist(w,v)$.
A vertex $u$ is said to be \emph{distinguished} by a set $S$ if for any $w \in V(G) \setminus \{u\}$, there is a vertex $v \in S$ such that $\dist(v,u) \neq \dist(v,w)$.
A \emph{resolving set} of a graph $G$ is a set $S \subseteq V(G)$ such that every two distinct vertices $u, v \in V(G)$ are resolved by~$S$.
Equivalently, a resolving set is a set $S$ such that every vertex of $G$ is distinguished by~$S$.
Then \md asks for a resolving set of size at most some threshold $k$.
Note that a resolving set of minimum size is sometimes called a \emph{metric basis} for $G$.

\defparproblem{\md (\smd)}{An undirected graph $G$ and an integer $k$.}{$\tw(G)$}{Does $G$ admit a resolving set of size at most $k$?}

Here we anticipate on the fact that we will mainly consider \md parameterized by treewidth.
Henceforth we sometimes use the notation $\Pi/\tw$ to emphasize that $\Pi$ is not parameterized by the natural parameter (size of the resolving set) but by the treewidth of the input graph.

 
\section{Outline of the W[1]-hardness proof of \md/$\tw$}\label{sec:HighLevel}

We will show the following.
\begin{theorem}\label{thm:main}
Unless the ETH fails, there is no computable function $f$ such that \md can be solved in time $f(\pw)n^{o(\pw)}$ on constant-degree $n$-vertex graphs.
\end{theorem}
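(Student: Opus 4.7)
The plan is to establish an FPT reduction from \kmIS to \md that produces, on input of parameter $k$, a constant-degree graph $G$ whose pathwidth is bounded by a linear function of $k$. Combined with the ETH-based lower bound recalled in \cref{sec:prelim} (no algorithm for \kmIS in time $f(k)|V(G)|^{o(k)}$), such a reduction immediately forbids any $f(\pw)n^{o(\pw)}$-time algorithm for \md on constant-degree graphs, since plugging one into the other would solve \kmIS in time $f(O(k))n^{o(k)}$. Thus the entire proof of \cref{thm:main} reduces to the \emph{construction} plus a \emph{correctness and pathwidth analysis} of one such reduction.

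I would follow the two-stage roadmap advertised in the outline. In the first stage (\cref{sec:gmd}) I reduce \skmIS to the intermediate problem \sgmd, in which only a prescribed family of ``critical'' pairs must be resolved. The value of introducing \sgmd is that it isolates the combinatorial heart of the reduction, namely forcing the selection of exactly one vertex per color class via a grid layout of $k$ columns (one per $V_i$) and $t$ rows (one per candidate in each $V_i$, assuming $|V_i|=t$), from the bookkeeping needed to upgrade ``resolve a given pair'' into ``resolve every pair''. The second stage (\cref{sec:md}) then attaches local pendants and twin-like structures so that, up to some budget shift, resolving all pairs becomes equivalent to resolving the critical pairs only.

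The central engineering task is a \emph{propagation gadget} that transmits a chosen row index horizontally along a column while interfacing with its neighbors through a cut of size $O(1)$, independent of $t$. Without such a constant cut, pathwidth would grow linearly in $t$ and the argument would degenerate from $n^{o(\pw)}$ to $n^{o(k)}$. Since \md is intrinsically distance-based, the gadget must encode the choice through carefully calibrated distances to a small set of anchor vertices, while maintaining bounded degree. Around it I would assemble a small toolkit: a \emph{forced-vertex} gadget using pendants (several pendants hung from a common vertex yield false twins, all but one of which belong to every resolving set), a \emph{forced-set} gadget that locks a prescribed subset into every solution of the target budget, and an \emph{edge gadget} that, for each edge $\{a,b\}$ of the source graph with $a \in V_i$, $b \in V_j$, creates a critical pair resolved exactly when the ``column $i$ chooses $a$ \emph{and} column $j$ chooses $b$'' configuration is avoided. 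I expect this propagation gadget, and its compatibility with the edge gadget across column boundaries, to be the main obstacle: its specification drives every other parameter of the construction.

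Finally, once the gadgets are in place, the correctness of the reduction is a ``if and only if'' argument: from an independent set of size $k$ I construct a resolving set meeting the budget by pointing, in each column, to the row corresponding to the chosen element; conversely any resolving set of that size must, up to the forced vertices, select a coherent row in each column, and the edge gadgets ensure the selected rows form an independent set. For the pathwidth bound I would rely on the node-searching characterization recalled in \cref{sec:prelim} and exhibit a strategy sweeping the grid column by column using only $O(k)$ searchers: searchers stationed on the $O(1)$ cut of each propagation gadget, plus $O(k)$ more to handle the edge gadgets currently straddling two processed columns. Bounded degree will follow from inspecting each gadget's construction. This yields all three FPT-reduction conditions and hence \cref{thm:main}.
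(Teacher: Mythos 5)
Your overall plan coincides with the paper's: a two-stage FPT reduction from \kmIS through the intermediate problem \gmd and then to \md, built from the same toolkit (propagation, edge, forced-set and forced-vertex gadgets, the last via pairs of false twins), with correctness argued as an equivalence and the pathwidth bound obtained via the node-searching characterization of Kirousis and Papadimitriou. On all of those points you are aligned with the actual proof.

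There is, however, one concrete gap: your global layout. You describe ``a grid layout of $k$ columns (one per $V_i$) and $t$ rows,'' i.e.\ a \emph{single} selection gadget per color class, with an edge gadget wired between the gadgets of $V_i$ and $V_j$ for every edge of the source graph. With that layout the construction contains (a subdivision of) the source graph $G$ itself as a subgraph, so its pathwidth cannot be bounded by any function of $k$ alone, and your proposed column-by-column sweep fails because up to $\Omega(m)$ edge gadgets can straddle any single cut. The missing idea --- which the paper flags as the standard opening move for treewidth-parameterized hardness --- is to replicate the $k$ selection gadgets into $m$ columns $V_i^1,\ldots,V_i^m$, one column per edge $e_j$ of $G$, arranged cyclically, and to place the edge gadget $\mathcal G(e_j)$ entirely inside column $j$ (attached only to $V_i^j$ and $V_{i'}^j$ where $e_j\in E(V_i,V_{i'})$). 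The propagation gadget then lives \emph{between consecutive columns of the same row} and enforces, via a circular chain of inequalities $\gamma(i,1)\leqslant\cdots\leqslant\gamma(i,m)\leqslant\gamma(i,1)$, that all $m$ copies of row $i$ select the same candidate. Only then is every gadget local to one or two adjacent columns, the left-to-right separator has size $O(k)$ (a constant number of gate vertices per row), and the searcher sweep over the $m$ columns goes through with $O(k)$ searchers. Your text gestures at propagation ``transmitting a chosen row index horizontally,'' but without the $m$-fold replication there is nothing to propagate between and no $O(k)$ cut to sweep; this is precisely the structural ingredient your plan needs to make the pathwidth analysis, and hence \cref{thm:main}, go through.
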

We first prove that the following variant of \md is W[1]-hard.

\defparproblem{\gmd (\sgmd)}{An undirected graph $G$, an integer $k$, a set $\mathcal X$ of $q$ disjoint subsets of $V(G)$: $X_1, \ldots, X_q$, and a set $\mathcal P$ of pairs of vertices of $G$: $\{x_1,y_1\}, \ldots, \{x_h,y_h\}$.}{$\tw(G)$}{Is there a set $S \subseteq V(G)$ of size $q$ such that
  \begin{itemize}
  \item(i) for every $i \in [q]$, $|S \cap X_i|~= 1$, and
  \item(ii) for every $p \in [h]$, there is an $s \in S$ satisfying $\dist_G(s,x_p) \neq \dist_G(s,y_p)$?
  \end{itemize}}

In words, in this variant the resolving set is made by picking exactly one vertex in each set of $\mathcal X$, and not all the pairs should be resolved but only the ones in a prescribed set $\mathcal P$.
We call \emph{critical pair} a pair of $\mathcal P$.
In the context of \gmd, we call \emph{legal set} a set which satisfies the former condition, and \emph{resolving set} a set which satisfies the latter.
Thus a solution for \gmd is a legal resolving set.

The reduction from \kmIS starts with a well-established trick to show parameterized hardness by treewidth.
We create $m$ ``empty copies'' of the \skmIS-instance $(G,k,(V_1,\ldots,V_k))$, where $m := |E(G)|$ and $t := |V_i|$.
We force exactly one vertex in each color class of each copy to be in the resolving set, using the set $\mathcal X$.
In each copy, we introduce an edge gadget for a single (distinct) edge of $G$.
Encoding an edge of \skmIS in the \sgmd-instance is fairly simple: we build a pair (of $\mathcal P$) which is resolved by every choice but the one \emph{selecting both its endpoints} in the resolving set.
We now need to force a \emph{consistent choice of the vertex chosen in $V_i$} over all the copies.
We thus design a propagation gadget.
A crucial property of the propagation gadget, for the pathwidth of the constructed graph to be bounded, is that it admits a cut of size $O(k)$ disconnecting one copy from the other.
Encoding a choice in $V_i$ in the distances to four special vertices, called \emph{gates}, we manage to build such a gadget with constant-size ``left-right'' separator per color class.
This works by introducing $t$ pairs (of $\mathcal P$) which are resolved by the south-west and north-east gates but not by the south-east and north-west ones.
Then we link the vertices of a copy of $V_i$ in a way that the higher their index, the more pairs they resolve in the propagation gadget to their left, and the fewer pairs they resolve in the propagation gadget to their right.

We then turn to the actual \md problem.
We design a gadget which simulates requirement (i) by forcing a vertex of a specific set $X$ in the resolving set.
This works by introducing two pairs that are only resolved by vertices of $X$.
We attach this new gadget, called \emph{forcing set} gadget, to all the $k$ color classes of the $m$ copies.
Finally we have to make sure that a candidate solution resolves all the pairs, and not only the ones prescribed by $\mathcal P$.
For that we attach two adjacent ``pendant'' vertices to strategically chosen vertices.
One of these two vertices have to be in the resolving set since they are false twins, hence not resolved by any other vertex.
Then everything is as if the unique common neighbor $v$ of the false twins was added to the resolving set.
Therefore we can perform this operation as long as $v$ does not resolve any of the pairs of $\mathcal P$.

To facilitate the task of the reader, henceforth we stick to the following conventions:
 \begin{itemize}
 \item Index $i \in [k]$ ranges over the $k$ \emph{rows} of the \textsc{$k$-MRS/MD}-instance or color classes of \skmIS.
 \item Index $j \in [m]$ ranges over the $m$ \emph{columns} of the \textsc{$k$-MRS/MD}-instance or edges of \skmIS.
 \item Index $\gamma \in [t]$, ranges over the $t$ vertices of a color class.
 \end{itemize}
We invite the reader to look up \cref{tbl:glossary} when in doubt about a notation/symbol relative to the construction.
 
\section{Parameterized hardness of \gmd/$\tw$}\label{sec:gmd}

In this section, we give an FPT reduction from the W[1]-complete \kmIS to \gmd parameterized by treewidth.
More precisely, given a \textsc{$k$-Multicolored Independent Set}-instance $(G,k,(V_1, \ldots ,V_k))$ we produce in polynomial-time an equivalent \gmd-instance $(G',k',\mathcal X, \mathcal P)$ where $G'$ has pathwidth (hence treewidth) $O(k)$. 
  
 \subsection{Construction}
 
 Let $(G,k,(V_1, \ldots ,V_k))$ be an instance of \textsc{$k$-Multicolored Independent Set} where $(V_1, \ldots, V_k)$ is a partition of $V(G)$ and $V_i := \{v_{i,\gamma}$ $|$ $1 \leqslant \gamma \leqslant t\}$.
 We arbitrarily number $e_1, \ldots, e_j, \ldots, e_m$ the $m$ edges of $G$.

\subsubsection{Overall picture}
We start with a high-level description of the \sgmd-instance $(G',k',\mathcal X,\mathcal P)$.
For each color class $V_i$, we introduce $m$ copies $V_i^1, \ldots, V_i^j, \ldots, V_i^m$ of a \emph{selector gadget} to $G'$.
Each set $V_i^j$ is added to $\mathcal X$, so a solution has to pick exactly one vertex within each selector gadget.
One can imagine the vertex-sets $V_i^1, \ldots, V_i^m$ to be aligned on the \emph{$i$-th row}, with $V_i^j$ occupying the \emph{$j$-th column} (see \cref{fig:overall-picture}). 
Each $V_i^j$ has $t$ vertices denoted by $v^j_{i,1}, v^j_{i,2}, \ldots, v^j_{i,t}$, where each $v^j_{i,\gamma}$ ``corresponds'' to $v_{i,\gamma} \in V_i$.
We make $v^j_{i,1}v^j_{i,2} \ldots v^j_{i,t}$ a path with $t-1$ edges.

For each edge $e_j \in E(G)$, we insert an \emph{edge gadget} $\mathcal{G}(e_j)$ containing a pair of vertices $\{c_j,c_j'\}$ that we add to $\mathcal P$.
Gadget $\mathcal{G}(e_j)$ is attached to $V_i^j$ and $V_{i'}^j$, where $e_j \in E(V_i,V_{i'})$.
The edge gadget is designed in a way that the only legal sets that do \emph{not} resolve $\{c_j,c_j'\}$ are the ones that precisely pick $v^j_{i,\gamma} \in V_i^j$ and $v^j_{i',\gamma'} \in V_{i'}^j$ such that $e_j = v_{i,\gamma}v_{i',\gamma'}$.
We add a \emph{propagation gadget} $P_i^{j,j+1}$ between two consecutive copies $V_i^j$ and $V_i^{j+1}$, where the indices in the superscript are taken modulo $m$.
The role of the propagation gadget is to ensure that the choices in each $V_i^j$ ($j \in [m]$) corresponds to the same vertex in $V_i$. 

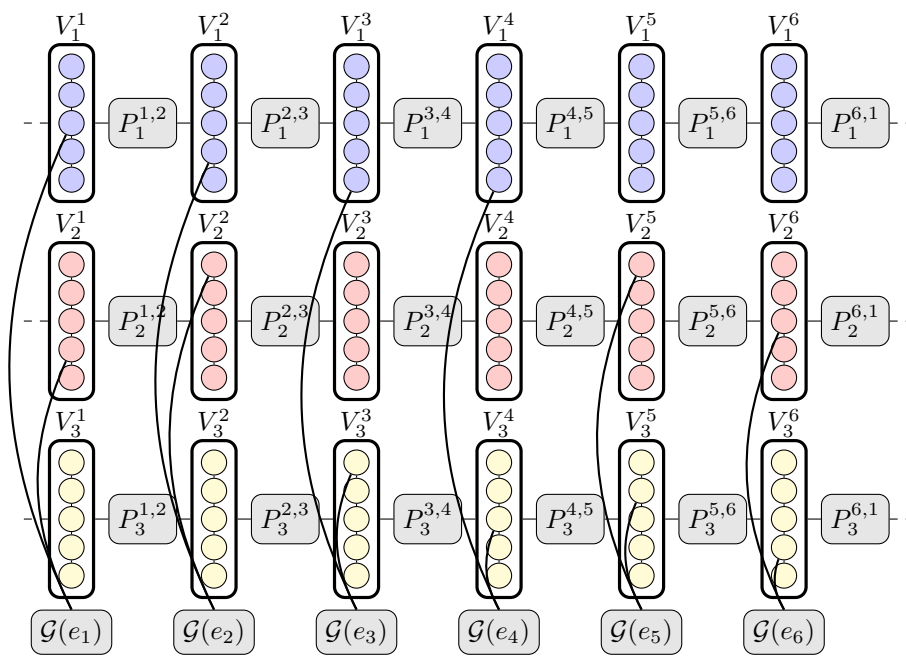
\begin{figure}[h!]
  \centering
  \begin{tikzpicture}[scale=0.75]
    \def\k{3}
    \def\t{5}
    \def\m{6}
    \def\s{0.5}
    \def\d{2.5}
    \def\x{0.5}

    \foreach \i/\c in {1/blue,2/red,3/yellow}{
      \foreach \j in {1,...,\m}{
        \foreach \h in {1,...,\t}{
          \node[draw,circle,fill=\c!20] (v\i\j\h) at (\d  * \j,\t * \k - \i * \k - \i * \x - \s * \h) {} ;
        }
        \node[draw, rectangle, minimum width=0.5cm, minimum height=1cm, rounded corners, very thick, fit=(v\i\j1) (v\i\j\t)] (v\i\j) {} ;
        \node (tv\i\j\t) at (\d * \j,\t * \k - \i * \k - \i * \x + \x - 0.3) {$V_\i^\j$} ; 
      }
    }
    \foreach \i in {1,...,\k}{
      \foreach \j in {1,...,\m}{
        \pgfmathtruncatemacro{\tm}{\t-1}
        \foreach \h in {1,...,\tm}{
          \pgfmathtruncatemacro{\hp}{\h+1}
          \draw (v\i\j\h) -- (v\i\j\hp) ;
        }
      }
    }
    \pgfmathsetmacro{\mm}{\m - 1}
    \foreach \i in {1,...,\k}{
      \foreach \j in {1,...,\mm}{
        \pgfmathtruncatemacro{\jp}{\j + 1}
        \node[draw, rectangle, rounded corners, fill=gray!20] (p\i\j) at (\d / 2 + \d * \j,\t * \k - \i * \k - \i * \x - \s * \t / 2 - \s / 2) {$P_\i^{\j,\jp}$} ;
        \draw (v\i\j) -- (p\i\j) -- (v\i\jp) ;
      }
    }
    \foreach \i in {1,...,\k}{
      \node[draw, rectangle, rounded corners, fill=gray!20] (p\i\m) at (\d / 2 + \d * \m,\t * \k - \i * \k - \i * \x - \s * \t / 2 - \s / 2) {$P_\i^{\m,1}$} ;
      \draw (p\i\m) -- (v\i\m) ;
      \draw[dashed] (p\i6.east)--++(0.5,0) ;
      \draw[dashed] (v\i1.west)--++(-0.5,0) ;
    }
    \foreach \j/\a/\b/\c/\e in {1/1/3/2/4,2/1/4/2/1,3/1/5/3/1,4/1/5/3/3,5/2/1/3/2,6/2/3/3/4}{
      \node[draw, rectangle, rounded corners, fill=gray!20] (e\j) at (\d * \j,1) {$\mathcal G(e_\j)$} ;
      \draw[thick] (e\j.north) to [bend left=25] (v\a\j\b) ;
      \draw[thick] (e\j.north) to [bend left=25] (v\c\j\e) ;
    }
  \end{tikzpicture}
  \caption{The overall picture with $k=3$ color classes, $t=5$ vertices per color class, $m=6$ edges, $e_1 = v_{1,3}v_{2,4}$, $e_2 = v_{1,4}v_{2,1}$, $e_3 = v_{1,5}v_{3,1}$, etc. The dashed lines on the left and right symbolize that the construction is cylindrical.}
  \label{fig:overall-picture}
\end{figure}

The intuitive idea of the reduction is the following.
We say that a vertex of $G'$ is \emph{selected} if it is put in the resolving set of $G'$, a tentative solution.
The propagation gadget $P_i^{j,j+1}$ ensures a consistent choice among the $m$~copies $V_i^1, \ldots, V_i^m$.  
The edge gadget ensures that the selected vertices of $G'$ correspond to an independent set in the original graph $G$.
If both the endpoints of an edge $e_j$ are selected, then the pair $\{c_j,c'_j\}$ is not resolved.
We now detail the construction. 
 
\subsubsection{Selector gadget}

For each $i \in [k]$ and $j \in [m]$, we add to $G'$ a path on $t-1$ edges $v^j_{i,1}, v^j_{i,2}, \ldots, v^j_{i,t}$, and denote this set of vertices by $V_i^j$.
Each $v^j_{i,\gamma}$ \emph{corresponds} to $v_{i,\gamma} \in V_i$.
We call \emph{$j$-th column} the set $\bigcup_{i \in [k]} V_i^j$, and \emph{$i$-th row}, the set $\bigcup_{j \in [m]} V_i^j$.
We set $\mathcal X := \{V_i^j\}_{i \in [k], j \in [m]}$. 
By definition of \gmd, a solution $S$ has to satisfy that for every $i \in [k], j \in [m]$, $|S \cap V_i^j|~=1$.
We call \emph{legal set} a set $S$ of size $k' = km$ that satisfies this property.
We call \emph{consistent set} a legal set $S$ which takes the ``same'' vertex in each row, that is, for every $i \in [k]$, for every pair $(v^j_{i,\gamma},v^{j'}_{i,\gamma'}) \in (S \cap V_i^j) \times (S \cap V_i^{j'})$, then $\gamma = \gamma'$. 

 \subsubsection{Edge gadget}

For each  edge $e_j=v_{i,\gamma} v_{i',\gamma'} \in E(G)$, 
 we add an edge gadget $\mathcal{G}(e_j)$ in the $j$-th column of~$G'$.
$\mathcal{G}(e_j)$ consists of a path on three vertices: $c_jg_jc'_j$.
The pair $\{c_j,c'_j\}$ is added to the list of critical pairs $\mathcal P$.
We link both $v_{i,\gamma}^j$ and $v_{i',\gamma'}^j$ to $g_j$ by a private path\footnote{We use the expression \emph{private path} to emphasize that the different sources get a pairwise internally vertex-disjoint path to the target.} of length $t+2$.
We link the at least two and at most four vertices $v_{i,\gamma-1}^j, v_{i,\gamma+1}^j, v_{i',\gamma'-1}^j, v_{i',\gamma'+1}^j$ (whenever they exist) to $c_j$ by a private path of length $t+2$.
This defines at most six paths from $V_i^j \cup V_{i'}^j$ to $\mathcal{G}(e_j)$.
Let us denote by $W_j$ the at most six endpoints of these paths in $V_i^j \cup V_{i'}^j$.
For each $v \in W_j$, we denote by $P(v,j)$ the path from $v$ to $\mathcal{G}(e_j)$.
We set $E_i^j := \bigcup_{v \in W_j \cap V_i^j} P(v,j)$ and $E_{i'}^j := \bigcup_{v \in W_j \cap V_{i'}^j} P(v,j)$.  
We denote by $X_j$ the set of the at most six neighbors of $W_j$ on the paths to $\mathcal{G}(e_j)$.
Henceforth we may refer to the vertices in some $X_j$ as the \emph{cyan vertices}.
Individually we denote by $e_{i,\gamma}^j$ the cyan vertex neighbor of $v_{i,\gamma}^j$ in $P(v_{i,\gamma}^j,j)$.
We observe that for fixed $i$ and $j$, $e_{i,\gamma}^j$ exists for at most three values of $\gamma$.
We add an edge between two cyan vertices if their respective neighbors in $V_i^j$ are also linked by an edge (or equivalently, if they have consecutive ``indices $\gamma$'').
These extra edges are useless in the \sgmd-instance, but will turn out useful in the \smd-instance.
See~\cref{fig:edgeGadget} for an illustration of the edge gadget.

 

\begin{figure}[h!]
  \centering
  \begin{tikzpicture}[scale=0.8]
     \def\k{3}
    \def\t{5}
    \def\s{0.5}
    \def\x{0.5}
    \def\d{3}
    
    \foreach \i/\c in {1/blue,2/red,3/yellow}{
      \foreach \h in {1,...,\t}{
        \node[draw,circle,fill=\c!20] (v\i\h) at (\d,\t * \k - \i * \k - \i * \x - \s * \h) {} ;
      }
        \node[draw, rectangle, minimum width=0.5cm, minimum height=1cm, rounded corners, very thick, fit=(v\i1) (v\i\t)] (v\i) {} ;
        \node (tv\i\t) at (\d,\t * \k - \i * \k - \i * \x + \x - 0.3) {$V_\i^4$} ; 
    }
    \foreach \i in {1,...,\k}{
      \pgfmathtruncatemacro{\tm}{\t-1}
      \foreach \h in {1,...,\tm}{
        \pgfmathtruncatemacro{\hp}{\h+1}
        \draw (v\i\h) -- (v\i\hp) ;
      }
    }
    \foreach \h in {1,...,\t}{
      \node (tv1\h) at (\d + 0.8,\t * \k - \k - \x - \s * \h) {\footnotesize{$v_{1,\h}^4$}} ;
    }
    \node (te14) at (\d - 1,\t * \k - \k - \x - 3 * \s) {\footnotesize{$e_{1,4}^4$}} ;
    \node (te15) at (\d - 1,\t * \k - \k - \x - 6 * \s) {\footnotesize{$e_{1,5}^4$}} ;

    \foreach \i in {1}{
      \foreach \h in {4,5}{
        \node[draw, circle, fill=cyan] (n\i\h) at (\d - 1,\t * \k - \i * \k - \i * \x - \s * \h) {} ;
      }
    }
    \foreach \i in {3}{
      \foreach \h in {2,3,4}{
        \node[draw, circle, fill=cyan] (n\i\h) at (\d - 1,\t * \k - \i * \k - \i * \x - \s * \h) {} ;
      }
    }
    \draw (v14) -- (n14) -- (n15) -- (v15) ;
    \draw (v32) -- (n32) -- (n33) -- (n34) -- (v34) ;
    \draw (n33) -- (v33) ;
    
    \node[draw, circle] (g4) at (-5,6.6) {} ;
    \node[draw, circle] (c4) at (-5.6,7.2) {} ;
    \node[draw, circle] (cp4) at (-5.6,6) {} ;

    \node (tg4) at (-5.45,6.6) {$g_4$} ;
    \node (tc4) at (-6.2,7.2) {$c_4$} ;
    \node (tcp4) at (-6.2,6) {$c'_4$} ;

    \draw (c4) -- (g4) -- (cp4) ;

    \node[draw, rectangle, thick, rounded corners, fit= (c4) (cp4) (g4)] (ge4) {} ;
    \node (ge4) at (-5.3,5.3) {$\mathcal G(e_4)$} ;

    \foreach \v in {n15,n33}{
      \draw (\v) edge[above] node {6} (g4) ;
    }
    \foreach \v in {n14,n32}{
      \draw (\v) edge[above] node {6} (c4) ;
    }
    \foreach \v in {n34}{
      \draw (\v) edge[below] node {6} (c4) ;
    }
  \end{tikzpicture}
  \caption{The edge gadget $\mathcal G(e_4)$ with $e_4 = v_{1,5}v_{3,3}$.
    Weighted edges are short-hands for subdivisions of the corresponding length.
    The edges between the cyan vertices will not be useful for the \sgmd-instance, but will later simplify the construction of the \smd-instance.}
  \label{fig:edgeGadget}
\end{figure}
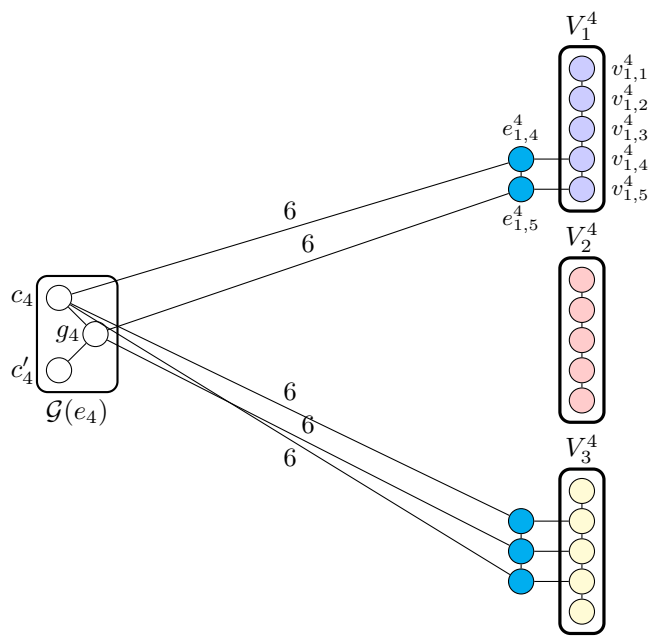

 The rest of the construction will preserve that for every $v \in (V_i^j \cup V_{i'}^j) \setminus \{v_{i,\gamma}^j, v_{i',\gamma'}^j\}$, $\dist(v,c'_j)=\dist(v,c_j)+2$, and for each $v \in \{v_{i,\gamma}^j, v_{i',\gamma'}^j\}$, $\dist(v,c_j) = \dist(v,g_j)+1 = \dist(v,c'_j)$.
 In other words, the only two vertices of $V_i^j \cup V_{i'}^j$ not resolving the critical pair $\{c_j,c'_j\}$ are $v_{i,\gamma}^j$ and $v_{i',\gamma'}^j$, corresponding to the endpoints of $e_j$.

  \subsubsection{Propagation gadget}
  
Between each pair $(V_i^j,V_i^{j+1})$, where $j+1$ is taken modulo $m$, we insert an identical copy of the propagation gadget, and we denote it by $P_i^{j,j+1}$.
It ensures that if the vertex $v_{i,\gamma}^j$ is in a legal resolving set $S$, then the vertex of $S \cap V_i^{j+1}$ should be some $v_{i,\gamma'}^{j+1}$ with $\gamma \leqslant \gamma'$.
The cylindricity of the construction and the fact that exactly one vertex of $V_i^j$ is selected, will therefore impose that the set $S$ is consistent.

$P_{i,}^{j,j+1}$ comprises four vertices \swge, \sege, \nwge, \nege, called \emph{gates}, and a set $A_i^j$ of $2t$ vertices $a_{i,1}^j, \ldots, a_{i,t}^j, \alpha_{i,1}^j, \ldots, \alpha_{i,t}^j$.
We make both $a_{i,1}^ja_{i,2}^j \ldots a_{i,t}^j$ and $\alpha_{i,1}^j\alpha_{i,2}^j \ldots \alpha_{i,t}^j$ a path with $t-1$ edges.
For each $\gamma \in [t]$, we add the pair $\{a_{i,\gamma}^j,\alpha_{i,\gamma}^j\}$ to the set of critical pairs $\mathcal P$.
Removing the gates disconnects $A_i^j$ from the rest of the graph.

We now describe how we link the gates to $V_i^j$, $V_i^{j+1}$, and $A_i^j$.
We link $v_{i,1}^j$ (the ``top'' vertex of $V_i^j$) to \swge and $v_{i,t}^j$ (the ``bottom'' vertex of $V_i^j$) to \nwge both by a path of length~$2$.
We also link $v_{i,1}^{j+1}$ to \sege by a path of length~$3$, and $v_{i,t}^{j+1}$ to \nege by a path of length~$2$.
Then we make \nwge adjacent to $a_{i,1}^j$ and $\alpha_{i,1}^j$, while we make \nege adjacent to $\alpha_{i,1}^j$ only.
We make \sege adjacent to $a_{i,t}^j$ and $\alpha_{i,t}^j$, while we make \swge adjacent to $a_{i,t}^j$ only.
Finally, we add an edge between \nege and \nwge, and between \swge and \sege.
See \cref{fig:propagationExplNew} for an illustration of the propagation gadget $P_i^{j,j+1}$ with $t=5$.

 \begin{figure}[h!]
   \centering
   \begin{tikzpicture}[scale=0.92]
     \def\t{5}
     \def\d{13}
     \def\cd{1.4}
     \foreach \j/\nj in {0/j,1/j+1}{ 
       \foreach \h in {1,...,\t}{
         \node[draw,circle,fill=yellow!20] (v\j\h) at (\d * \j,\t - \h) {} ;
         \node (tv\j\h) at (\d * \j - 0.6 + \j * 1.35,\t - \h) {$v_{i,\h}^{\nj}$} ;
       }
       \node[draw, rectangle, minimum width=0.5cm, minimum height=1cm, rounded corners, very thick, fit=(v\j1) (v\j\t)] (V\j) {} ;
     }
     \node (tV0) at (0,\t - 0.3) {$V_i^j$} ;
     \node (tV1) at (\d,\t - 0.3) {$V_i^{j+1}$} ;

     \node[fill=blue,circle,inner sep=-0.08cm] at (0,\t - 2) {} ;
     \node[fill=red,circle,inner sep=-0.08cm] at (\d,\t - 2) {} ;

     \pgfmathsetmacro{\ce}{\d / 2}
     \foreach \a/\b/\c in {\ce-2/-1.5/sw,\ce+2/-1.5/se,\ce-2/\t+0.5/nw,\ce+2/\t+0.5/ne}{
       \node[draw, rectangle, minimum height=0.5cm, minimum width=1cm] (\c) at (\a, \b) {$\text{\c}_i^j$} ;
     }

     \foreach \h/\dv/\dw/\c in {1/6/7/red,2/7/8/red,3/6/7/blue,4/5/6/blue,5/4/5/blue}{
       \node[fill=\c!20,draw,circle,inner sep=0.08cm] (a\h) at (\d / 2 - \cd,\t - \h) {\footnotesize{\textcolor{blue}{$\dv$}|\textcolor{red}{$\dw$}}} ;
     }
     \foreach \h/\dv/\dw/\c in {1/6/6/red,2/7/7/red,3/7/7/blue,4/6/6/blue,5/5/5/blue}{
       \node[fill=\c!20,draw,circle,inner sep=0.08cm] (alpha\h) at (\d / 2 + \cd,\t - \h) {\footnotesize{\textcolor{blue}{$\dv$}|\textcolor{red}{$\dw$}}} ;
     }
     \foreach \h in {1,...,\t}{
       \node[draw,thin,dashed,rectangle,rounded corners, fit=(a\h) (alpha\h), inner sep=0.03cm] (cp\h) {} ;
     }
     \foreach \h in {1,...,\t}{
       \node (ta\h) at (\d / 2 - \cd - 1,\t - \h) {$a_{i,\h}^j$} ;
       \node (talpha\h) at (\d / 2 + \cd + 1,\t - \h) {$\alpha_{i,\h}^j$} ;
     }
     \foreach \l in {a,alpha,v0,v1}{
       \draw[thick] (\l1) -- (\l2) -- (\l3) -- (\l4) -- (\l5) ;
     }
     \draw (a\t) -- (se) -- (alpha\t) ;
     \draw (a\t) -- (sw) -- (se) ;
     \draw (a1) -- (nw) -- (ne) -- (alpha1) ;
     \draw (a1) -- (nw) -- (alpha1) ;
     \draw (ne) -- (alpha1) ;
     \draw (v0\t) edge[bend left=20] node[above] {$2$} (nw) ;
     \draw (se) edge[bend right=20] node[above] {$3$} (v11) ;
     \draw (v01) edge[bend right=20] node[above] {$2$} (sw) ;
     \draw (v1\t) edge[bend right=20] node[above] {$2$} (ne) ;


   \end{tikzpicture}
   \caption{The propagation gadget $P_i^{j,j+1}$.
     The critical pairs $\{a_{i,\gamma}^j, \alpha_{i,\gamma}^j\}$ are surrounded by thin dashed lines.
     The blue (resp. red) integer on a vertex of $A_i^j$ is its distance to the blue (resp. red) vertex in $V_i^j$ (resp. $V_i^{j+1}$). 
     Note that the blue vertex distinguishes the critical pairs below it, while the red vertex distinguishes critical pairs at its level or above.}
   \label{fig:propagationExplNew}
 \end{figure}

 Let us motivate the gadget $P_i^{j,j+1}$.
 One can observe that the gates \nege and \swge resolve the critical pairs of the propagation gadget, while the gates \nwge and \sege do not.
 Consider that the vertex added to the resolving set in $V_i^j$ is $v_{i,\gamma}^j$.
 Its shortest paths to critical pairs \emph{below} it (that is, with index $\gamma' > \gamma$) go through the gate \swge, whereas its shortest paths to critical pairs at its level or above (that is, with index $\gamma' \leqslant \gamma$) go through the gate \nwge.
Thus $v_{i,\gamma}^j$ only resolves the critical pairs $\{a_{i,\gamma'}^j,\alpha_{i,\gamma'}\}$ with $\gamma' > \gamma$.
On the contrary, the vertex of the resolving set in $V_i^{j+1}$ only resolves the critical pairs $\{a_{i,\gamma'}^j,\alpha_{i,\gamma'}^j\}$ at its level or above.
This will force that its level is $\gamma$ or below.
Hence the vertices of the resolving in $V_i^j$ and $V_i^{j+1}$ should be such that $\gamma' \geqslant \gamma$. 
Since there is also a propagation gadget between $V_i^m$ and $V_i^1$, this circular chain of inequalities forces a global equality.

 \subsubsection{Wrapping up}
 We put the pieces together as described in the previous subsections.
 At this point, it is convenient to give names to the neighbors of $V_i^j$ in the propagation gadgets $P_i^{j-1,j}$ and $P_i^{j,j+1}$.
 We may refer to them as \emph{blue vertices} (as they appear in \cref{fig:propagationAddons}).
 We denote by $\tle$ the neighbor of $v_{i,1}^j$ in $P_i^{j-1,j}$, $\tri$, the neighbor of $v_{i,1}^j$ in $P_i^{j,j+1}$, $\ble$, the neighbor of $v_{i,t}^j$ in $P_i^{j-1,j}$, and $\bri$, the neighbor of $v_{i,t}^j$ in $P_i^{j,j+1}$.
 We add the following edges and paths.
 
 For any pair $i, j$ such that $e_j$ has an endpoint in $V_i$, the vertices $\tle, \tri, \ble, \bri$ are linked to $g_j$ by a private path of length the distance of their unique neighbor in $V_i^j$ to $c_j$.
 We add an edge between $\text{se}_i^j$ and $\text{se}_i^{j+1}$, and between $\text{nw}_i^j$ and $\text{nw}_i^{j+1}$ (where $j+1$ is modulo $m$).
 Finally, for every $e_j \in E(V_i,V_{i'})$, we add four paths between $\text{se}_i^j, \text{se}_{i'}^j, \text{nw}_i^j, \text{nw}_{i'}^j$ and $g_j \in \mathcal G(e_j)$.
 More precisely, for each $i'' \in \{i,i'\}$, we add a path from $g_j$ to $\text{se}_{i''}^j$ of length $\dist(g_j,\text{sw}_{i''}^j)-4$, and a path from $g_j$ to $\text{nw}_{i''}^j$ of length $\dist(g_j,\text{nw}_{i''}^j)-4$.
 These distances are taken in the graph before we introduced the new paths, and one can observe that the length of these paths is at least $t$.
This finishes the construction.
 
We recall that, by a slight abuse of language, a \emph{resolving set} in the context of \gmd is a set which resolves all the critical pairs of $\mathcal P$.
In particular, it is not necessarily a resolving set in the sense of \md.
With that terminology, a solution for \gmd is a legal resolving set.

\subsection{Correctness of the reduction}

We now check that the reduction is correct.
We start with the following technical lemma.
If a set $X$ contains a pair that no vertex of $N(X)$ (that is $N[X] \setminus X$) resolves, then no vertex outside $X$ can distinguish the pair.
\begin{lemma}\label{lem:surrounded}
  Let $X$ be a subset of vertices, and $a, b \in X$ be two distinct vertices.
  If for every vertex $v \in N(X)$, $\dist(v,a) = \dist(v,b)$, then for every vertex $v \notin X$, $\dist(v,a) = \dist(v,b)$.
\end{lemma}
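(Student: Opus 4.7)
The plan is to show that any shortest path from an outside vertex $v$ to $a$ (or to $b$) must be ``witnessed'' by a vertex of $N(X)$, at which point the hypothesis kicks in. Concretely, I would fix an arbitrary $v \notin X$ and pick a shortest $v$--$a$ path $P = u_0 u_1 \ldots u_\ell$ with $u_0 = v$ and $u_\ell = a$. Since $v \notin X$ and $a \in X$, there is a smallest index $i \geqslant 1$ with $u_i \in X$. Let $w := u_{i-1}$; by the minimality of $i$ we have $w \notin X$, and $w$ is adjacent to $u_i \in X$, so $w \in N(X)$.

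Next I would use the fact that $w$ lies on a shortest $v$--$a$ path, so $\dist(v,a) = \dist(v,w) + \dist(w,a)$. Applying the hypothesis to $w \in N(X)$ gives $\dist(w,a) = \dist(w,b)$, and then the triangle inequality yields
\[
\dist(v,a) \;=\; \dist(v,w) + \dist(w,b) \;\geqslant\; \dist(v,b).
\]
Running exactly the same argument starting from a shortest $v$--$b$ path produces the reverse inequality $\dist(v,b) \geqslant \dist(v,a)$, and equality follows.

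The argument is short and essentially forced; the only thing to be careful about is the degenerate possibility $v \in N(X)$ (including the case where $v$ is itself adjacent to $a$), but this is fine, since then either the conclusion is the hypothesis verbatim, or the construction above still produces a valid $w \in N(X)$ (possibly $w = v$). I do not foresee a real obstacle here: the lemma is just a convenient reformulation of the observation that distances from any outside vertex factor through the ``boundary'' $N(X)$, and the proof amounts to one application of the triangle inequality in each direction.
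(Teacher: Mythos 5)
Your proof is correct and follows essentially the same route as the paper's: locate a vertex of $N(X)$ on a shortest path from $v$ into $X$, apply the hypothesis there, and close with the triangle inequality in both directions. Your explicit extraction of $w = u_{i-1}$ is if anything slightly more careful than the paper's "the path has to go through $N(X)$", and your handling of the degenerate case $w = v$ is fine.
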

\begin{proof}
  Let $v$ be a vertex outside of $X$.
  We further assume that $v$ is not in $N(X)$, otherwise we can already conclude that it does not distinguish $\{a,b\}$.
  A shortest path from $v$ to $a$, has to go through $N(X)$.
  Let $w_a$ be the first vertex of $N(X)$ met in this shortest path from $v$ to $a$.
  Similarly, let $w_b$ be the first vertex of $N(X)$ met in a shortest path from $v$ to $b$.
  Since $w_a, w_b \in N(X)$, they satisfy $\dist(w_a,a) = \dist(w_a,b)$ and $\dist(w_b,a) = \dist(w_b,b)$.
  Then, $\dist(v,a) \leqslant \dist(v,w_b)+\dist(w_b,a) = \dist(v,w_b) + \dist(w_b,b) = \dist(v,b)$, and $\dist(v,b) \leqslant \dist(v,w_a)+\dist(w_a,b) = \dist(v,w_a) + \dist(w_a,a) = \dist(v,a)$.
  Thus $\dist(v,a) = \dist(v,b)$.
\end{proof}

We use the previous lemma to show that every vertex of a $V_i^j$ only resolves critical pairs in gadgets it is attached to.
This will be useful in the two subsequent lemmas.

\begin{lemma}\label{lem:onlyLocal}
  For any $i \in [k]$, $j \in [m]$, and $v \in V_i^j$, $v$ does not resolve any critical pair outside of $P_i^{j-1,j}$, $P_i^{j,j+1}$ (where indices in the superscript are taken modulo $m$), and $\{c_j,c'_j\}$.
  Furthermore, if $e_j \in E(G)$ has no endpoint in $V_i \subseteq V(G)$, then $v$ does not resolve $\{c_j,c'_j\}$.
\end{lemma}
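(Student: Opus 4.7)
The plan is to invoke \cref{lem:surrounded} once for every critical pair we wish to prove is not resolved by $v$. Recall that for each pair $\{p,q\}$ of interest, it suffices to exhibit a set $X$ with $\{p,q\}\subseteq X$ and $v\notin X$, such that every vertex in $N(X)$ is equidistant from $p$ and $q$. Two families of critical pairs arise: propagation pairs $\{a_{i',\gamma}^{j'},\alpha_{i',\gamma}^{j'}\}$ inside $P_{i'}^{j',j'+1}$, and edge pairs $\{c_{j'},c'_{j'}\}$ inside $\mathcal G(e_{j'})$.

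For a propagation pair with $(i',j')\notin\{(i,j-1),(i,j)\}$, I would let $X$ consist of the propagation gadget $P_{i'}^{j',j'+1}$, the internal vertices of its four gate-to-selector subdivision paths, the two attached selectors $V_{i'}^{j'}$ and $V_{i'}^{j'+1}$, and every subdivision path emanating from those selectors into incident edge gadgets or to the neighboring propagation gadgets $P_{i'}^{j'-1,j'}$ and $P_{i'}^{j'+1,j'+2}$. Since $(i,j)\neq(i',j'),(i',j'+1)$, the vertex $v$ lies outside $X$. Every $u\in N(X)$ then lies either on a subdivision path in a \emph{different} row entering an adjacent edge gadget, or on a wrap-up connection leading to a more distant propagation gadget in the same row. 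In both cases the asymmetric accesses to the pair (through the gates $\text{sw}_{i'}^{j'}$ and $\text{ne}_{i'}^{j'}$, which have unique external connections already absorbed by $X$) are strictly longer than the accesses through the symmetric gates $\text{nw}_{i'}^{j'}$ and $\text{se}_{i'}^{j'}$. Using the key identities $\dist(\text{nw}_{i'}^{j'},a_{i',\gamma}^{j'})=\dist(\text{nw}_{i'}^{j'},\alpha_{i',\gamma}^{j'})=\gamma$ and $\dist(\text{se}_{i'}^{j'},a_{i',\gamma}^{j'})=\dist(\text{se}_{i'}^{j'},\alpha_{i',\gamma}^{j'})=t-\gamma+1$ noted immediately after the construction of the propagation gadget, one obtains $\dist(u,a_{i',\gamma}^{j'})=\dist(u,\alpha_{i',\gamma}^{j'})$.

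For an edge pair $\{c_{j'},c'_{j'}\}$ with either $j'\neq j$, or $j'=j$ and $e_j$ not incident to $V_i$, I would take $X$ to contain the edge gadget $\mathcal G(e_{j'})$, together with the two selectors $V_{i''}^{j'},V_{i'''}^{j'}$ it is attached to (writing $e_{j'}\in E(V_{i''},V_{i'''})$), all subdivision paths linking them to $c_{j'}$ and $g_{j'}$, and the wrap-up paths joining $g_{j'}$ to the four gates $\text{se}_{i''}^{j'},\text{se}_{i'''}^{j'},\text{nw}_{i''}^{j'},\text{nw}_{i'''}^{j'}$. Under the case hypothesis, $v$ lies in neither of the two enveloped selectors, so $v\notin X$. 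For every $u\in N(X)$, the direct routes entering $\mathcal G(e_{j'})$ at $c_{j'}$ through one of the vertices $v_{i'',\gamma-1}^{j'},\dots,v_{i''',\gamma'+1}^{j'}$ are swallowed by $X$, so every shortest path from $u$ to $c_{j'}$ or to $c'_{j'}$ must pass through the hub $g_{j'}$. Since $\dist(g_{j'},c_{j'})=\dist(g_{j'},c'_{j'})=1$, this yields $\dist(u,c_{j'})=\dist(u,g_{j'})+1=\dist(u,c'_{j'})$.

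The principal obstacle is to certify, in both families, that the shortest paths from every $u\in N(X)$ genuinely go through the symmetric gates (or through $g_{j'}$). This requires carefully bookkeeping the wrap-up identifications $\text{se}_{i'}^{j'}-\text{se}_{i'}^{j'+1}$ and $\text{nw}_{i'}^{j'}-\text{nw}_{i'}^{j'+1}$ and the $\text{se}/\text{nw}$-to-$g_{j'}$ paths, and choosing $X$ just large enough to absorb every asymmetric approach while still excluding $v$. Once this routing is verified, \cref{lem:surrounded} immediately delivers both assertions of the lemma.
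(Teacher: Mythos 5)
Your overall strategy is the same as the paper's: for each critical pair, exhibit a region $X$ containing the pair but not $v$, check that every vertex of $N(X)$ is equidistant from the two members of the pair, and invoke \cref{lem:surrounded}. The regions you describe are essentially the connected components the paper obtains by deleting the cutsets $\{\text{nw}_{i'}^{j'-1},\text{se}_{i'}^{j'-1},\text{nw}_{i'}^{j'+1},\text{se}_{i'}^{j'+1}\}\cup C_e$ (for propagation pairs) and the eight blue vertices together with the four gates $\text{nw}_{i'}^{j},\text{se}_{i'}^{j},\text{nw}_{i''}^{j},\text{se}_{i''}^{j}$ (for edge pairs).

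There is, however, a genuine gap: the step you defer as ``the principal obstacle'' is the entire content of the proof, and the mechanism you offer for it is not valid. You argue that because the direct routes to $c_{j'}$ (or the asymmetric accesses to $A_{i'}^{j'}$) are ``swallowed/absorbed by $X$'', every shortest path from $u\in N(X)$ must pass through $g_{j'}$ (or through the symmetric gates). But \cref{lem:surrounded} requires $\dist(u,a)=\dist(u,b)$ with distances taken in the \emph{whole} graph; placing a route inside $X$ does not make it unavailable to a vertex of $N(X)$, which is adjacent to $X$ and can enter it at cost one. What actually forces the equalities is the deliberate calibration of path lengths in the construction: the blue vertices are joined to $g_j$ by private paths whose length equals the distance from their selector-neighbour to $c_j$, and $P(g_j,\text{nw}_i^j)$, $P(g_j,\text{se}_i^j)$ have length exactly four less than the ambient distance, so each boundary vertex $u$ has \emph{some} shortest path to $c_j$ through $g_j$, whence $\dist(u,c_j)=\dist(u,g_j)+1=\dist(u,c'_j)$; similarly $\text{nw}_{i'}^{j'\pm1}$ and $\text{se}_{i'}^{j'\pm1}$ reach both members of $\{a_{i',\gamma}^{j'},\alpha_{i',\gamma}^{j'}\}$ at distance $\gamma+1$, respectively $t-\gamma+2$, through $\text{nw}_{i'}^{j'}$, respectively $\text{se}_{i'}^{j'}$. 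You also omit that, in the propagation case, $N(X)$ contains $g_{j'},c_{j'}$ (and $g_{j'+1},c_{j'+1}$) whenever $e_{j'}$ (resp.\ $e_{j'+1}$) touches $V_{i'}$, and these vertices must be checked as well. Until these distance equalities are derived from the stated path lengths rather than from membership in $X$, the argument is incomplete.
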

\begin{proof}
  We first show that $v \in V_i^j$ does not resolve any critical pair in propagation gadgets that are not $P_i^{j-1,j}$ and $P_i^{j,j+1}$.
  Let $\{a_{i',\gamma}^{j'},\alpha_{i',\gamma}^{j'}\}$ be a critical pair in a propagation gadget different from $P_i^{j-1,j}$ and $P_i^{j,j+1}$.
  Let $X$ be the connected component containing $P_{i'}^{j',j'+1}$ of $G' - (\{\text{nw}_{i'}^{j'-1}, \text{se}_{i'}^{j'-1}, \text{nw}_{i'}^{j'+1}, \text{se}_{i'}^{j'+1}\} \cup C_e)$, where $C_e$ comprises $\{c_j',g_j'\}$ if $e_{j'}$ has an endpoint in $V_{i'}$ and $\{c_{j'+1},g_{j'+1}\}$ if $e_{j'+1}$ has an endpoint in $V_{i'}$.
  Thus $C_e$ has size $0$, $2$, or $4$.
  One can observe that $N(X) = \{\text{nw}_{i'}^{j'-1}, \text{se}_{i'}^{j'-1}, \text{nw}_{i'}^{j'+1}, \text{se}_{i'}^{j'+1}\} \cup C_e$, that $V_{i'}^{j'} \cup V_{i'}^{j'+1} \subseteq X$, and that no ``other $V_i^j$'' intersects $X$.
  In particular $V_i^j$ is fully contained in $G - X$.
  We now check that no vertex of $N(X)$ resolves the pair $\{a_{i',\gamma}^{j'},\alpha_{i',\gamma}^{j'}\}$ (which is inside $X$).
  For each $u \in \{\text{nw}_{i'}^{j'-1}, \text{nw}_{i'}^{j'+1}\}$, it holds that $\dist(u,a_{i',\gamma}^{j'}) = \gamma+1 = \dist(u,a_{i',\gamma}^{j'})$ (the shortest paths go through $\text{nw}_{i'}^{j'}$), while for each $u \in \{\text{se}_{i'}^{j'-1}, \text{se}_{i'}^{j'+1}$, it holds that $\dist(u,a_{i',\gamma}^{j'}) = t - \gamma + 2 = \dist(u,a_{i',\gamma}^{j'})$ (the shortest paths go through $\text{se}_{i'}^{j'}$).
  If they are part of $C_e$, $g_{j'}$ and $c_{j'}$ also do not resolve $\{a_{i',\gamma}^{j'},\alpha_{i',\gamma}^{j'}\}$, the shortest paths going through the gates $\text{nw}_{i'}^{j'}$ or $\text{se}_{i'}^{j'}$, and respectively $g_j$ and then the gates $\text{nw}_{i'}^{j'}$ or $\text{se}_{i'}^{j'}$.
  For the same reason, $g_{j'+1}$ and $c_{j'+1}$ do not resolve $\{a_{i',\gamma}^{j'},\alpha_{i',\gamma}^{j'}\}$.
  Then we conclude by \cref{lem:surrounded} that no vertex of $V_i^j$ (in particular $v$) resolves $\{a_{i',\gamma}^{j'},\alpha_{i',\gamma}^{j'}\}$, or any critical pair in $P_{i'}^{j'}$.
    
  Let us now show that the pair $\{c_j,c'_j\}$ is not resolved by any vertex of $\cup \mathcal X \setminus (V_{i'}^j \cup V_{i''}^j)$ such that $e_j \in E(V_{i'},V_{i''})$.
  Let $Y := \{\text{tl}_{i'}^j, \text{tr}_{i'}^j, \text{bl}_{i'}^j, \text{br}_{i'}^j, \text{tl}_{i''}^j, \text{tr}_{i''}^j, \text{bl}_{i''}^j, \text{br}_{i''}^j, \text{nw}_{i'}^j, \text{se}_{i'}^j, \text{nw}_{i''}^j, \text{se}_{i''}^j\}$, and $X$ be the connected component containing $g_j$ in $G' - Y$.
  Again one can observe that $N(X) = Y$, $X$ contains $V_{i'}^j \cup V_{i''}^j$ but does not intersect any ``other $V_i^j$''.
  We therefore show that no vertex of $Y$ resolves $\{c_j,c'_j\}$, and conclude with \cref{lem:surrounded}.
  All the vertices of $Y$ have a private path to $g_j$ whose length is such that they have a shortest path to $c_j$ going through~$g_j$.
  Therefore $\forall u \in Y$, $\dist(u,c_j) = \dist(u,g_j)+1 = \dist(u,c'_j)$. 
\end{proof}

The two following lemmas show the equivalences relative to the expected use of the edge and propagation gadgets.
They will be useful in \cref{subsubsec:misToLRS,subsubsec:LRSToMis}.

\begin{lemma}\label{lem:edge-check}
  A legal set $S$ resolves the critical pair $\{c_j,c'_j\}$ with $e_j = v_{i,\gamma}v_{i',\gamma'}$ if and only if the vertex $v_{i,\gamma_i}^j$ in $V_i^j \cap S$ and the vertex $v_{i',\gamma_{i'}}^j$ in $V_{i'}^j \cap S$ satisfy $(\gamma,\gamma') \neq (\gamma_i,\gamma_{i'})$.
\end{lemma}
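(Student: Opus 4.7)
The plan is to combine \cref{lem:onlyLocal} with a direct distance computation inside the edge gadget. Since $S$ is legal, it contains exactly one vertex per set of $\mathcal X$, namely $v_{i,\gamma_i}^j \in V_i^j$ and $v_{i',\gamma_{i'}}^j \in V_{i'}^j$ together with one vertex in each other $V_{i''}^{j''}$. By \cref{lem:onlyLocal}, any such vertex outside column $j$ cannot resolve $\{c_j,c'_j\}$ (the pair lies neither in a propagation gadget adjacent to column $j'' \ne j$ nor in the edge gadget $\mathcal G(e_{j''})$); and by the ``furthermore'' clause, vertices of $V_{i''}^j$ with $i'' \notin \{i,i'\}$ also fail to resolve it since $e_j$ has no endpoint in $V_{i''}$. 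Hence $S$ resolves $\{c_j,c'_j\}$ iff at least one of $v_{i,\gamma_i}^j, v_{i',\gamma_{i'}}^j$ does.

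Next I would compute, for each $v \in V_i^j \cup V_{i'}^j$, the distances to $c_j$ and to $c'_j$. The relevant attachments are the length-$(t+2)$ paths from $v_{i,\gamma}^j$ and $v_{i',\gamma'}^j$ to $g_j$, and the length-$(t+2)$ paths from $v_{i,\gamma \pm 1}^j, v_{i',\gamma' \pm 1}^j$ (when they exist) to $c_j$. For $v = v_{i,\gamma}^j$ one checks $\dist(v,c_j) = \dist(v,c'_j) = t+3$: the shortest route traverses the length-$(t+2)$ path to $g_j$ and then one more edge, while detours via a neighbor $v_{i,\gamma \pm 1}^j$ cost $t+3$ to $c_j$ and more to $c'_j$. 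Symmetrically for $v_{i',\gamma'}^j$, so neither of these two vertices resolves the pair. For any other $v = v_{i,\ell}^j$, the shortest path to $c_j$ enters the gadget through whichever of $v_{i,\gamma-1}^j, v_{i,\gamma+1}^j$ is closer to $v$ along $v_{i,1}^j \ldots v_{i,t}^j$, whereas every shortest path to $c'_j$ must additionally traverse one of the two edges of the subpath $c_j g_j c'_j$; this gives $\dist(v,c'_j) = \dist(v,c_j) + 2$, so $v$ resolves $\{c_j,c'_j\}$. The same case analysis applies to $V_{i'}^j$.

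Finally I must verify that no auxiliary attachment creates a shortcut disturbing the above distance profile. The paths from the blue vertices $\tle, \tri, \ble, \bri$ to $g_j$ are built with length exactly equal to the pre-existing distance from their unique $V_i^j$-neighbor to $c_j$, and the gate-to-$g_j$ paths have lengths $\dist(g_j,\text{sw}_{i''}^j)-4$ and $\dist(g_j,\text{nw}_{i''}^j)-4$. These calibrated lengths ensure that every detour through a blue vertex or through $\text{sw}, \text{se}, \text{nw}, \text{ne}$ is no shorter than the corresponding direct route through the edge gadget, and extending such a detour to $c_j$ or to $c'_j$ adds one final edge symmetrically. Consequently the equalities $\dist(v_{i,\gamma}^j,c_j) = \dist(v_{i,\gamma}^j,c'_j) = t+3$ and the $\pm 2$ gaps for the other $v_{i,\ell}^j$ are preserved. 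Combining the three steps, $S$ fails to resolve $\{c_j,c'_j\}$ iff $(\gamma_i,\gamma_{i'}) = (\gamma,\gamma')$. The main obstacle is the last step: one has to check, for each type of auxiliary path, that its calibrated length is large enough so that no alternative route equalizes $\dist(v,c_j)$ and $\dist(v,c'_j)$ for some would-be resolving $v$, nor breaks the equality for $v_{i,\gamma}^j$ and $v_{i',\gamma'}^j$.
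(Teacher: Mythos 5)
Your proposal is correct and follows essentially the same route as the paper: invoke \cref{lem:onlyLocal} to reduce to the two selected vertices of $V_i^j$ and $V_{i'}^j$, then verify by direct distance computation in the edge gadget that $v_{i,\gamma}^j$ and $v_{i',\gamma'}^j$ are the unique vertices of their respective sets not resolving $\{c_j,c'_j\}$ (the paper phrases your ``no shortcut from auxiliary attachments'' step as an invariant stated at construction time, namely that $\dist(v,c'_j)=\dist(v,c_j)+2$ for all other $v$ and $\dist(v,c_j)=\dist(v,g_j)+1=\dist(v,c'_j)$ for these two).
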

\begin{proof}
  By \cref{lem:onlyLocal}, no vertex of $S \setminus \{v_{i,\gamma_i}^j,v_{i',\gamma_{i'}}^j\}$ resolves $\{c_j,c'_j\}$.
  By construction of $G'$, $v_{i,\gamma}^j$ (resp. $v_{i',\gamma'}^j$) is the only vertex of $V_i^j$ (resp. $V_{i'}^j$) that does not resolve $\{c_j,c'_j\}$.
  Indeed the shortest paths of $v_{i,\gamma''}^j$, for $\gamma'' \geqslant \gamma+1$, to $\{c_j,c'_j\}$ go through $v_{i,\gamma+1}^j$ which resolves the pair.
  Note that a shortest path between $V_i^j$ and $V_{i'}^j$ has length at least $2t+4$, so a shortest path from $v_{i,\gamma''}^j$ to $\{c_j,c'_j\}$ cannot go through $V_{i'}^j$.
  Similarly the shortest paths of $v_{i,\gamma''}^j$, for $\gamma'' \leqslant \gamma-1$, to $\{c_j,c'_j\}$ go through $v_{i,\gamma-1}^j$ which also resolves the pair.
  Thus only $v_{i,\gamma}^j$ (resp.~$v_{i',\gamma'}^j$), whose shortest paths to $\{c_j,c'_j\}$ go via $g_j$, does not resolve this pair among $V_i^j$ (resp. $V_{i'}^j$).
  Hence, the critical pair $\{c_j,c'_j\}$ is not resolved by $S$ if and only if $v_{i,\gamma_i}^j = v_{i,\gamma}^j$ and $v_{i',\gamma_{i'}}^j = v_{i',\gamma'}^j$.
\end{proof}

\begin{lemma}\label{lem:propagation-check}
  A legal set $S$ resolves all the critical pairs of $P_i^{j,j+1}$ if and only if the vertex $v_{i,\gamma}^j$ in $V_i^j \cap S$ and the vertex $v_{i,\gamma'}^{j+1}$ in $V_i^{j+1} \cap S$ satisfy $\gamma \leqslant \gamma'$.
\end{lemma}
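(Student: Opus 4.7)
The plan is to characterize which vertices of a legal set $S$ can resolve each critical pair $\{a_{i,\gamma''}^j,\alpha_{i,\gamma''}^j\}$ of $P_i^{j,j+1}$ and then combine these facts. By \cref{lem:onlyLocal}, the only vertices of $S$ that may possibly resolve a critical pair inside $P_i^{j,j+1}$ are the two vertices $v_{i,\gamma}^j\in S\cap V_i^j$ and $v_{i,\gamma'}^{j+1}\in S\cap V_i^{j+1}$, so the question reduces to their individual resolving behavior.

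Observe that the four gates $\text{sw}^j_i,\text{se}^j_i,\text{nw}^j_i,\text{ne}^j_i$ form a cut separating $A_i^j$ from the rest of $G'$, so every shortest path from $v_{i,\gamma}^j$ or $v_{i,\gamma'}^{j+1}$ into $A_i^j$ must enter through one of these four gates. I would then compute the gate-to-pair distances from $v_{i,\gamma}^j$. Going via $\text{nw}^j_i$ (through the bottom of $V_i^j$) gives $(t-\gamma)+2+1+(\gamma''-1)=t-\gamma+\gamma''+2$ to \emph{both} $a_{i,\gamma''}^j$ and $\alpha_{i,\gamma''}^j$, so this route is blind to the pair. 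Going via $\text{sw}^j_i$ (through the top) gives $(\gamma-1)+2+1+(t-\gamma'')=\gamma+t-\gamma''+2$ to $a_{i,\gamma''}^j$, and one more to $\alpha_{i,\gamma''}^j$, because from the $\text{sw}$ side the $\alpha$-path is only reachable by first traversing the edge $\text{sw}^j_i\text{se}^j_i$. An immediate comparison shows that the $\text{sw}$ route is strictly shorter than the $\text{nw}$ route to $a_{i,\gamma''}^j$ precisely when $\gamma''>\gamma$; in that case it is also strictly the shortest route to $\alpha_{i,\gamma''}^j$, and the two resulting distances differ by one, so $v_{i,\gamma}^j$ resolves the pair. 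When $\gamma''\leq\gamma$, the $\text{nw}$ route dominates for both endpoints of the pair, yielding equal distances. A fully symmetric computation on the $V_i^{j+1}$ side, with $\text{ne}/\text{se}$ playing the role of $\text{nw}/\text{sw}$ and with the length-$3$ top path instead of a length-$2$ one, shows that $v_{i,\gamma'}^{j+1}$ resolves $\{a_{i,\gamma''}^j,\alpha_{i,\gamma''}^j\}$ if and only if $\gamma''\leq\gamma'$. Combining the two characterizations, $S$ resolves every critical pair of $P_i^{j,j+1}$ if and only if for every $\gamma''\in[t]$ we have $\gamma''>\gamma$ or $\gamma''\leq\gamma'$, which is equivalent to $\{1,\ldots,\gamma'\}\cup\{\gamma+1,\ldots,t\}=[t]$, i.e.\ $\gamma\leq\gamma'$.

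The main obstacle is confirming that none of the extra adjacencies introduced in the Wrapping up step---namely the inter-column edges $\text{se}^j_i\text{se}^{j+1}_i$ and $\text{nw}^j_i\text{nw}^{j+1}_i$, together with the paths of length at least $t$ attaching the $\text{se}$ and $\text{nw}$ gates as well as the blue vertices $\text{tl}^j_i,\text{tr}^j_i,\text{bl}^j_i,\text{br}^j_i$ to the various $g_j$'s---provides a shorter detour from $v_{i,\gamma}^j$ or $v_{i,\gamma'}^{j+1}$ into $A_i^j$ than the four local gate routes analyzed above. Any such alternative would have to leave the current column through one of these attachments and reenter $P_i^{j,j+1}$ through a gate from the opposite side, which costs at least roughly $2t$ additional edges; since the local routes above differ by at most $2t$ across all values of $\gamma,\gamma''\in[t]$, such detours are strictly dominated and can be safely ignored in the distance computations.
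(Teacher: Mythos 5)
Your proposal is correct and follows essentially the same route as the paper: invoke \cref{lem:onlyLocal} to reduce to the two selected vertices, compute their distances to $a_{i,\gamma''}^j$ and $\alpha_{i,\gamma''}^j$ as the minimum over the two gate routes (your explicit route lengths reproduce the paper's $\min$ formulas exactly), and combine the two characterizations ($\gamma''>\gamma$ versus $\gamma''\leqslant\gamma'$) to get $\gamma\leqslant\gamma'$. Your closing paragraph ruling out detours through the paths added in the wrapping-up step is a point the paper leaves implicit, and your justification of it is sound.
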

\begin{proof}
By \cref{lem:onlyLocal}, no vertex of $S \setminus \{v_{i,\gamma}^j,v_{i,\gamma'}^{j+1}\}$ resolves a critical pair of $P_i^{j,j+1}$.
  Let us show that the critical pairs that $v_{i,\gamma}^j$ resolves in $A_i^j$ are exactly the pairs $\{a_{i,z}^j,\alpha_{i,z}^j\}$ with $z > \gamma$.
  For any $z \in [t]$, it holds that $\dist(v_{i,\gamma}^j,a_{i,z}^j) = \min(t+2+z-\gamma,t+2+\gamma-z) = t+2+\min(z-\gamma,\gamma-z)$,
  and $\dist(v_{i,\gamma}^j,\alpha_{i,z}^j) = \min(t+2+z-\gamma,t+3+\gamma-z) = t+2+\min(z-\gamma,\gamma-z+1)$.
  So if $z > \gamma$, $\dist(v_{i,\gamma}^j,a_{i,z}^j) = t + 2 + \gamma - z \neq t + 2 + \gamma - z + 1 = \dist(v_{i,\gamma}^j,\alpha_{i,z}^j)$.
  Whereas if $z \leqslant \gamma$, $\dist(v_{i,\gamma}^j,a_{i,z}^j) = t+2+z-\gamma = \dist(v_{i,\gamma}^j,\alpha_{i,z}^j)$.
  
  Similarly, we show that the critical pairs that $v_{i,\gamma'}^{j+1}$ resolves in $A_i^j$ are exactly the pairs $\{a_{i,z}^j,\alpha_{i,z}^j\}$ with $z \leqslant \gamma'$.
  For every $z \in [t]$, it holds that $\dist(v_{i,\gamma'}^{j+1},a_{i,z}^j) = \min(t+3+z-\gamma',t+3+\gamma'-z) = t+3+\min(z-\gamma',\gamma'-z)$, and $\dist(v_{i,\gamma'}^{j+1},\alpha_{i,z}^j) = \min(t+2+z-\gamma',t+3+\gamma'-z) = t+2+\min(z-\gamma',\gamma'-z+1)$.
  So if $z \leqslant \gamma'$, $\dist(v_{i,\gamma'}^{j+1},a_{i,z}^j) = t + 3 + z -\gamma' \neq t + 2 + z - \gamma'= \dist(v_{i,\gamma'}^{j+1},\alpha_{i,z}^j)$.
  Whereas if $z > \gamma'$, $\dist(v_{i,\gamma'}^{j+1},a_{i,z}^j) = t + 3 + \gamma' - z = \dist(v_{i,\gamma'}^{j+1},\alpha_{i,z}^j)$.
    This implies that all the critical pairs of $A_i^j$ are resolved by $S$ if and only if $\gamma \leqslant \gamma'$.
\end{proof}

We can now prove the correctness of the reduction.
The construction can be computed in polynomial time in $|V(G)|$, and $G'$ itself has size bounded by a polynomial in $|V(G)|$.
We postpone checking that the pathwidth is bounded by $O(k)$ to the end of the second step, where we produce an instance of \smd whose graph $G''$ admits $G'$ as an induced subgraph.

\subsubsection{$k$-Multicolored Independent Set in $G$ $\Rightarrow$ legal resolving set in $G'$.}\label{subsubsec:misToLRS}

Let $\{v_{1,\gamma_1}, \ldots, v_{k,\gamma_k}\}$ be a $k$-multicolored independent set in $G$.
We claim that $S := \bigcup_{j \in [m]} \{v_{1,\gamma_1}^j, \ldots, v_{k,\gamma_k}^j\}$ is a legal resolving set in $G'$ (of size $km$).
The set $S$ is legal by construction.
Since for every $i \in [k]$, and $j \in [m]$, $v_{i,\gamma_i}^j$ and $v_{i,\gamma_i}^{j+1}$ are in $S$ ($j+1$ is modulo $m$), all the critical pairs in the propagation gadgets are resolved by $S$, by \cref{lem:propagation-check}.
Since $\{v_{1,\gamma_1}, \ldots, v_{k,\gamma_k}\}$ is an independent set in $G$, there is no $e_j = v_{i,\gamma}v_{i',\gamma'} \in E(G)$, such that $(\gamma,\gamma')=(\gamma_i,\gamma_{i'})$.
Thus every critical pair $\{c_j,c'_j\}$ is resolved by $S$, by \cref{lem:edge-check}.

\subsubsection{Legal resolving set in $G'$ $\Rightarrow$ $k$-Multicolored Independent Set in $G$.}\label{subsubsec:LRSToMis}

Assume that there is a legal resolving set $S$ in $G'$.
For every $i \in [k]$, for every $j \in [m]$, the vertex $v_{i,\gamma(i,j)}^j$ in $V_i^j \cap S$ and the vertex $v_{i,\gamma(i,j+1)}^{j+1}$ in $V_i^{j+1} \cap S$ ($j+1$ is modulo $m$) are such that $\gamma(i,j) \leqslant \gamma(i,j+1)$, by \cref{lem:propagation-check}.
Thus $\gamma(i,1) \leqslant \gamma(i,2) \leqslant \ldots \leqslant \gamma(i,m-1) \leqslant \gamma(i,m) \leqslant \gamma(i,1)$, and $\gamma_i := \gamma(i,1) = \gamma(i,2) = \ldots = \gamma(i,m-1) = \gamma(i,m)$.
We claim that $\{v_{1,\gamma_1}, \ldots, v_{k,\gamma_k}\}$ is a $k$-multicolored independent set in $G$.
Indeed, there cannot be an edge $e_j = v_{i,\gamma_i} v_{i',\gamma_{i'}} \in E(G)$, since otherwise the critical pair $\{c_j,c'_j\}$ is not resolved, by \cref{lem:edge-check}.

\section{Parameterized hardness of \md/$\tw$}\label{sec:md}

In this section, we produce in polynomial time an instance $(G'',k'')$ of \md equivalent to $(G',\mathcal{X},km,\mathcal P)$ of \gmd.
The graph $G''$ has also pathwidth $O(k)$.
Now, an instance is just a graph and an integer.
There is no longer $\mathcal X$ and $\mathcal P$ to constrain and respectively loosen the ``resolving set'' at our convenience.
This creates two issues: $(1)$ the vertices outside the former set $\mathcal X$ can now be put in the resolving set, potentially yielding undesired solutions\footnote{Also, it is now possible to put two or more vertices of the same $V_i^j$ in the resolving set $S$} and $(2)$ our candidate solution (when there is a $k$-multicolored independent set in $G$) may not distinguish all the vertices.

\subsection{Construction}

We settle both issues by attaching new gadgets to $G'$.
Eventually the new graph $G''$ will contain $G'$ as an induced subgraph. 
To settle the issue (1), we design a \emph{forced set} gadget.
A forced set gadget attached to $V_i^j$ contains two pairs of vertices which are only resolved by vertices of $V_i^j$.
Thus the gadget simulates the action of $\mathcal X$.

There are a few pairs which are not resolved by a solution of \gmd.
To make sure that all pairs are resolved, we add vertices which need be selected in the resolving set.
Technically we could use the previous gadget on a singleton set.
But we can make it simpler: we just attach two pendant neighbors, that we then make adjacent, to some chosen vertices.
A pair of pendant neighbors are false twins in the whole graph.
So we know that at least one of these two vertices have to be in the resolving set.
Hence we call that the \emph{forced vertex} gadget, and one of the false twins, \emph{a forced vertex}.
It is important that these forced vertices do not resolve any pair of $\mathcal P$.
So we can only add pendant twins to vertices themselves not resolving any pair of $\mathcal P$. 

\subsubsection{Forced set gadget}\label{sec:forcedSet}
To deal with the issue (1), we introduce two new pairs of vertices for each $V_i^j$.
The intention is that the only vertices resolving both these pairs simultaneously are precisely the vertices of $V_i^j$.
For any $i \in [k]$ and $j \in [m]$, we add to $G'$ two pairs of vertices $\{p_i^j,q_i^j\}$ and $\{r_i^j,s_i^j\}$, and two \emph{gates} $\pi_i^j$ and $\rho_i^j$. 
Vertex $\pi_i^j$ is adjacent to $p_i^j$ and $q_i^j$, and vertex $\rho_i^j$ is adjacent to $r_i^j$ and $s_i^j$.

We link $v_{i,1}^j$ to $p_i^j$, and $v_{i,t}^j$ to $r_i^j$, each by a path of length~$t$.
It introduces two new neighbors of $v_{i,1}^j$ and $v_{i,t}^j$ (the brown vertices in \cref{fig:propagationAddons}).
We denote them by $\tbr$ and $\bbr$, respectively.
The blue and brown vertices are linked to $\pi_i^j$ and $\rho_i^j$ in the following way.
We link $\tle$ and $\tri$ to $\pi_i^j$ by a private path of length $t$, and to $\rho_i^j$ by a private path of length~$2t-1$.
We link $\ble$ and $\bri$ to $\pi_i^j$ by a private path of length $2t-1$, and to $\rho_i^j$ by a private path of length~$t$.
(Let us clarify that the names of the blue vertices $\ble$ and $\bri$ are for ``bottom-left'' and ``bottom-right'', and \emph{not} for ``blue'' and ``brown''.) 
We link $\tbr$ (neighbor of $v_{i,1}^j$) to $\rho_i^j$ by a private path of length~$2t-1$.
We link $\bbr$ (neighbor of $v_{i,t}^j$) to $\pi_i^j$ by a private path of length~$2t-1$.
Note that the general rule to set the path length is to match the distance between the neighbor in $V_i^j$ and $p_i^j$ (resp.~$r_i^j$).
With that in mind we link, if it exists, the \emph{top cyan vertex $\tc$} (the one with smallest index $\gamma$) neighboring $V_i^j$ to $\pi_i^j$ with a path of length $\dist(v_{i,\gamma}^j,p_i^j) = t+\gamma-1$ where $v_{i,\gamma}^j$ is the unique vertex in $N(\tc) \cap V_i^j$.
Observe that with the notations of the previous section $\tc = e_{i,\gamma}^j$.
We also link, if it exists, the \emph{bottom cyan vertex $\bc$} (the one with largest index $\gamma$) to $\rho_i^j$ with a path of length $\dist(v,r_i^j)$ where $v$ is again the unique neighbor of $\bc$ in $V_i^j$. 

It can be observed that we only have two paths (and not all six) from the at most three cyan vertices to the gates $\pi_i^j$ and $\rho_i^j$.
This is where the edges between the cyan vertices will become relevant.
See \cref{fig:propagationAddons} for an illustration of the forced vertex gadget, keeping in mind that, for the sake of legibility, four paths to $\{\pi_i^j, \rho_i^j\}$ are not represented.

 \subsubsection{Forced vertex gadget}\label{sec:forcedVertex}
 We now deal with the issue (2).
 By \emph{we add} (or \emph{attach}) \emph{a forced vertex} to an already present vertex $v$, we mean that we add two adjacent neighbors to $v$, and that these two vertices remain of degree 2 in the whole graph $G''$.
 Hence one of the two neighbors will have to be selected in the resolving set since they are false twins.
 We call \emph{forced vertex} one of these two vertices (picking arbitrarily).
 
 For every $i \in [k]$ and $j \in [m]$, we add a forced vertex to the gates \nwge and \sege of $P_i^{j,{j+1}}$.
 We also add a forced vertex to each vertex in $N(\{\pi_i^j,\rho_i^j\}) \setminus \{p_i^j, q_i^j, r_i^j, s_i^j\}$.
 This represents a total of $12$ vertices ($6$ neighbors of $\pi_i^j$ and $6$ neighbors of $\rho_i^j$). 
 For every $j \in [m]$, we attach a forced vertex to each vertex in $N(g_j) \setminus \{c_j, c'_j\}$.
 This constitutes $14$ neighbors (hence $14$ new forced vertices).
 Therefore we set $k'' := km + 12km + 2km + 14m = 15km + 14m$.

 \begin{figure}[h!]
   \centering
   \begin{tikzpicture}[scale=0.85]
     \def\t{5}
     \def\d{10}
     \def\cd{0.7}
     \def\opa{0.35}
     \foreach \h in {1,...,\t}{
       \node[draw,circle,fill=yellow!20] (v\h) at (0,\t - \h) {} ;
     }
     \node[draw, rectangle, minimum width=0.5cm, minimum height=1cm, rounded corners, very thick, fit=(v1) (v\t)] (V) {} ;
     \node (tV) at (0,\t) {$V_i^j$} ;

     \foreach \h in {2,3,4}{
       \node[draw,circle,fill=cyan] (w\h) at (-0.75,\t - \h) {} ;
     }
     \draw (w2) -- (w3) -- (w4) ;


     \foreach \i in {2,3,4}{
       \pgfmathtruncatemacro{\im}{\i - 1}
       \draw (v\im) -- (w\i) ;
     }
     \node at (-3,-1.7) {$\mathcal G(e_j)$} ;
     \node[draw, circle] (cpj) at (-3.5,-1) {} ;
     \node[draw, circle] (cj) at (-2.5,-1) {} ;
     \node[draw, circle, fill=red] (gj) at (-3,-0.33) {} ;
     \node (tgj) at (-3.4,-0.33) {$g_j$} ;
     \node (tcj) at (-1.9,-1) {$c_j$} ;
     \node (tcpj) at (-4.1,-1) {$c'_j$} ;
     
     \draw (cpj) -- (gj) -- (cj) ;
     \node[draw, rectangle, rounded corners, thick, fit=(cj) (cpj) (gj)] (gej) {} ;

     \draw (w3) edge[bend right=5] node[left] {$6$} (gj) ;
     \draw (w4) edge node[below] {$6$} (cj) ;
     \draw (w2) edge[bend left=5] node[below] {$6$} (cj) ;
     
     \foreach \l/\x/\y in {z1/-0.65/\t-1, z2/0.65/\t-1, z3/-0.65/0, z4/0.65/0}{
     \node[draw,circle,fill=blue] (\l) at (\x,\y) {} ;
     }
     

     \pgfmathsetmacro{\ce}{\d / 2}
     \foreach \a/\b/\c in {\ce+0.5/-1.5/sw,\ce+2/-1.5/se,\ce+0.5/\t+0.5/nw,\ce+2/\t+0.5/ne}{
       \node[draw, rectangle, minimum height=0.5cm, minimum width=1cm] (\c) at (\a, \b) {$\text{\c}_i^j$} ;
     }

     \foreach \h in {1,...,\t}{
       \node[draw,circle] (a\h) at (\d / 2 - \cd + 1.25,\t - \h) {} ;
     }
     \foreach \h in {1,...,\t}{
       \node[draw,circle] (alpha\h) at (\d / 2 + \cd +1.25,\t - \h) {} ;
     }

     \begin{scope}[xshift=- \d cm]
       \pgfmathsetmacro{\ce}{\d / 2}
     \foreach \a/\b/\c/\lc in {\ce-2.5/-1.5/bsw/sw,\ce-1/-1.5/bse/se,\ce-2.5/\t+0.5/bnw/nw,\ce-1/\t+0.5/bne/ne}{
       \node[draw, rectangle, minimum height=0.5cm, minimum width=1cm] (\c) at (\a, \b) {$\text{\lc}_i^{j-1}$} ;
     }

     \foreach \h in {1,...,\t}{
       \node[draw,circle] (ba\h) at (\d / 2 - \cd-1.75,\t - \h) {} ;
     }
     \foreach \h in {1,...,\t}{
       \node[draw,circle] (balpha\h) at (\d / 2 + \cd-1.75,\t - \h) {} ;
     }
     \end{scope}
     
     \foreach \l in {a,alpha,ba,balpha}{
       \draw[opacity=\opa] (\l1) -- (\l2) -- (\l3) -- (\l4) -- (\l5) ;
     }
     \draw (v1) -- (v2) -- (v3) -- (v4) -- (v5) ;
     \draw[opacity=\opa] (a\t) -- (se) -- (alpha\t) ;
     \draw[opacity=\opa] (a\t) -- (sw) -- (se) ;
     \draw[opacity=\opa] (a1) -- (nw) -- (ne) -- (alpha1) ;
     \draw[opacity=\opa] (a1) -- (nw) -- (alpha1) ;
     \draw[opacity=\opa] (ne) -- (alpha1) ;

     \draw[opacity=\opa] (ba\t) -- (bse) -- (balpha\t) ;
     \draw[opacity=\opa] (ba\t) -- (bsw) -- (bse) ;
     \draw[opacity=\opa] (ba1) -- (bnw) -- (bne) -- (balpha1) ;
     \draw[opacity=\opa] (ba1) -- (bnw) -- (balpha1) ;
     \draw[opacity=\opa] (bne) -- (balpha1) ;

     \draw[opacity=\opa] (z1) -- (v1) -- (z2) ;
     \draw[opacity=\opa] (z3) -- (v\t) -- (z4) ;
     
     \draw[opacity=\opa] (z1) edge node[left] {$2$} (bse) ;
     \draw[opacity=\opa] (z3) edge node[left] {} (bne) ;
     \draw[opacity=\opa] (z2) edge node[right] {} (sw) ;
     \draw[opacity=\opa] (z4) edge node[right] {} (nw) ;





     \foreach \j/\lj in {-0.75/1,\t-0.25/2}{
       \node[draw,circle,fill=brown] (d1\lj) at (1,\j) {} ;
       \node[draw,circle] (d2\lj) at (3,\j) {} ;
       \node[draw,circle,fill=red] (d3\lj) at (3.5,\j) {} ;
       \node[draw,circle] (d4\lj) at (4,\j) {} ;

       \draw (d2\lj) -- (d3\lj) -- (d4\lj) ;
       \draw (d1\lj) edge node[above] {$4$} (d2\lj) ;
     }
     \node (pij) at (3,\t + 0.25) {$p_i^j$} ;
     \node (pi) at (3.5,\t + 0.25) {$\pi_i^j$} ;
     \node (qij) at (4,\t + 0.25) {$q_i^j$} ;

     \node (rij) at (3,-1.25) {$r_i^j$} ;
     \node (rho) at (3.5,-1.25) {$\rho_i^j$} ;
     \node (sij) at (4,-1.25) {$s_i^j$} ;
     
     
     \draw (v1) -- (d12) ;
     \draw (v\t) -- (d11) ;
     \draw (z2) edge[bend right=15] node[below] {$5$} (d32) ;
     \draw (z2) edge node[below] {$9$} (d31) ;
     \draw (z4) edge[bend left=15] node[above] {$5$} (d31) ;
     \draw (z4) edge node[above] {$9$} (d32) ;

     \draw (z1) edge[bend left=40] node[above] {$5$} (d32) ;
     \draw (z3) edge[bend right=40] node[below] {$5$} (d31) ;

     \draw (gj) edge[bend left=70] node[above] {$8$} (nw) ;
     \draw (gj) edge[bend right=30] node[above] {$6$} (se) ;

     

     \draw (d12) edge[bend right=25] node[left] {$9$} (d31) ;
     \draw (d11) edge[bend left=25] node[left] {$9$} (d32) ;

     \node[draw, circle, fill=black] (dumbl1) at (-6.25,-2.5) {} ;
     \node[draw, circle] (dumbl2) at (-5.75,-2.5) {} ;

     \draw (dumbl1) -- (bse) -- (dumbl2) -- (dumbl1) ;

     \node[draw, circle, fill=black] (dumbr1) at (6.75,-2.5) {} ;
     \node[draw, circle] (dumbr2) at (7.25,-2.5) {} ;

     \draw (dumbr1) -- (se) -- (dumbr2) -- (dumbr1) ;

     \node[draw,circle,fill=black] (dumtr1) at (4.5,\t + 0.75) {} ;
     \node[draw,circle] (dumtr2) at (4.5,\t + 0.25) {} ;
     \draw (dumtr1) -- (nw) -- (dumtr2) -- (dumtr1);

     \node[draw,circle,fill=black] (dumtl1) at (-8.6,\t + 0.75) {} ;
     \node[draw,circle] (dumtl2) at (-8.6,\t + 0.25) {} ;
     \draw (dumtl1) -- (bnw) -- (dumtl2) -- (dumtl1) ;

     \draw (se) edge[bend left=25] node[above] {} (bse) ;
     \draw (nw.north) edge[bend right=20] node[above] {} (bnw.north) ;
   \end{tikzpicture}
   \caption{Vertices $\tle, \tri, \ble, \bri$ (blue vertices) are linked to $\pi_i^j$, $\rho_i^j$ by paths of appropriate lengths (see \cref{sec:forcedSet}).
     Vertex $\tbr$ is linked by a path to $\rho_i^j$, while $\bbr$ is linked by a path to $\pi_i^j$.
     To avoid cluttering the figure, we did not represent four paths: from $\tle$ and $\bc$ to $\rho_i^j$, and from $\ble$ and $\tc$ to $\pi_i^j$.
     We also did not represent the paths already in the \sgmd-instance from the blue vertices to $g_j$.
     Black vertices are forced vertices.
     Gray edges are the edges in the propagation gadgets already depicted in Figure~\ref{fig:propagationExplNew}.
     Not represented on the figure, we add a forced vertex to each neighbor of the red vertices, except $p_i^j, q_i^j, r_i^j, s_i^j, c_j, c'_j$.
     Finally we add four more paths and potentially two edges (see \cref{subsec:finalTouches}).}
   \label{fig:propagationAddons}
  \end{figure}

 \subsubsection{Finishing touches and useful notations}\label{subsec:finalTouches}
 We use the convention that $P(u,v)$ denotes the path from $u$ to $v$ which was specifically built from $u$ to $v$.
 In other words, for $P(u,v)$ to make sense, there should be a point in the construction where we say that we add a (private) path between $u$ and $v$.
 For the sake of legibility, $P(u,v)$ may denote either the set of vertices or the induced subgraph.
 We also denote by $\nu(u,v)$ the neighbor of $u$ in the path $P(u,v)$.
 Observe that $P(u,v)$ is a symmetric notation but not $\nu(u,v)$. 

 We add a path of length $\dist(\nu(\pi_i^j,\tri),\sw)=t$ between $\nu(\pi_i^j,\tri)$ and $\se$, and a path of length $\dist(\nu(\pi_i^j,\ble),\text{ne}_i^{j-1})=2t-1$ between $\nu(\pi_i^j,\ble)$ and $\text{nw}_i^{j-1}$.
 Similarly, we add a path of length $\dist(\nu(\rho_i^j,\tri),\sw)=2t-1$ between $\nu(\rho_i^j,\tri)$ and $\se$, and a path of length $\dist(\nu(\rho_i^j,\ble),\text{ne}_i^{j-1})=t$ between $\nu(\rho_i^j,\ble)$ and $\text{nw}_i^{j-1}$.
 We added these four paths so that no forced vertex resolves any critical pair in the propagation gadgets $P_i^{j-1,j}$ and $P_i^{j,j+1}$.

 Finally we add an edge between $\nu(g_j,\nw)$ and $\nu(c_j,\bc)$ whenever $V_i^j$ have exactly three cyan vertices.
 We do that to resolve the pair $\{\nu(c_j,\tc), \nu(c_j,\bc)\}$, and more generally every pair $\{x,y\} \in P(c_j,\tc) \times P(c_j,\bc)$ such that $\dist(c_j,x)=\dist(c_j,y)$.
 This finishes the construction of the instance $(G'',k'' := 15km+14m)$ of \md.
 
\subsection{Correctness of the reduction} 

The two next lemmas will be crucial in \cref{sec:correctnessMDtoGMD}.
The first lemma shows how the forcing set gadget simulates the action of former set $\mathcal X$.

\begin{lemma}\label{lem:tooFar-useless}
  For every $i \in [k]$ and $j \in [m]$,
  \begin{itemize}
  \item $\forall v \in V_i^j$, $v$ resolves both pairs $\{p_i^j, q_i^j\}$ and $\{r_i^j, s_i^j\}$,
  \item $\forall v \notin V_i^j$, $v$ resolves at most one pair of $\{p_i^j, q_i^j\}$ and $\{r_i^j, s_i^j\}$,
  \item $\forall v \notin V_i^j \cup P(v_{i,1}^j,p_i^j) \cup P(v_{i,t}^j,r_i^j) \cup \{q_i^j, s_i^j\}$, $v$ does not resolve $\{p_i^j, q_i^j\}$ nor $\{r_i^j, s_i^j\}$.
  \end{itemize}
\end{lemma}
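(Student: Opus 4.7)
The plan is to exploit that $q_i^j$ has $\pi_i^j$ as its unique neighbour, so $\dist(v,q_i^j)=\dist(v,\pi_i^j)+1$ for every $v\neq q_i^j$; symmetrically $\dist(v,s_i^j)=\dist(v,\rho_i^j)+1$ for $v\neq s_i^j$. Since $p_i^j$ is adjacent to $\pi_i^j$, we have $\dist(v,p_i^j)\leqslant \dist(v,\pi_i^j)+1$ always, so a vertex $v\neq q_i^j$ resolves $\{p_i^j,q_i^j\}$ if and only if it admits a shortest path to $p_i^j$ that enters through the other neighbour of $p_i^j$, which lies on $P(v_{i,1}^j,p_i^j)$. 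A symmetric statement holds for $\{r_i^j,s_i^j\}$, $\rho_i^j$, and the path $P(v_{i,t}^j,r_i^j)$.

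For the third bullet I would apply \cref{lem:surrounded} with $X_1:=V_i^j\cup P(v_{i,1}^j,p_i^j)\cup \{q_i^j\}$ for the first pair, and $X_2:=V_i^j\cup P(v_{i,t}^j,r_i^j)\cup \{s_i^j\}$ for the second. As $X_1\cup X_2$ is exactly the set excluded in the statement, it suffices to check that every $u\in N(X_1)\setminus X_1$ satisfies $\dist(u,p_i^j)=\dist(u,\pi_i^j)+1$ (and the analogue for $X_2$). Inspecting the construction, $N(X_1)\setminus X_1$ consists of $\pi_i^j$, the four blue vertices $\tle,\tri,\ble,\bri$, the brown vertex $\bbr$, and the at most three cyan vertices attached to $V_i^j$. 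For each such $u$ one enumerates the handful of available routes to $p_i^j$ and to $\pi_i^j$: the private path from $u$ to $\pi_i^j$ when it exists, the detour through $V_i^j$ followed by $P(v_{i,1}^j,p_i^j)$, and, for cyan vertices, the walk along the cyan chain to $\tc$ concatenated with $P(\tc,\pi_i^j)$. The path lengths $t$, $2t-1$, $t+\gamma-1$ assigned in \cref{sec:forcedSet} were precisely tuned so that in every case the route via $\pi_i^j$ and the route via $P(v_{i,1}^j,p_i^j)$ have equal length, which gives $\dist(u,p_i^j)=\dist(u,\pi_i^j)+1$. The extra paths added in \cref{subsec:finalTouches} leave through the far gates $\se$, $\nw$ and cost at least $t$ extra, while the forced vertices of \cref{sec:forcedVertex} are pendant twins of degree $2$; neither can shortcut the analysis.

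For the first bullet, let $v=v_{i,\gamma}^j$. The direct route $v\to v_{i,1}^j\to \tbr \to \cdots \to p_i^j$ yields $\dist(v,p_i^j)\leqslant t+\gamma-1$, and any alternative either traverses $\pi_i^j$ or exits through $v_{i,t}^j$, so its length is at least $t+\gamma-1$, giving equality. The shortest $v$-$\pi_i^j$ walk has length $t+\gamma$: via $\tle$ or $\tri$ it costs $(\gamma-1)+1+t$, via $p_i^j$ it costs $(t+\gamma-1)+1$, and the routes through $\ble$, $\bri$, $\bbr$, or the cyan chain are strictly longer. Hence $\dist(v,q_i^j)=t+\gamma+1\neq t+\gamma-1=\dist(v,p_i^j)$, so $v$ resolves $\{p_i^j,q_i^j\}$; a symmetric calculation using $\bbr,\ble,\bri$ and $\rho_i^j$ shows that $v$ resolves $\{r_i^j,s_i^j\}$.

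The second bullet is immediate from the third: a vertex $v\notin V_i^j$ can resolve $\{p_i^j,q_i^j\}$ only if $v\in P(v_{i,1}^j,p_i^j)\cup\{q_i^j\}$, and can resolve $\{r_i^j,s_i^j\}$ only if $v\in P(v_{i,t}^j,r_i^j)\cup\{s_i^j\}$; these two subsets are disjoint (as the two paths share no vertex and $q_i^j\neq s_i^j$), so $v$ resolves at most one of the two pairs. The main obstacle is clearly the enumeration in the second paragraph: one has to check, case by case, that for every $u\in N(X_1)\setminus X_1$ the competing routes to $p_i^j$ (through $\pi_i^j$ versus through $P(v_{i,1}^j,p_i^j)$) actually balance out, which is exactly what the numerical choices of the path lengths in the forcing set gadget are designed to guarantee.
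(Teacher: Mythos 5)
Your overall strategy is the paper's: apply \cref{lem:surrounded} to a region around the gadget and verify that no boundary vertex separates the pair. The gap is in your enumeration of the boundary. With $X_1:=V_i^j\cup P(v_{i,1}^j,p_i^j)\cup\{q_i^j\}$, the set $N(X_1)\setminus X_1$ contains one vertex you do not list: the neighbour $\nu(\tbr,\rho_i^j)$ of the top brown vertex on the auxiliary path $P(\tbr,\rho_i^j)$ of length $2t-1$ (recall $\tbr\in P(v_{i,1}^j,p_i^j)$, so this path attaches to $X_1$). For that vertex the competing routes do \emph{not} balance: $\dist(\nu(\tbr,\rho_i^j),p_i^j)=1+(t-1)=t$ via $\tbr$, whereas $\dist(\nu(\tbr,\rho_i^j),\pi_i^j)=1+\dist(\tbr,\pi_i^j)=1+t$, hence $\dist(\nu(\tbr,\rho_i^j),q_i^j)=t+2$. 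So this boundary vertex resolves $\{p_i^j,q_i^j\}$, the hypothesis of \cref{lem:surrounded} fails for $X_1$, and the second paragraph of your argument does not go through; symmetrically $\nu(\bbr,\pi_i^j)\in N(X_2)$ resolves $\{r_i^j,s_i^j\}$. This is not a formality: the paths $P(\tbr,\rho_i^j)$ and $P(\bbr,\pi_i^j)$ are precisely what makes your bullet-two argument work (they force vertices of $P(v_{i,1}^j,p_i^j)$ to tie on $\{r_i^j,s_i^j\}$ and vice versa), so they are present and cannot be omitted when choosing the cut.

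The paper cuts elsewhere: it removes $Y:=\{\tle,\tri,\ble,\bri\}\cup(X_j\cap N(V_i^j))\cup(N(\{\pi_i^j,\rho_i^j\})\setminus\{p_i^j,q_i^j,r_i^j,s_i^j\})$ and takes the component $X$ of $G-Y$ containing $\pi_i^j$ and $\rho_i^j$, so the two auxiliary paths lie almost entirely \emph{inside} $X$ and never appear on the boundary; only the gates' immediate neighbours and the blue, cyan and $\bbr$ vertices must be checked, and those tie exactly as you computed. If you adopt that cut, your boundary verification survives, but note that the resulting $X$ strictly contains $V_i^j\cup P(v_{i,1}^j,p_i^j)\cup P(v_{i,t}^j,r_i^j)\cup\{\pi_i^j,q_i^j,\rho_i^j,s_i^j\}$: it also contains the interiors of $P(\tbr,\rho_i^j)$ and $P(\bbr,\pi_i^j)$ (a point the paper's own proof glosses over when it describes $X$), so the third bullet is only obtained for vertices outside this larger region. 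That is all that is ever used downstream, since the lemma is applied only to forced vertices and to vertices of $\cup\mathcal X$. Your first bullet and the derivation of the second bullet from the per-pair form of the third are fine.
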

\begin{proof}
  Let $Y := \{\tle,\tri,\ble,\bri\} \cup (X_j \cap N(V_i^j)) \cup (N(\{\pi_i^j,\rho_i^j\}) \setminus \{p_i^j, q_i^j, r_i^j, s_i^j\})$, and recall that $X_j \cap N(V_i^j)$ is the set of cyan vertices neighbors of $V_i^j$ (if they exist).
  Let us assume that these cyan vertices exist (otherwise the proof is just simpler).
  In particular, there are at least two cyan neighbors $\tc, \bc \in X_j \cap N(V_i^j)$.
  Let $X$ be the connected component of $G-Y$ containing $\{\pi_i^j,\rho_i^j\}$.
  For every vertex $u \in \{\tle,\tri,\ble,\bri,\tc,\bc\}$, by the way we chose the length of $P(u,\pi_i^j)$ (resp.~$P(u,\rho_i^j)$), there is a shortest path from $u$ to $p_i^j$ (resp.~$r_i^j$) that goes through $\pi_i^j$ (resp.~$\rho_i^j$).
  Thus $\dist(u,p_i^j)=\dist(u,\pi_i^j)+1=\dist(u,q_i^j)$ and $\dist(u,r_i^j)=\dist(u,\rho_i^j)+1=\dist(u,s_i^j)$.

  Let $\text{mc}_i^j$ be the middle cyan vertex if it exists (the one which is not the top nor the bottom one).
  There is shortest path from $\text{mc}_i^j$ to $p_i^j$ (resp.~$r_i^j$) going via $\tc$ (resp.~$\bc$) and then $\pi_i^j$ (resp.~$\rho_i^j$).
  This is where the edges $\text{mc}_i^j\tc$ and $\text{mc}_i^j\bc$ are useful.
  Hence $\text{mc}_i^j$ does not resolve $\{p_i^j, q_i^j\}$ nor $\{r_i^j, s_i^j\}$, either.
  It is direct that no vertex of $N(\{\pi_i^j,\rho_i^j\}) \setminus \{p_i^j, q_i^j, r_i^j, s_i^j\}$ resolves $\{p_i^j, q_i^j\}$ nor $\{r_i^j, s_i^j\}$.
  Thus no vertex of $Y$ resolves any of $\{p_i^j, q_i^j\}$ and $\{r_i^j, s_i^j\}$.
  Therefore by \cref{lem:surrounded}, no vertex outside $X$ resolves any of $\{p_i^j, q_i^j\}$ and $\{r_i^j, s_i^j\}$.

  We observe that $X = V_i^j \cup P(v_{i,1}^j,p_i^j) \cup  P(v_{i,t}^j,r_i^j) \cup \{\pi_i^j, q_i^j, \rho_i^j, s_i^j\}$.
  Because of the path from the top brown vertex to $\rho_i^j$, vertices of $P(v_{i,1}^j,p_i^j) \setminus \{v_{i,1}^j\} \cup \{q_i^j\}$, which do resolve $\{p_i^j, q_i^j\}$, do \emph{not} resolve $\{r_i^j, s_i^j\}$.
  Similarly because of the path from the bottom brown vertex to $\pi_i^j$, vertices of $P(v_{i,t}^j,r_i^j) \setminus \{v_{i,t}^j\} \cup \{s_i^j\}$, which do resolve $\{r_i^j, s_i^j\}$, do \emph{not} resolve $\{p_i^j, q_i^j\}$.
  Finally for every $u \in V_i^j$, $\dist(u,q_i^j) = \dist(u,p_i^j)+2$ and $\dist(u,r_i^j) = \dist(u,s_i^j)+2$.
  Therefore vertices of $V_i^j$ are the only ones resolving both $\{p_i^j, q_i^j\}$ and $\{r_i^j, s_i^j\}$, while no vertex of $G-X$ resolves any of these pairs.
\end{proof}

We denote by $f(v)$ the forced vertex attached to a vertex $v$.
For \cref{sec:correctnessMDtoGMD}, we also need the following lemma, which states that the forced vertices do not resolve critical pairs.

\begin{lemma}\label{lem:forcedAndCritical}
  No forced vertex resolves a pair of $\mathcal P$.
\end{lemma}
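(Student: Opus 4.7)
The forced-vertex gadget attaches two adjacent degree-$2$ twins $f$ and $f'$ to an attachment point $v$, so that $N(f) = \{v, f'\}$ and $N(f') = \{v, f\}$. Every shortest path from $f$ to any third vertex $x$ thus begins with the edge $fv$, yielding $\dist(f, x) = \dist(v, x) + 1$. Consequently, for every pair $\{x, y\}$ disjoint from $\{f, f'\}$---and in particular for every pair of $\mathcal P$, whose elements are all inherited from $G'$---the forced vertex $f$ resolves $\{x, y\}$ if and only if its attachment point $v$ does. The lemma therefore reduces to the following purely local statement: for every vertex $v$ to which a forced vertex is attached, $v$ resolves no pair of $\mathcal P$.

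The attachment set splits into three families: (i) the gates $\text{nw}_i^j$ and $\text{se}_i^j$ of each propagation gadget; (ii) the $12$ neighbors of $\pi_i^j$ and $\rho_i^j$ distinct from $p_i^j, q_i^j, r_i^j, s_i^j$; and (iii) the $14$ neighbors of $g_j$ distinct from $c_j, c'_j$. For each family one must check both flavors of critical pairs, namely the propagation pairs $\{a_{i',\gamma}^{j'}, \alpha_{i',\gamma}^{j'}\}$ and the edge pairs $\{c_{j'}, c'_{j'}\}$, and for each of these one distinguishes between ``far'' pairs lying in a gadget not touching $v$ and ``local'' pairs. In the far case the argument mirrors the second half of \cref{lem:onlyLocal}: one encloses the pair in a region $X$ with a small cut $N(X)$ whose members all have symmetric distances to the two endpoints, and then \cref{lem:surrounded} concludes that no vertex outside $X$---in particular $v$---resolves the pair.

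The local cases reduce to direct distance computations orchestrated by specific ingredients of the construction. For family~(i) and an intra-gadget propagation pair $\{a_{i,\gamma}^j, \alpha_{i,\gamma}^j\}$, the edges $\text{nw}_i^j a_{i,1}^j$ and $\text{nw}_i^j \alpha_{i,1}^j$ yield $\dist(\text{nw}_i^j, a_{i,\gamma}^j) = \gamma = \dist(\text{nw}_i^j, \alpha_{i,\gamma}^j)$, and the same equality holds symmetrically at $\text{se}_i^j$. For an edge pair $\{c_j, c'_j\}$ with $e_j$ incident to $V_i$, the long direct paths of length $\dist(g_j, \text{nw}_i^j) - 4$ and $\dist(g_j, \text{se}_i^j) - 4$ added at the end of \cref{sec:gmd} were calibrated precisely so that every shortest path from the gate to $\{c_j, c'_j\}$ traverses $g_j$, after which $\dist(g_j, c_j) = \dist(g_j, c'_j) = 1$ closes the case. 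Family~(iii) is even more direct: each $v \in N(g_j)\setminus\{c_j,c'_j\}$ satisfies $\dist(v, c_j) = 2 = \dist(v, c'_j)$, since the route $v \to g_j \to c_j$ (resp.\ $v \to g_j \to c'_j$) is optimal and any detour through the private path carrying $v$ back to its source is longer by at least $t$.

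Family~(ii) is the one that really motivates the finishing touches of \cref{subsec:finalTouches}. The four additional paths inserted there---of lengths $t$ and $2t-1$, linking $\nu(\pi_i^j, \tri), \nu(\rho_i^j, \tri), \nu(\pi_i^j, \ble), \nu(\rho_i^j, \ble)$ to $\text{se}_i^j$ or to $\text{nw}_i^{j-1}$---are calibrated exactly so that every such neighbor $v$ of $\pi_i^j$ or $\rho_i^j$ gains a second shortest route to each propagation critical pair of $P_i^{j-1,j}$ and $P_i^{j,j+1}$, symmetric to the ``natural'' route through $v$'s source vertex; consequently $v$'s distances to the two endpoints of any such pair coincide. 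All remaining local pairs for this family are handled by the same surrounded-type argument as above. The main obstacle throughout the proof is precisely this case-by-case bookkeeping of competing shortest paths; every additional path or edge in \cref{subsec:finalTouches} exists for no reason other than to enforce one of these local symmetries.
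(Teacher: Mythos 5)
Your proposal is correct and follows essentially the same route as the paper: \cref{lem:surrounded}, applied to the same cuts as in \cref{lem:onlyLocal}, disposes of the far pairs, and the local pairs are handled by the calibrated path lengths --- the paper phrases the local propagation-pair argument as ``every forced vertex in the component has a shortest path to $\{\nw,\nee\}$ ending in $\nw$ and one to $\{\sw,\se\}$ ending in $\se$,'' which is exactly your symmetric-route condition, and your opening reduction from a forced vertex to its attachment point (via $\dist(f,x)=\dist(v,x)+1$) is a small clean simplification that the paper leaves implicit. The only cell of your case matrix not actually written out is family (iii) against the propagation pairs of $P_i^{j-1,j}$ and $P_i^{j,j+1}$ when $e_j$ or $e_{j+1}$ has an endpoint in $V_i$ --- these pairs are \emph{local} for the neighbors of $g_j$, since $g_j$ then lies in the separator and its neighbors sit inside the enclosing component --- but the ingredient needed to close it is the one you already invoke for family (i), namely that $P(g_j,\nw)$ and $P(g_j,\se)$ are short enough that shortest paths from $g_j$'s side reach the non-resolving gates first.
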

\begin{proof}
  We first show that no critical pair in some $P_i^{j,j+1}$ is resolved by a forced vertex.
  We use a similar plan as for the proof of \cref{lem:onlyLocal}.
  Let $Y := \{\text{nw}_i^{j-1}, \text{se}_i^{j-1}, \text{nw}_i^{j+1}, \text{se}_i^{j+1}\} \cup C_e$, where $C_e$ comprises $\{c_j,g_j\}$ if $e_j$ has an endpoint in $V_i$ and $\{c_{j+1},g_{j+1}\}$ if $e_{j+1}$ has an endpoint in $V_i$.
  Let $X$ be the connected component of $G'' - Y$ containing $P_i^{j,j+1}$.
  Note that the distances between the vertices of $Y$ and the critical pairs in $P_i^{j,j+1}$ are the same between $G'$ and $G''$.
  Hence as we showed in \cref{lem:onlyLocal}, no vertex of $Y$ resolves a critical pair in $P_i^{j,j+1}$.
  Thus by \cref{lem:surrounded} no vertex outside $X$ resolves a critical pair in $P_i^{j,j+1}$.

  We now check that no forced vertex in $X$ resolves a critical pair in $P_i^{j,j+1}$.
  We show that every forced vertex in $X$ has a shortest path to $\{\nw,\nee\}$ ending in $\nw$, and a shortest path to $\{\sw,\se\}$ ending in $\se$.
  It is clear for $f(\nw)$ and for $f(\se)$, as well as for all the forced vertices attached to neighbors of $g_j$ (in case $e_j$ has an endpoint in $V_i$).
  Indeed recall that the length of $P(g_j,\nw)$ (resp.~$P(g_j,\se)$) is four less than the distance to $\nw$ (resp.~$\sw$) ignoring the path $P(g_j,\nw)$ (resp.~$P(g_j,\se)$).
  So the shortest paths from the latter forced vertices go to $g_j$ and then to $\nw$ (resp. $\se$).
  Similarly in case $e_{j+1}$ has an endpoint in $V_i$, the shortest paths from the forced vertices attached to the neighbors of $c_{j+1}$ to $\{\nw,\nee\}$ (resp.~$\{\sw,\se\}$) go to $g_{j+1}$, then to $\text{nw}_i^{j+1}$ and $\nw$ (resp. then to $\text{se}_i^{j+1}$ and $\se$).
  
  Note that all the forced vertices attached to neighbors of $\pi_i^j$ and $\rho_i^j$ (resp.~$\pi_i^{j+1}$ and $\rho_i^{j+1}$) have a shortest path to $\{\nw,\nee\}$ ending in $\nw$ (resp.~to $\{\sw,\se\}$ ending in $\se$). 
  Finally due to the paths $P(\nu(\pi_i^j,\tri),\se)$ and $P(\nu(\rho_i^j,\tri),\se)$, all the forced vertices attached to neighbors of $\pi_i^j$ and $\rho_i^j$ have a shortest path to $\{\sw,\se\}$ ending in $\se$.
  And due to the paths $P(\nu(\pi_i^{j+1},\text{bl}_i^{j+1}),\nw)$ and $P(\nu(\rho_i^{j+1},\text{bl}_i^{j+1}),\nw)$, all the forced vertices attached to neighbors of $\pi_i^{j+1}$ and $\rho_i^{j+1}$ have a shortest path to $\{\nw,\nee\}$ ending in $\nw$.

  We now show that no critical pair $\{c_j,c'_j\}$ is resolved by a forced vertex.
  We set $Y' := \{\tle, \tri, \ble, \bri, \text{tl}_{i'}^j, \text{tr}_{i'}^j, \text{bl}_{i'}^j, \text{br}_{i'}^j, \nw, \se, \text{nw}_{i'}^j, \text{se}_{i'}^j, \pi_i^j, \rho_i^j, \pi_{i'}^j, \rho_{i'}^j\}$, with $e_j \in E(V_i,V_{i'})$, and $X'$ be the connected component of $G'' - Y'$ containing $g_j$.
  We showed in \cref{lem:onlyLocal}, and it remains true in $G''$, that no vertex of $Y' \setminus \{\pi_i^j, \rho_i^j, \pi_{i'}^j, \rho_{i'}^j\}$ resolves $\{c_j,c'_j\}$.
  We observe that $\pi_i^j$ and $\rho_i^j$ have shortest paths to $c_j$ going through $g_j$ (via a vertex of $\{\tle,\tri,\ble,\bri\}$).
  Similarly $\pi_{i'}^j$ and $\rho_{i'}^j$ have shortest paths to $c_j$ going through $g_j$.
  Therefore no vertex of $\{\pi_i^j, \rho_i^j, \pi_{i'}^j, \rho_{i'}^j\}$ resolves the pair $\{c_j,c'_j\}$.
  Hence by \cref{lem:surrounded}, no vertex outside $X'$ resolves $\{c_j,c'_j\}$.
  The only forced vertices in $X'$ are attached to neighbors of $g_j$, thus they do not resolve $\{c_j,c'_j\}$.
\end{proof}

\subsubsection{\smd-instance has a solution $\Rightarrow$ \sgmd-instance has a solution.}\label{sec:correctnessMDtoGMD}
Let $S$ be a resolving set for the \md-instance.
We show that $S' := S \cap \bigcup_{i \in [k], j \in [m]} V_i^j$ is a solution for \gmd.
The set $S \setminus S'$ is made of $14km+14m$ forced vertices, none of which is in some $V_i^j \cup P(v_{i,1}^j,p_i^j) \cup \{q_i^j\} \cup P(v_{i,t}^j,r_i^j) \cup \{s_i^j\}$.
Thus by \cref{lem:tooFar-useless}, $S \setminus S'$ does not resolve any pair $\{p_i^j, q_i^j\}$ or $\{r_i^j, s_i^j\}$.
Now $S'$ is a set of $k''-(14km+14m)=km$ vertices resolving all the $2km$ pairs $\{p_i^j, q_i^j\}$ and $\{r_i^j, s_i^j\}$.
Again by \cref{lem:tooFar-useless}, this is only possible if $|S' \cap V_i^j|=1$.
Thus $S'$ is a legal set of size $k' = km$.
Let us now check that $S'$ resolves every pair of $\mathcal P$ in the graph $G'$.

By \cref{lem:forcedAndCritical}, $S \setminus S'$ does not resolve any pair of $\mathcal P$ in the graph $G''$.
Thus $S'$ resolves all the pairs of $\mathcal P$ in $G''$.
Since the distances between $V_i^j$ and the critical pairs in the edge and propagation gadgets $V_i^j$ is attached to are the same in $G'$ and in $G''$, $S'$ also resolves every pair of $\mathcal P$ in $G'$.
Thus $S'$ is a solution for the \sgmd-instance.

\subsubsection{\sgmd-instance has a solution $\Rightarrow$ \smd-instance has a solution.}

For every $i \in [k]$, $j \in [m]$, let $$F_i^j := \bigcup_{u \in \{\nw,\se\} \cup N(\{\pi_i^j,\rho_i^j\}) \setminus \{p_i^j, q_i^j, r_i^j, s_i^j\}} \{f(u)\}\text{,~and}$$
$$F_j := \bigcup_{u \in N(g_j) \setminus \{c_j,c'_j\}} \{f(u)\}.$$  
Let $S$ be a solution for \gmd.
Thus $|S|=km$.
Let $F := \bigcup_{i \in [k], j \in [m]} F_i^j \cup \bigcup_{j \in [m]} F_j$.
We show that $S' := S \cup F$ is a solution of \md.
First we observe that $|S'|=km+14km+14m=k''$.
Since the distances between the sets $V_i^j$ and the critical pairs (of $\mathcal P$) are the same in $G'$ and in $G''$, the pairs of $\mathcal P$ are resolved by~$S$.
In what follows, we show that $F$ resolves all the other pairs.
For every $i \in [k]$, $j \in [m]$, we define the subset of vertices: $$\Pi_i^j := \bigcup_{u \in \{\tri,\tle,\bri,\ble,\bbr,\tc\}} P(\pi_i^j,u) \cup P(v_{i,1}^j,p_i^j) \cup \{q_i^j\},$$
$$R_i^j :=  \bigcup_{u \in \{\tri,\tle,\bri,\ble,\tbr,\bc\}} P(\rho_i^j,u) \cup P(v_{i,t}^j,r_i^j) \cup \{s_i^j\},~\text{and}$$
$$G_j := \bigcup_{u \in \{\tri,\tle,\bri,\ble,\text{tl}_{i'}^j,\text{tr}_{i'}^j,\text{bl}_{i'}^j,\text{br}_{i'}^j,\nw,\se,\text{nw}_{i'}^j,\text{se}_{i'}^j\}} P(g_j,u) \cup E_i^j \cup E_{i'}^j \cup \{c'_j\}.$$ 
Informally $\Pi_i^j$ ($R_i^j$, $G_j$, respectively) consists of the vertices on the paths incident to $\pi_i^j$ ($\rho_i^j$, $g_j$, respectively).
Our objective is the following result.

\begin{lemma}\label{thm:resolve}
Every vertex in $G''$ is distinguished by $S'$.
\end{lemma}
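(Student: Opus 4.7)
The plan is to partition $V(G'') \setminus S'$ by role and dispatch each class. First, for each forced-vertex pair $\{f(v), f'(v)\}$, the twin $f'(v) \notin S'$ is distinguished from every $w \notin \{v, f'(v), f(v)\}$ by $f(v)$ itself, since $\dist(f(v), f'(v)) = 1$ while $\dist(f(v), w) \geq 2$, and distinguished from $v$ by any $s \in S' \setminus \{f(v)\}$, because every shortest path from outside $\{v, f(v), f'(v)\}$ to $f'(v)$ passes through $v$, giving $\dist(s, f'(v)) = \dist(s, v) + 1$. This reduces the task to the ``skeleton'' vertices (the $V_i^j$, the gates, the $A_i^j$, and the central vertices $c_j, c'_j, g_j, p_i^j, q_i^j, r_i^j, s_i^j$) together with the internal subdivision vertices of the regions $\Pi_i^j$, $R_i^j$, $G_j$. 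Note that $S$ already resolves every pair in $\mathcal P$, so critical pairs like $\{p_i^j, q_i^j\}$, $\{r_i^j, s_i^j\}$, $\{c_j, c'_j\}$, and the pairs in each $A_i^j$ are distinguished.

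For an internal vertex $u$ of $\Pi_i^j$ lying on the prong $P(\pi_i^j, x)$ at depth $d$ from $\nu(\pi_i^j, x)$, the forced vertex $f(\nu(\pi_i^j, y)) \in S'$ is at distance $d$ when $y = x$ and strictly larger when $y \neq x$, since the shortest path then detours through $\pi_i^j$. Hence the distance vector of $u$ to the six forced vertices $\{f(\nu(\pi_i^j, y)) : y \in \{\tri, \tle, \bri, \ble, \bbr, \tc\}\}$ uniquely encodes the pair $(x, d)$ and distinguishes $u$ from every other internal vertex of $\Pi_i^j$. Symmetric arguments handle $R_i^j$ via the forced vertices around $\rho_i^j$, and $G_j$ via those around $g_j$. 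For the residual skeleton, vertices of $V_i^j \setminus S$ are distinguished from one another by $s \in S \cap V_i^j$ via the path $v_{i,1}^j \ldots v_{i,t}^j$; the gates $\nw$ and $\se$ carry their own attached forced vertices; the gates $\pi_i^j, \rho_i^j, \nee, \sw$ are distinguished by the forced vertices at their neighbors; and vertices of $A_i^j$ are fingerprinted by combining $s \in S \cap V_i^j$ with $s' \in S \cap V_i^{j+1}$ using the distance calculations of \cref{lem:propagation-check}.

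The main obstacle is cross-region distinguishability: two vertices in analogous positions inside different gadgets (for example, depth-$d$ vertices on the same type of prong in $\Pi_i^j$ and $\Pi_{i'}^{j'}$) could yield identical \emph{local} distance vectors. Here the global skeleton intervenes: since $S$ is consistent and each propagation and edge gadget contributes its own ``address'' forced vertices $f(\nw), f(\se)$, and $\{f(\nu(g_j, \cdot))\}$, the distance from any of these globally spread forced vertices varies with the enclosing $(i, j)$, so combining the local prong-depth fingerprint with one well-chosen global vertex separates such apparent twins. I expect the most delicate checks to occur at the interfaces where the propagation, edge, and forced-set gadgets share the blue vertices $\tle, \tri, \ble, \bri$, where the extra paths introduced in \cref{subsec:finalTouches} must break all remaining symmetries without, by \cref{lem:forcedAndCritical}, allowing any forced vertex to resolve a critical pair.
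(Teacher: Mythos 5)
Your overall strategy -- handle the false twins first, then distinguish the remaining vertices region by region using the forced vertices attached at the roots of the ``prongs'' around $\pi_i^j$, $\rho_i^j$ and $g_j$ as landmarks -- is the same as the paper's, and your treatment of the twins and of internal pairs of $\Pi_i^j$ matches Lemmas~\ref{lem:twinsDistinguished} and~\ref{lem:internal pair resolve}. But there are two genuine gaps. First, your claim that the vertices of $A_i^j$ are ``fingerprinted by combining $s \in S \cap V_i^j$ with $s' \in S \cap V_i^{j+1}$'' is false: if $\gamma$ is the consistently selected index, then $\dist(v_{i,\gamma}^j, a_{i,z}^j) = t+2-|z-\gamma|$ and $\dist(v_{i,\gamma}^{j+1}, a_{i,z}^j) = t+3-|z-\gamma|$, so any pair $\{a_{i,z}^j, a_{i,z'}^j\}$ with $z+z' = 2\gamma$ receives identical distances from both selected vertices (and likewise for the $\alpha$'s). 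The paper resolves these pairs with $f(\text{nw}_i^j)$, whose distance to $a_{i,z}^j$ is monotone in $z$ (Lemma~\ref{lem:internal pair resolve.}); some landmark of this kind is indispensable and your proposal omits it.

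Second, the cross-region analysis -- every pair in $X \times \overline{X}$ for each region $X$ -- is where the bulk of the paper's proof lives (Lemmas~\ref{lem:internal-external pair resolve}, \ref{lem:finishingTouchesDistinguished}, \ref{lem:internal -external pair resolve.} and~\ref{lem:internal-external}), and you reduce it to the assertion that ``one well-chosen global vertex'' breaks any remaining symmetry. That is precisely what must be verified, and it does not follow from a generic ``global address'' argument: for instance, when $x$ lies on $P(\pi_i^j,\text{tr}_i^j)$ or $P(\pi_i^j,\text{bl}_i^j)$, a shortest path from $f(\nu(\pi_i^j,u))$ to the outside vertex $y$ may escape through the auxiliary paths $P(\nu(\pi_i^j,\text{tr}_i^j),\text{se}_i^j)$ or $P(\nu(\pi_i^j,\text{bl}_i^j),\text{nw}_i^{j-1})$ rather than through $\pi_i^j$, and the paper must then invoke the specific landmarks $f(\text{se}_i^j)$, $f(\text{se}_i^{j-1})$, $f(\text{nw}_i^j)$, $f(\text{nw}_i^{j-1})$ together with a length comparison against $2t-1$. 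You correctly identify these interfaces as the delicate spot, but identifying the obstacle is not the same as overcoming it; as written the proposal leaves the hardest part of the lemma unproved.
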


\begin{table}[h!]
  \centering
\begin{tabular}{|c|c|}
  \hline
  Symbol/term & Definition/action \\
  \hline
  $\{a_{i,\gamma}^j, \alpha_{i,\gamma}^j\}$ & critical pair of the propagation gadget $P_i^{j,j+1}$ \\
  $A_i^j$ & set of vertices $\bigcup_{\gamma \in [t]}\{a_{i,\gamma}^j, \alpha_{i,\gamma}^j\}$ \\
  $\bbr$ & bottom brown vertex, $\nu(v_{i,t}^j,r_i^j)$ \\
  $\bc$ & bottom cyan vertex (smallest index $\gamma$) \\
  $\ble$ & neighbor of $v_{i,t}^j$ in $P_i^{j-1,j}$ \\
  blue vertex & one of the four neighbors of $V_i^j$ in the propagation gadgets \\
  $\bri$ & neighbor of $v_{i,t}^j$ in $P_i^{j,j+1}$ \\
  brown vertex & vertices $\nu(v_{i,1}^j,p_i^j)$ and $\nu(v_{i,t}^j,r_i^j)$ \\
  $\{c_j,c'_j\}$ & critical pair of the edge gadget $\mathcal{G}(e_j)$ \\
  cyan vertex & neighbor of $V_i^j$ in the paths to $\mathcal{G}(e_j)$ \\
  $E_i^j$  & vertices in the paths from $V_i^j$ to $\mathcal{G}(e_j)$ \\
  $e_{i,\gamma}^j$ & alternative labeling of the cyan vertices, neighbor of $v_{i,\gamma}^j$ \\
  $F$ & set of all forced vertices, $\bigcup_{i \in [k], j \in [m]} F_i^j \cup \bigcup_{j \in [m]} F_j$ \\
  $F_i^j$ & set of forced vertices attached to neighbors of $\{\pi_i^j,\rho_i^j,\nw,\se\}$ \\
  $F_j$ & set of forced vertices attached to neighbors of $g_j$ \\
  $f(v)$ & forced vertex attached to a vertex $v$ \\
  $f'(v)$ & false twin of $f(v)$ \\
  $\mathcal{G}(e_j)$ & edge gadget on $\{g_j,c_j,c'_j\}$ between $V_i^j$ and $V_{i'}^j$, where $e_j \in E(V_i,V_{i'})$ \\
  $\text{mc}_i^j$ & middle cyan vertex (not top nor bottom) \\
  \nege  &  north-east gate of $P_i^{j,j+1}$ \\
  \nwge  &  north-west gate of $P_i^{j,j+1}$ \\
  \nele,  \swle  & resolve the critical pairs of $P_i^{j,{j+1}}$  \\
  \nwle, \sele & do not resolve the critical pairs of $P_i^{j,{j+1}}$ \\
  $\nu(u,v)$ & neighbor of $u$ in the path $P(u,v)$ \\
  $\mathcal{P}$ & list of critical pairs \\
  $\{p_i^j, q_i^j\}$ & pair only resolved by vertices in $V_i^j \cup P(v_{i,1}^j,p_i^j) \cup \{q_i^j\}$ \\
  $\pi_i^j$ & gate of $\{p_i^j, q_i^j\}$, linked by paths to most neighbors of $V_i^j$ \\
  $P_i^{j,j+1}$ & propagation gadget between $V_i^j$ and $V_i^{j+1}$ \\
  $P(u,v)$ & path added in the construction expressly between $u$ and $v$ \\
  $\{r_i^j, s_i^j\}$ & pair only resolved by vertices in $V_i^j \cup P(v_{i,t}^j,r_i^j) \cup \{s_i^j\}$ \\
  $\rho_i^j$ & gate of $\{r_i^j, s_i^j\}$, linked by paths to most neighbors of $V_i^j$ \\
  \sege  &  south-east gate of $P_i^{j,j+1}$ \\  
  \swge  &  south-west gate of $P_i^{j,j+1}$ \\                
  $t$ & size of each $V_i$ \\
  $\tbr$ & top brown vertex, $\nu(v_{i,1}^j,p_i^j)$ \\
  $\tc$ & top cyan vertex (largest index $\gamma$) \\
  $\tle$ & neighbor of $v_{i,1}^j$ in $P_i^{j-1,j}$\\
  $\tri$ & neighbor of $v_{i,1}^j$ in $P_i^{j,j+1}$\\
  $V_i$ & partite set of $G$ \\
  $V_i^j$ & ``copy of $V_i$'', stringed by a path, in $G'$ and $G''$ \\
  $v_{i,\gamma}^j$ & vertex of $V_i^j$ representing $v_{i,\gamma} \in V(G)$ \\
  $W_j$ & endpoints in $V_i^j \cup V_{i'}^j$ of paths from $V_i^j \cup V_{i'}^j$ to $\mathcal{G}(e_j)$ \\
  $\mathcal X$ & set containing all the sets $V_i^j$ for $i \in [k]$ and $j \in [m]$ \\
  $X_j$ & neighbors of $W_j$ on the paths to $\mathcal{G}(e_j)$ (cyan vertices) \\
  \hline
\end{tabular}
\caption{Glossary of the construction.}
\label{tbl:glossary}
\end{table}

We start with the forced vertices and their false twin.
We denote by \emph{$f'(v)$} the false twin of the forced vertex $f(v)$.
\begin{lemma}\label{lem:twinsDistinguished}
All the vertices $f(v)$ and $f'(v)$ are distinguished by $F$.
\end{lemma}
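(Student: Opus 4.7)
\medskip
\noindent\textbf{Proof proposal for Lemma~\ref{lem:twinsDistinguished}.}
The key observation is that, by construction, for every forced vertex $f(v)$ and its false twin $f'(v)$ we have $N(f(v))=\{v,f'(v)\}$ and $N(f'(v))=\{v,f(v)\}$, so the only route from anywhere else in $G''$ into the triangle $\{v,f(v),f'(v)\}$ passes through $v$. The plan is to split the analysis into two cases depending on whether the vertex to distinguish is a forced vertex $f(v)$ itself or a false-twin copy $f'(v)$, and then to argue that a single well-chosen element of $F$ suffices in each sub-case.

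First I would handle $x=f(v)$: since $f(v)\in F$, we have $\dist(f(v),f(v))=0$ while $\dist(f(v),y)\geqslant 1$ for every other vertex $y$, so $f(v)$ is already distinguished from every $y\neq f(v)$ by itself. Next I would turn to $x=f'(v)$ and let $y\neq f'(v)$ be arbitrary. If $y=f(v)$, then $f(v)\in F$ resolves the pair because $\dist(f(v),f'(v))=1$ and $\dist(f(v),f(v))=0$. If $y\notin\{v,f(v),f'(v)\}$, then $y\notin N(f(v))$, hence $\dist(f(v),y)\geqslant 2\neq 1=\dist(f(v),f'(v))$, so again $f(v)$ distinguishes $x$ from $y$.

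The only delicate case is $y=v$, which $f(v)$ itself does \emph{not} resolve (both are at distance $1$ from $f'(v)$, and $v$ is at distance $1$ from $f(v)$ too). Here I would invoke any other forced vertex $f(u)\in F$ with $u\neq v$; such a vertex exists because $|F|=14km+14m>1$. Since $f(u)\notin\{v,f(v),f'(v)\}$, every path from $f(u)$ into the triangle $\{v,f(v),f'(v)\}$ must first hit $v$, so $\dist(f(u),f'(v))=\dist(f(u),v)+1\neq\dist(f(u),v)$, and $f(u)$ resolves $\{f'(v),v\}$.

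I do not expect any genuine obstacle here: the whole argument reduces to the topological fact that a pair of pendant twins attached to $v$ is locally separated from the rest of $G''$ by the cut vertex $v$, so distances to anything outside the triangle are completely controlled by distances to $v$. The slightly non-trivial point is the single case $\{f'(v),v\}$, which forces us to use a \emph{second} forced vertex; but once this is noted the rest is a short distance check and does not need the structure of the reduction beyond the definition of the forced-vertex gadget.
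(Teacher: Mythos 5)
Your proposal is correct and follows essentially the same route as the paper's proof: $f(v)$ is distinguished by being at distance $0$ from itself, $f(v)$ resolves every pair $\{f'(v),w\}$ with $w\neq v$ since its only neighbours are $v$ and $f'(v)$, and the remaining pair $\{f'(v),v\}$ is resolved by any other forced vertex $f(u)\in F$ via $\dist(f(u),f'(v))=\dist(f(u),v)+1$. The case analysis is just slightly more explicit than the paper's, but the ideas are identical.
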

\begin{proof}
  Any vertex $f(v)$ is distinguished by being the only vertex at distance $0$ of itself $f(v) \in F$.
  Since $f(v)$ has only two neighbors $f'(v)$ and $v$, it also resolves every pair $\{f'(v),w\}$ where $w$ is not $v$.
  The pair $\{f'(v),v\}$ is resolved by any vertex $f \in F \setminus \{f(v)\}$.
  Indeed $\dist(f,f'(v))=\dist(f,v)+1$.
  Thus $f'(v)$ is distinguished.
\end{proof}

In general, to show that all the vertices in a set $X$ are distinguished, we proceed in two steps.
First we show that every internal pair of $X$ is resolved.
Then, we prove that every pair of $X \times \overline{X}$ is also resolved.
Let us recall that $\overline X$ is the complement of $x$, here $V(G'') \setminus X$.
For instance, the two following lemmas show that every vertex of $\Pi_i^j$ is distinguished by $S'$.

\begin{lemma}\label{lem:internal pair resolve}
  Every pair of distinct vertices $x, y \in \Pi_i^j$ is resolved by $S'$.
\end{lemma}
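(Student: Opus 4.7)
The plan is to exhibit, for each pair of distinct vertices $x, y \in \Pi_i^j$, a landmark in $S'$ that assigns different distances to $x$ and $y$. I view $\Pi_i^j$ as a ``star'' centered at $\pi_i^j$ with seven rays: the six private paths $P(\pi_i^j, u)$ for $u \in \{\tri, \tle, \bri, \ble, \bbr, \tc\}$, plus the path through $p_i^j$ (the edge $\pi_i^j p_i^j$ followed by $P(v_{i,1}^j, p_i^j)$), augmented by the pendant $q_i^j$ as an ``eighth ray'' of length $1$. The key landmarks are the forced vertices $f(\nu(\pi_i^j, u)) \in F$ for each $u \in \{\tri, \tle, \bri, \ble, \bbr, \tc\}$, the vertex of $S \cap V_i^j$, denoted $v_{i,\gamma}^j$, and, when symmetries between two rays prevent the previous two from working, more distant forced vertices such as $f(\nw)$ or $f(\se)$ attached to gates of adjacent propagation gadgets.

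I would first handle pairs $x, y$ lying on a common ray. On each ray, $\dist(\cdot, \pi_i^j)$ is injective; and for a forced vertex $f(\nu(\pi_i^j, u'))$ attached to a \emph{different} ray $u'$, every shortest path to a vertex $w$ at distance $d$ from $\pi_i^j$ on the ray of $x, y$ has length $d + 2$ (first reach $\pi_i^j$ via $\nu(\pi_i^j, u')$, then descend along the ray). Hence $\dist(x,\pi_i^j) \neq \dist(y,\pi_i^j)$ translates into distinguishing distances. One has to rule out the possibility of a cheaper route passing through the ``outer'' endpoint of the ray, which is easy because each ray has length at least $t$ while the proposed detour would cost at least $2t$.

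Second, for pairs $x, y$ on distinct rays: if they sit at different distances from $\pi_i^j$, the same forced-vertex trick as above resolves them. If they are at the same distance $d$ from $\pi_i^j$, I invoke $v_{i,\gamma}^j$. This landmark attaches asymmetrically to the rays: it reaches the top blue rays $P(\pi_i^j, \tri)$ and $P(\pi_i^j, \tle)$ through $v_{i,1}^j$, the bottom blue rays and the brown ray through $v_{i,t}^j$, the cyan ray through a vertex $v_{i,\gamma'}^j$ of $V_i^j$, and the $p_i^j$-ray from inside $V_i^j$. Combined with the prescribed ray lengths ($t$ for the top blue and $p_i^j$ rays, $2t - 1$ for the bottom blue and brown rays, $t + \gamma' - 1$ for the cyan ray), the resulting distance expressions differ across almost every pair of rays. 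The remaining symmetry is the one between the two top blue rays (and likewise between the two bottom blue rays), where $v_{i,\gamma}^j$ yields identical distances. For those cases I would switch to $f(\se) \in F$ (respectively $f(\nw)$), whose shortest path to $\tri$ crosses $\sw$ (two edges, by the propagation gadget structure), while its shortest path to $\tle$ leaves $P_i^{j,j+1}$ through the $\text{se}_i^{j-1}\text{se}_i^{j}$ edge and then uses the length-$3$ path into $V_i^j$ of the neighboring gadget; the asymmetric detour lengths yield different distances.

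The pair $\{p_i^j, q_i^j\}$, both at distance $1$ from $\pi_i^j$ but on different rays, is resolved by $v_{i,\gamma}^j$: the former is reached in $t + \gamma - 1$ steps through $V_i^j$, whereas the pendant $q_i^j$ is only reachable through $\pi_i^j$, costing $t + \gamma + 1$. The main obstacle is the bookkeeping in the ``different rays, equal distance to $\pi_i^j$'' case: one must verify, ray by ray, that the shortest-path expressions from $v_{i,\gamma}^j$ (or the fallback landmark) really are shortest and differ. This verification is routine but lengthy, and is exactly what the explicit choice of path lengths $t$, $2t - 1$, and $t + \gamma - 1$ in the construction was engineered to underwrite.
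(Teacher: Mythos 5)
Your decomposition of $\Pi_i^j$ into rays around $\pi_i^j$, and your handling of same-ray pairs and of different-ray pairs at \emph{different} distances from $\pi_i^j$, are essentially the paper's argument (the paper probes with the forced vertex on $x$'s own ray rather than on a third ray, but both variants work up to the usual no-shortcut checks). The genuine gap is in the different-ray, \emph{equal}-distance case, where you switch to $v_{i,\gamma}^j \in S$ as the landmark. That landmark fails for far more ray pairs than the two blue--blue symmetries you list, because the construction deliberately matches the ray lengths to the distances inside $V_i^j$: a vertex at distance $d$ from $\pi_i^j$ on the $p_i^j$-ray, on $P(\pi_i^j,\tri)$, or on $P(\pi_i^j,\tle)$ is in all three cases at distance $t+1-d$ from $v_{i,1}^j$, hence at distance $(\gamma-1)+(t+1-d)$ from $v_{i,\gamma}^j$; the two bottom blue rays and the $\bbr$-ray all sit at distance $2t-d$ from $v_{i,t}^j$; the cyan ray entering at $v_{i,\gamma'}^j$ gives the same value $t+\gamma-d$ as the top rays whenever $\gamma\geq\gamma'$; and for $\gamma=t$ even the top and bottom families coincide ($t+\gamma-d=3t-\gamma-d$). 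So for instance a vertex of $P(v_{i,1}^j,p_i^j)$ and a vertex of $P(\pi_i^j,\tle)$ at the same distance $d$ from $\pi_i^j$ are resolved neither by $v_{i,\gamma}^j$ nor by a forced vertex on a third ray (both read $d+2$), and your fallback via $f(\se)$, $f(\nw)$ is only argued for the blue--blue pairs. As written, the case analysis does not cover all pairs.

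The repair is the observation the paper builds the whole proof on, and it makes equal distance the \emph{easy} case rather than the hard one: for $x\in P(\pi_i^j,u)$ and $y\in P(\pi_i^j,u')$ with $u\neq u'$, the probe $f(\nu(\pi_i^j,u))$ is at distance $\dist(\pi_i^j,x)$ from $x$ and $2+\dist(\pi_i^j,y)$ from $y$, while $f(\nu(\pi_i^j,u'))$ reads the same two quantities with the roles swapped. Both probes can fail only if $\dist(\pi_i^j,x)=\dist(\pi_i^j,y)+2$ and $\dist(\pi_i^j,y)=\dist(\pi_i^j,x)+2$ hold simultaneously, which is impossible; in particular, when $\dist(\pi_i^j,x)=\dist(\pi_i^j,y)=d$ the first probe already succeeds since $d\neq d+2$. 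Adopting this two-probe argument eliminates the need for $v_{i,\gamma}^j$ and for the ray-by-ray bookkeeping entirely.
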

\begin{proof}
  Let $U_i^j$ be the set $\{\tle,\tri,\ble,\bri,\tc,\bbr\}$.
  We first consider two vertices $x \neq y \in P(\pi_i^j,u)$, for some $u \in U_i^j$.
  As $\dist_{G''}(\pi_i^j,u)$ is equal to the length of $P(\pi_i^j,u)$, it holds that $\dist_{G''}(\pi_i^j,x) = \dist_{P(\pi_i^j,u)}(\pi_i^j,x) \neq \dist_{P(\pi_i^j,u)}(\pi_i^j,y) = \dist_{G''}(\pi_i^j,y)$.
  Without loss of generality, we assume that $\dist(\pi_i^j,x) < \dist(\pi_i^j,y)$. 
  If $x \neq \pi_i^j$, then $x$ and $y$ have distinct distances to $\nu(\pi_i^j,u)$.
  Hence $\dist(f(\nu(\pi_i^j,u)),x) \neq \dist(f(\nu(\pi_i^j,u)),y)$ and $S'$ resolves $\{x,y\}$.
  Now if $x = \pi_i^j$, then $f(\nu(\pi_i^j,u'))$ resolves $\{x,y\}$ for any $u' \in U_i^j \setminus \{u\}$.
 
 Secondly we consider $x \in P(\pi_i^j,u)$ and $y \in P(\pi_i^j,u')$, for some $u \neq u' \in U_i^j$.
 If $\dist(\pi_i^j,x) \neq 2+\dist(\pi_i^j,y)$, then $f(\nu(\pi_i^j,x))$ resolves $\{x,y\}$.
 Indeed $\dist(f(\nu(\pi_i^j,x)),x) = \dist(\pi_i^j,x) \neq 2 + \dist(\pi_i^j,y) = \dist(f(\nu(\pi_i^j,x)),y)$.
 Else if $\dist(\pi_i^j,x) = 2+\dist(\pi_i^j,y)$, then $f(\nu(\pi_i^j,y))$ resolves $\{x,y\}$ (since $\dist(\pi_i^j,y) \neq 2+\dist(\pi_i^j,x)$).

 Two distinct vertices on $P(v_{i,1}^j,p_i^j)$ are resolved by, say, $f(\nu(\pi_i^j,\bri)) \in F$.
 A vertex of $P(v_{i,1}^j,p_i^j)$ and a vertex of $P(\pi_i^j,u)$, for some $u  \in U_i^j$, are resolved by either $f(\nu(\pi_i^j,u))$ or $f(\nu(\pi_i^j,u'))$ for a $u' \in U_i^j \setminus \{u\}$.
 Finally $q_i^j$ and a vertex in $P(v_{i,1}^j,p_i^j) \setminus \{p_i^j\}$ are resolved by, say, $f(\nu(\pi_i^j,\bri))$, whereas $q_i^j$ and a vertex in $P(p_i^j,u)$ is resolved by either $f(\nu(\pi_i^j,u))$ or $f(\nu(\pi_i^j,u'))$ for a $u' \in U_i^j \setminus \{u\}$. 
 Therefore every pair of distinct vertices in $\Pi_i^j$ is resolved by~$F$, except $\{p_i^j,q_i^j\}$ which is resolved by~$S$.
\end{proof}

\begin{lemma}\label{lem:internal-external pair resolve}
  Every pair $\{x,y\} \in \Pi_i^j \times \overline{\Pi_i^j}$ is resolved by $F$.
\end{lemma}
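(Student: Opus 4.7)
The plan is to exhibit, for every pair $\{x,y\}$ with $x \in \Pi_i^j$ and $y \notin \Pi_i^j$, a forced vertex in $F$ that distinguishes them, using as primary probes the six forced vertices $f(\nu(\pi_i^j, u))$ for $u \in U_i^j := \{\tle, \tri, \ble, \bri, \bbr, \tc\}$. First I would observe that $\Pi_i^j$ is a ``star'' centered at $\pi_i^j$: the six arms $P(\pi_i^j, u)$ for $u \in U_i^j$ (of prescribed lengths in $\{t, 2t-1, t+\gamma-1\}$), the length-$t$ path $P(v_{i,1}^j, p_i^j)$ attached to $\pi_i^j$ via the edge $p_i^j \pi_i^j$, and the pendant $q_i^j$. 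The boundary of $\Pi_i^j$ in $G''$ is $B := U_i^j \cup \{v_{i,1}^j\}$, and since every branch from $\pi_i^j$ to $B$ has length at least $t$, every $y \notin \Pi_i^j$ satisfies $\dist(\pi_i^j, y) \geq t + 1$, with any shortest $(y,\pi_i^j)$-path entering $\Pi_i^j$ through a unique $b(y) \in B$.

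Next, since each probe $f(\nu(\pi_i^j, u))$ is a pendant attached to $\nu(\pi_i^j, u)$, its distance to any vertex $w \notin \{\nu(\pi_i^j,u), f(\nu(\pi_i^j,u)), f'(\nu(\pi_i^j,u))\}$ equals $\dist(\nu(\pi_i^j, u), w) + 1$. Straightforward distance computations, using that the arms are long enough to rule out shortcuts, then yield the following ``signature'' of each vertex $w$ relative to the six probes (indexed by $U_i^j$): for $x$ on arm $u$ with $d := \dist(\pi_i^j, x) \geq 1$ the signature is $(d+2)$ everywhere except the entry at position $u$ which is $d$; for $x$ in the non-arm part ($x \in \{q_i^j, \pi_i^j\} \cup P(v_{i,1}^j, p_i^j)$) the signature is the constant $(\dist(\pi_i^j, x) + 2)$; and for $y \notin \Pi_i^j$ with $D := \dist(\pi_i^j, y)$, the signature is the constant $D+2$ if $b(y) = v_{i,1}^j$, and otherwise $(D+2)$ everywhere except $D$ at position $b(y) \in U_i^j$.

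The case analysis is then short. If $x$ lies on arm $u$, compare the signatures of $x$ and $y$: when $b(y) = u$, then $D > d$ since $y$ is strictly past the boundary, so the coordinate at $u$ separates them; when $b(y) \neq u$, equality of signatures would force the contradictory system $d = D+2$ and $d+2 = D$. If $x$ lies in the non-arm part, a collision requires $D = \dist(\pi_i^j, x)$; if $b(y) \in U_i^j$ then the coordinate at $b(y)$ gives $D \neq D + 2 = \dist(\pi_i^j, x) + 2$, while if $b(y) = v_{i,1}^j$ then $D \geq t+2 > \dist(\pi_i^j, x)$ outright. The exceptional cases $y \in \{f(\nu(\pi_i^j,u)), f'(\nu(\pi_i^j,u))\}$ and $x = \nu(\pi_i^j, u)$ are handled by \cref{lem:twinsDistinguished} and by using a probe $f(\nu(\pi_i^j, u'))$ with $u' \neq u$, respectively.

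The main obstacle I anticipate is rigorously establishing the signature formulas, in particular ruling out unexpected shortcuts through the $\rho_i^j$ side, the propagation-gadget gates $\nw$ and $\se$, or the edge-gadget vertex $g_j$. The arm lengths prescribed in \cref{sec:forcedSet} and the auxiliary paths added in \cref{subsec:finalTouches} were chosen precisely so that the natural routes through $\pi_i^j$ are the shortest ones; verifying this amounts to comparing candidate detour lengths (typically at least $2t$ through $\rho_i^j$ or the propagation gates) to the straight-through length, which is always at most $2t+1$, so the computation is case-by-case but routine.
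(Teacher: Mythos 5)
Your overall strategy --- probing with the six forced vertices $f(\nu(\pi_i^j,u))$ for $u \in U_i^j$ --- is the same as the paper's, but your structural premise about $\Pi_i^j$ is wrong, and this leaves a genuine gap. You treat $\Pi_i^j$ as a star whose only exits into the rest of $G''$ are the arm endpoints $B = U_i^j \cup \{v_{i,1}^j\}$ (plus the pendant twins), so that every $y \notin \Pi_i^j$ is at distance at least $t+1$ from $\pi_i^j$ and admits a well-defined entry point $b(y) \in B$. This ignores the four paths added in \cref{subsec:finalTouches}: $P(\nu(\pi_i^j,\tri),\se)$ and $P(\nu(\pi_i^j,\ble),\text{nw}_i^{j-1})$ (and their $\rho_i^j$-analogues) attach to \emph{internal} vertices of two arms --- in fact to neighbors of $\pi_i^j$ --- and their vertices lie in $\overline{\Pi_i^j}$ at distance as small as $2$ from $\pi_i^j$. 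Concretely, for any $2 \leqslant d \leqslant t-1$ let $x$ be the vertex of $P(\pi_i^j,\tri)$ at distance $d$ from $\pi_i^j$, and $y$ the vertex of $P(\nu(\pi_i^j,\tri),\se)$ at distance $d-1$ from $\nu(\pi_i^j,\tri)$. Both are at distance $d$ from $f(\nu(\pi_i^j,\tri))$ and at distance $d+2$ from every other probe $f(\nu(\pi_i^j,u'))$, so your six-coordinate signatures coincide, and neither of your case clauses applies: the shortest $(y,\pi_i^j)$-path enters through $\nu(\pi_i^j,\tri) \notin B$, so $b(y)$ is undefined, and the inequality $D > d$ you rely on fails. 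Your closing remark that the auxiliary paths of \cref{subsec:finalTouches} "were chosen precisely so that the natural routes through $\pi_i^j$ are the shortest ones" is backwards: they were added so that forced vertices do not resolve critical pairs of the propagation gadgets, and they deliberately create alternative shortest routes leaving the arms at $\nu(\pi_i^j,\tri)$ and $\nu(\pi_i^j,\ble)$.

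The paper closes exactly this hole by splitting off the case $x \in P(\pi_i^j,u)$ with $u \in \{\tri,\ble\}$: when every shortest path from $f(\nu(\pi_i^j,u))$ to $y$ escapes along $P(\nu(\pi_i^j,\tri),\se)$ or $P(\nu(\pi_i^j,\ble),\text{nw}_i^{j-1})$, it invokes the forced vertices $f(\se)$, $f(\text{se}_i^{j-1})$, $f(\nw)$, $f(\text{nw}_i^{j-1})$ attached to the propagation gates, which are not among your probes. Your treatment of the remaining arms $u \in U_i^j \setminus \{\tri,\ble\}$ and of $x \in P(v_{i,1}^j,p_i^j) \cup \{q_i^j\}$ matches the paper's argument in spirit, but as written your proof does not establish the lemma.
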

\begin{proof}
  Again let $U_i^j$ be the set $\{\tle,\tri,\ble,\bri,\tc,\bbr\}$.
  We first assume $x$ is in $P(\pi_i^j,u)$ for some $u \in U_i^j \setminus \{\tri,\ble\}$.
  Let $y$ be a vertex of $\overline{\Pi_i^j}$ such that $\dist(f(\nu(\pi_i^j,u)),x) = \dist(f(\nu(\pi_i^j,u)),y)$, otherwise $f(\nu(\pi_i^j,u))$ already resolves $\{x,y\}$.
  Every shortest path from $f(\nu(\pi_i^j,u))$ to $y$ go through $\pi_i^j$.
  One can observe that there is a $u' \in U_i^j \setminus \{u\}$ such that $f(\nu(\pi_i^j,u'))$ has a shortest path also going through $\pi_i^j$.
  Hence $f(\nu(\pi_i^j,u'))$ has the same distance to $y$ (as $f(\nu(\pi_i^j,u))$) but a larger distance to $x$.
  Hence $f(\nu(\pi_i^j,u'))$ resolves $\{x,y\}$.

  We now consider an $x \in P(\pi_i^j,u)$ for some $u \in \{\tri,\ble\}$.
  Again let $y$ be a vertex of $\overline{\Pi_i^j}$ such that $\dist(f(\nu(\pi_i^j,u)),x) = \dist(f(\nu(\pi_i^j,u)),y)$.
  If all the shortest paths of $f(\nu(\pi_i^j,u))$ to $y$ goes through $\pi_i^j$, we conclude as in the previous paragraph. 
  So they go through $P(\nu(\pi_i^j,u),\se)$ (if $u=\tri$) or $P(\nu(\pi_i^j,u),\text{nw}_i^{j-1})$ (if $u=\ble$).
  Since $\dist(f(\nu(\pi_i^j,u)),x) \leqslant 2t-1$, it also holds that $\dist(f(\nu(\pi_i^j,u)),y) \leqslant 2t-1$.
  The path $P(\nu(\pi_i^j,\tri),\se)$ has length $t$ and the path $P(\nu(\pi_i^j,\ble),\text{nw}_i^{j-1})$ has length $2t-1$.
  Therefore one of $f(\se)$, $f(\text{se}_i^{j-1})$, $f(\nw)$, $f(\text{nw}_i^{j-1})$ resolves $\{x,y\}$.

  We now assume $x$ is in $P(v_{i,1}^j,p_i^j) \cup \{q_i^j\}$ and $y \in \overline{\Pi_i^j}$.
  Then $f(\nu(\pi_i^j,\bri))$ resolves $\{x,y\}$ if $y$ is not in the path $P(\nu(\pi_i^j,\tri),\se)$ or $P(\nu(\pi_i^j,u),\text{nw}_i^{j-1})$.
  Otherwise at least one of $f(\nu(\pi_i^j,\bri))$, $f(\nu(\pi_i^j,\tri))$, $f(\nu(\pi_i^j,\ble))$ resolves $\{x,y\}$.
  In conclusion, every pair of vertices $\{x,y\} \in \Pi_i^j \times \overline {\Pi_i^j}$ is resolved by $F$. 
 \end{proof}
 
 \cref{lem:internal pair resolve,lem:internal-external pair resolve} prove that every vertex in $\Pi_i^j$ is distinguished by $S'$.
 Using the same arguments, we get symmetrically that every vertex of $R_i^j$ is distinguished by $S'$.

 \begin{lemma}\label{lem:finishingTouchesDistinguished}
 All the vertices in the paths $P(\nu(\pi_i^j,\tri),\se)$, $P(\nu(\rho_i^j,\tri),\se)$, $P(\nu(\pi_i^j,\ble),$ $\text{nw}_i^{j-1})$, $P(\nu(\rho_i^j,\ble),\text{nw}_i^{j-1})$ are distinguished by $F$.
 \end{lemma}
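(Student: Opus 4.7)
The plan is to adapt the two-step strategy of \cref{lem:internal pair resolve,lem:internal-external pair resolve} to the set $Q := P(\nu(\pi_i^j,\tri),\se) \cup P(\nu(\rho_i^j,\tri),\se) \cup P(\nu(\pi_i^j,\ble),\text{nw}_i^{j-1}) \cup P(\nu(\rho_i^j,\ble),\text{nw}_i^{j-1})$: first show every pair $\{x,y\} \subseteq Q$ is resolved by $F$, then show every pair in $Q \times \overline Q$ is resolved by $F$. The crucial observation is that each of the four paths has a forced vertex attached at \emph{both} of its endpoints. At the ``outer'' side, $f(\se)$ and $f(\text{nw}_i^{j-1})$ exist because $\se$ and $\text{nw}_i^{j-1}$ are gates that receive forced vertices (\cref{sec:forcedVertex}). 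At the ``inner'' side, the four vertices $\nu(\pi_i^j,\tri),\nu(\rho_i^j,\tri),\nu(\pi_i^j,\ble),\nu(\rho_i^j,\ble)$ are all in $N(\{\pi_i^j,\rho_i^j\}) \setminus \{p_i^j,q_i^j,r_i^j,s_i^j\}$, hence also receive forced vertices. For any such forced vertex $f(v)$, $\dist(f(v),z) = 1 + \dist(v,z)$ for every $z \in V(G'')$.

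For the internal pairs (step one), two distinct vertices of the same path have distinct distances to each of its endpoints, so either endpoint-forced-vertex distinguishes them. For a pair with one vertex on $P(\nu(\pi_i^j,\tri),\se)$ and one on $P(\nu(\rho_i^j,\tri),\se)$, the two paths have different lengths ($t$ versus $2t-1$) and the only short connection between their non-$\se$ endpoints goes via $\pi_i^j$ and then $P(\rho_i^j,\tri)$ or vice versa, which is much longer than either path; hence $f(\nu(\pi_i^j,\tri))$ and $f(\nu(\rho_i^j,\tri))$ give distances to $Q_2$-vertices that grossly exceed their distances to $Q_1$-vertices, resolving any such pair. The symmetric argument handles $P(\nu(\pi_i^j,\ble),\text{nw}_i^{j-1})$ versus $P(\nu(\rho_i^j,\ble),\text{nw}_i^{j-1})$. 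Finally, any pair with one vertex on a ``$\se$-path'' and one on a ``$\text{nw}_i^{j-1}$-path'' is resolved by $f(\se)$, because the two gates are at distance $\Theta(t)$ from each other across the propagation gadget.

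For step two, suppose $x$ lies on one of the four paths, say on $Q_\ell = P(u,u')$ of length $L_\ell \in \{t,2t-1\}$, at distance $d$ from the inner endpoint $u$. Then $\dist(f(u),x) = d+1$ and $\dist(f(u'),x) = L_\ell - d + 1$. For an external $y \in \overline Q$ that $f(u)$ fails to resolve from $x$, we get $\dist(u,y) = d$; if in addition $f(u')$ fails, then $\dist(u',y) = L_\ell - d$, placing $y$ on a $u$-to-$u'$ shortest path of total length $L_\ell$. The only such path in $G''$ is $Q_\ell$ itself, since every competing route between $u$ and $u'$ must traverse $\pi_i^j$ or $\rho_i^j$ and a second $P(\pi_i^j,\cdot)$ or $P(\rho_i^j,\cdot)$ path of length at least $t$, yielding strictly more than $L_\ell$ edges. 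Hence $y \in Q_\ell \subseteq Q$, contradicting $y \in \overline Q$. So one of $f(u),f(u')$ resolves $\{x,y\}$.

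The main obstacle is bookkeeping the alternative routes between the endpoints of each $Q_\ell$ in order to rule out an external $y$ equidistant to both $f(u)$ and $f(u')$. This reduces to confirming, from the path-lengths fixed in \cref{sec:forcedSet,subsec:finalTouches}, that every cycle through $\{u,u'\}$ has length strictly greater than $2L_\ell$; this is straightforward since any such cycle must use at least one of the long $\Theta(t)$-length paths incident to $\pi_i^j$, $\rho_i^j$, or the propagation gadget. Once this length inequality is in hand, the two-step argument above goes through, completing the lemma.
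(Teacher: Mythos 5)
Your overall strategy is the same as the paper's: each of the four paths carries a forced vertex at both of its endpoints (the inner endpoints lie in $N(\{\pi_i^j,\rho_i^j\})\setminus\{p_i^j,q_i^j,r_i^j,s_i^j\}$ and the outer endpoints are gates), and the distances to these forced vertices pin down a vertex of the path. Your step two is essentially sound. However, step one contains a concrete error in the cross-path case. You claim that a pair with one vertex on a ``$\se$-path'' and one on a ``$\text{nw}_i^{j-1}$-path'' is resolved by $f(\se)$ because the two gates are far apart. They are not far enough: $\dist(\se,\text{nw}_i^{j-1})=t+2$ (via the edge $\text{se}_i^{j-1}\text{se}_i^j$ and the $a$-path of $P_i^{j-1,j}$), while the paths themselves have length up to $2t-1$. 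Concretely, take $x$ on $P(\nu(\rho_i^j,\tri),\se)$ at distance $t+3$ from $\se$, and $y$ on $P(\nu(\pi_i^j,\ble),\text{nw}_i^{j-1})$ at distance $1$ from $\nu(\pi_i^j,\ble)$: then $\dist(\se,y)=1+1+1+t=t+3=\dist(\se,x)$ (the shortest path from $y$ to $\se$ hops through $\pi_i^j$ and $\nu(\pi_i^j,\tri)$), so $f(\se)$ fails; one can also check that $\dist(\text{nw}_i^{j-1},x)=\dist(\text{nw}_i^{j-1},y)=2t-2$, so $f(\text{nw}_i^{j-1})$ fails as well. The pair is still resolved by $F$ --- for instance $f(\nu(\rho_i^j,\tri))$ is at distance $t-3$ from $x$ but roughly $3t$ from $y$, since every route between the $\pi_i^j$-side and the $\rho_i^j$-side of the gadget has length about $3t$ --- but your argument as written does not establish this, so the cross-path sub-case needs to be redone using the inner forced vertices rather than the gate ones.

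A second, more minor, inaccuracy: in step two you assert that every competing $u$--$u'$ route must pass through $\pi_i^j$ or $\rho_i^j$ and hence has length well above $L_\ell$. It need not: for $P(\nu(\pi_i^j,\tri),\se)$ there is a route of length exactly $t+1$ that leaves $\nu(\pi_i^j,\tri)$ along $P(\pi_i^j,\tri)$ towards $\tri$ and enters $\se$ through $\sw$. The conclusion survives --- that route is one edge longer than $L_\ell$, so no external vertex lies on a $u$--$u'$ geodesic --- but the margin is a single edge, so this case should be checked explicitly rather than dismissed as straightforward.
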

 \begin{proof}
   Any vertex $x \in P(\nu(\pi_i^j,\tri),\se)$ is uniquely determined by its distances to $f(\se)$, $f(\text{se}_i^{j-1})$, and $\nu(\pi_i^j,\tri)$.
   Any vertex $x \in P(\nu(\pi_i^j,\ble),\text{nw}_i^{j-1})$ is uniquely determined by its distances to $f(\nw)$, $f(\text{nw}_i^{j-1})$, and $\nu(\pi_i^j,\ble)$.
   The two other cases are symmetric.
 \end{proof}

 So far we showed that the vertices added in the forced set and forced vertex gadgets are all distinguished.
 We now focus on the vertices in propagation gadgets.
 Let $\Delta_i := A_i^j \cup \{\nw,\nee,\sw,\se\}$.
 
\begin{lemma}\label{lem:internal pair resolve.}
 Every pair of distinct vertices $x, y \in \Delta_i^j$ is resolved by $S'$.
\end{lemma}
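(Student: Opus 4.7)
The plan is to distinguish most of $\Delta_i^j$ using the forced vertices $f(\nw), f(\se) \in F \subseteq S'$, and to handle the remaining coincidences by the fact that $S \subseteq S'$ is a solution of the \sgmd-instance and therefore already resolves every critical pair $\{a_{i,\gamma}^j, \alpha_{i,\gamma}^j\}$ inside the propagation gadget $P_i^{j,j+1}$.

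Concretely I would tabulate the pair $D(x) := (\dist(f(\nw), x), \dist(f(\se), x))$ for every $x \in \Delta_i^j$. Working inside $P_i^{j,j+1}$ and using that $f(\nw)$ and $f(\se)$ are pendant to $\nw$ and $\se$, the gate adjacencies given by the construction ($\nw$ with $\nee$, $a_{i,1}^j$, $\alpha_{i,1}^j$; $\se$ with $\sw$, $a_{i,t}^j$, $\alpha_{i,t}^j$; $\nee$ with $\alpha_{i,1}^j$; $\sw$ with $a_{i,t}^j$) together with the two length-$(t-1)$ chains $a_{i,1}^j \cdots a_{i,t}^j$ and $\alpha_{i,1}^j \cdots \alpha_{i,t}^j$ yield
\[
D(\nw)=(1,t+2),\ D(\nee)=(2,t+2),\ D(\sw)=(t+2,2),\ D(\se)=(t+2,1),
\]
\[
\text{and}\quad D(a_{i,\gamma}^j)=D(\alpha_{i,\gamma}^j)=(\gamma+1,\,t-\gamma+2)\quad\text{for every }\gamma\in[t].
\]

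Inspecting this table, the map $\gamma \mapsto (\gamma+1, t-\gamma+2)$ is injective on $[t]$, and none of its values matches one of $(1,t+2)$, $(2,t+2)$, $(t+2,2)$, $(t+2,1)$: equating first coordinates would force $\gamma \in \{0,1,t+1\}$, and the values $\gamma=0$ and $\gamma=t+1$ lie outside $[t]$, while $\gamma=1$ gives second coordinate $t+1\neq t+2$. So the only coincidences in $D$ are the critical pairs $\{a_{i,\gamma}^j, \alpha_{i,\gamma}^j\}$, which are resolved by $S$, and every other pair of $\Delta_i^j$ is resolved by $f(\nw)$ or $f(\se)$.

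The main obstacle is showing that the values of $D$ above really are the shortest-path distances in all of $G''$, not just within $P_i^{j,j+1}$. One must rule out shorter routes through the blue-vertex attachments of the gates, through the paths added to $g_j$ and to $\pi_i^j, \rho_i^j$ in \cref{subsec:finalTouches}, and through the $\nw$-$\nw$ and $\se$-$\se$ cross-gadget edges. This is routine but requires checking, case by case, that any detour must leave $\Delta_i^j$ at one gate and re-enter at another, and that by the length parameters fixed throughout the construction every such detour has length strictly greater than the direct value tabulated above.
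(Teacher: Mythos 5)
Your proposal is correct and follows essentially the same approach as the paper: both arguments use only the forced vertices $f(\text{nw}_i^j)$ and $f(\text{se}_i^j)$ to separate every non-critical pair of $\Delta_i^j$ and then invoke the fact that $S$ already resolves the critical pairs $\{a_{i,\gamma}^j,\alpha_{i,\gamma}^j\}$. Your distance-vector table is just a compact repackaging of the paper's case analysis (and your tabulated values agree with the paper's, e.g.\ $\dist(\text{nw}_i^j,a_{i,\gamma}^j)=\dist(\text{nw}_i^j,\alpha_{i,\gamma}^j)=\gamma$), while the shortcut-free check you defer is left implicit in the paper as well.
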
  
\begin{proof}
  Since the distances between vertices of $V_i^j$ and vertices of $\Delta_i^j$ are the same between $G'$ and $G''$, $S$ resolves all the critical pairs $\{a_{i,\gamma}^j,\alpha_{i,\gamma}^j\}$.
  Thus we turn our attention to the pairs which are not critical pairs.
  Since $\dist(\nw,a_{i,\gamma}^j)=\gamma$ and $\dist(\nw,\alpha_{i,\gamma}^j)=\gamma$, every pair $\{a_{i,\gamma}^j,a_{i,\gamma'}^j\}$, $\{a_{i,\gamma}^j,\alpha_{i,\gamma'}^j\}$, or $\{\alpha_{i,\gamma}^j,\alpha_{i,\gamma'}^j\}$, with $\gamma \neq \gamma'$ is resolved by $f(\nw)$.
  
  Gate $\nw$ (resp.~$\se$) and any other vertex in $\Delta_i^j$ is resolved by $f(\nw)$ (resp.~$f(\se)$).
  Gate $\nee$ (resp.~$\sw$) is resolved from any vertex of $\Delta_i^j \setminus \{a_{i,1}^j,\alpha_{i,1}^j\}$ (resp.~$\Delta_i^j \setminus \{a_{i,t}^j,\alpha_{i,t}^j\}$) by $f(\nw)$ (resp.~$f(\se)$).
  Finally, $\nee$ (resp.~$\sw$) and a vertex of $\{a_{i,1}^j,\alpha_{i,1}^j\}$ (resp.~$\{a_{i,t}^j,\alpha_{i,t}^j\}$) is resolved by $f(\se)$ (resp.~$f(\nw)$).
\end{proof}

Now when we check that a pair made of a vertex in $\Delta_i^j$ and a vertex outside $\Delta_i^j$ is resolved, we can further assume that the second vertex is not in some $\Pi_i^j \cup R_i^j$ since we already showed that these vertices were distinguished.

\begin{lemma}\label{lem:internal -external pair resolve.}
Every pair $\{x,y\} \in \Delta_i^j \times \overline{\Delta_i^j}$ is resolved by $S'$.
\end{lemma}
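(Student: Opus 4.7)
The plan is to mirror the strategy of \cref{lem:internal-external pair resolve}: pick a small separator around $\Delta_i^j$, apply \cref{lem:surrounded} to confine the search to a bounded neighborhood, and finish with a short case analysis. Because \cref{lem:internal pair resolve,lem:internal-external pair resolve,lem:finishingTouchesDistinguished,lem:twinsDistinguished} have already distinguished all vertices of any $\Pi_{i'}^{j'} \cup R_{i'}^{j'}$, all finishing-touches vertices, and every forced vertex together with its false twin, I may assume $y$ lies outside all of these sets.

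First I would pick a separator $Y$ consisting of the blue vertices $\tri, \bri, \text{tl}_i^{j+1}, \text{bl}_i^{j+1}$ (which cut the four bridges from the gates to $V_i^j$ and $V_i^{j+1}$), the gates $\text{nw}_i^{j-1}, \text{se}_i^{j-1}, \text{nw}_i^{j+1}, \text{se}_i^{j+1}$ of the two neighboring propagation gadgets (which cut the inter-gadget edges), and the first inner vertex of each path $P(g_j,\nw), P(g_j,\se)$ whenever $e_j$ (resp.~$e_{j+1}$) is incident to $V_i$. Let $X$ be the component of $G''-Y$ containing $\Delta_i^j$. By construction $X$ is essentially $\Delta_i^j$ together with the four vertices $f(\nw), f'(\nw), f(\se), f'(\se)$ and the middle vertex of the length-$3$ bridge from $\text{tl}_i^{j+1}$ to $\se$, all of which are easily distinguished from $\Delta_i^j$ --- either by \cref{lem:twinsDistinguished} (for the forced ones) or by a direct signature using $f(\se)$ and a blue-vertex forced vertex (for the bridge vertex).

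For $y \notin X$, I would let the forced vertices $f(\nw)$ and $f(\se)$ do the work. Every $x \in \Delta_i^j$ lies within distance $t+3$ of both of these; every shortest path from $f(\nw)$ or $f(\se)$ to $y$ crosses $Y$, and the lengths of the bridges through $Y$ --- whether along the gate-to-gate edges $\text{nw}_i^{j-1}\nw$, $\text{se}_i^{j-1}\se$ followed by a traversal of a full neighboring propagation gadget, through an edge gadget (length at least $t+2$), or through a forcing-set gadget --- all force $y$ to be at distance strictly greater than $t+3$ from at least one of $f(\nw), f(\se)$. A short arithmetic check then shows that at least one of the two forced vertices distinguishes $\{x,y\}$. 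The residual case, when $y$ happens to sit at exactly the same pair of distances to $\nw$ and $\se$ as some $x$, is killed by adding a third closer forced vertex such as $f(\nu(\pi_i^j,\bri))$ to the analysis: its distance to $\Delta_i^j$ is controlled, while its distance to such a $y$ is strictly larger.

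I expect the main technical obstacle to be precisely this ``far $y$'' subcase, specifically when $y$ inhabits a neighboring propagation gadget $P_i^{j-1,j}$ or $P_i^{j+1,j+2}$ and is placed symmetrically across the $\nw$- or $\se$-gate from some $x \in A_i^j$. Pinning down the distances across the inter-gadget edges $\text{nw}_i^{j-1}\nw$ and $\text{se}_i^{j-1}\se$, and ruling out all such mirror coincidences using the third forced vertex, is the technical heart of the argument. Once these inequalities are explicitly written down, the remaining verification reduces to the distance calculations already performed in \cref{lem:propagation-check,lem:onlyLocal}.
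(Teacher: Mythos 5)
There is a genuine gap. Your central quantitative claim --- that every not-yet-distinguished $y\notin\Delta_i^j$ is at distance strictly greater than $t+3$ from at least one of $f(\nw)$, $f(\se)$ --- is false. The vertices of $V_i^j$ are not among those previously shown distinguished (they are handled only at the very end of the section), yet $y=v_{i,\gamma}^j$ satisfies $\dist(f(\se),y)=\gamma+3$ and $\dist(f(\nw),y)=t-\gamma+3$, both at most $t+3$; similar near-collisions occur for vertices sitting just behind the gates on the paths $P(g_j,\se)$ and on the neighboring propagation gadgets. So the case you dismiss as ``residual'' is in fact the bulk of the work, and a single extra vertex such as $f(\nu(\pi_i^j,\bri))$ is neither shown nor likely to dispose of all of it. Two further problems: the set $Y$ you propose does not separate $\Delta_i^j$ from the rest of $G''$, because the finishing-touch paths $P(\nu(\pi_i^j,\tri),\se)$ and $P(\nu(\rho_i^j,\tri),\se)$ run from $\se$ to $\pi_i^j$ and $\rho_i^j$ avoiding $Y$ entirely, and from there one reaches $\ble$, hence $V_i^j$ and the whole graph; and \cref{lem:surrounded} only propagates \emph{non}-resolution outward from a separator, so it cannot be used to ``confine the search'' when the goal is to show that pairs \emph{are} resolved.

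The paper's actual proof works by a different mechanism that your proposal is missing. It assumes $f(\se)$ fails to resolve $\{x,y\}$ and then branches on the neighbor of $\se$ through which a shortest path from $f(\se)$ to $y$ departs: if it leaves towards $\text{se}_i^{j-1}$ or $\text{se}_i^{j+1}$, the forced vertex on that neighboring gate sees $y$ strictly closer than it sees $x$ (whose shortest path must come back through $\se$); if it leaves towards $g_j$, $\text{tl}_i^{j+1}$, or $V_i^j$, then $f(\nw)$ or $f(\text{se}_i^{j-1})$ resolves the pair. The forced vertices $f(\text{se}_i^{j-1})$ and $f(\text{se}_i^{j+1})$ on the adjacent columns' gates are essential to this directional argument and never appear in your proposal. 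To repair your write-up you would either need to adopt this departure-direction case analysis, or replace the false uniform threshold with explicit two-coordinate distance signatures $(\dist(f(\nw),\cdot),\dist(f(\se),\cdot))$ and verify non-collision for every remaining location of $y$, which is a substantially longer computation.
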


\begin{proof}
  We may assume that $y$ is not a vertex that was previously shown distinguished.
  Thus $y$ is not in some $\Pi_i^j \cup R_i^j$ nor in a path of \cref{lem:finishingTouchesDistinguished}.
  Then we claim that the pair $\{x,y\}$ is resolved by at least one of $f(\se)$, $f(\text{se}_i^{j-1})$, $f(\text{se}_i^{j+1})$, $f(\nw)$.
  Indeed assume that $f(\se)$ does not resolve $\{x,y\}$, and consider a shortest path from $f(\se)$ to $y$.
  Either this shortest path goes through $\text{se}_i^{j-1}$ (resp.~$\text{se}_i^{j+1}$), and in that case $f(\text{se}_i^{j-1})$ (resp.~$f(\text{se}_i^{j+1})$) resolves $\{x,y\}$.
  Either it takes the path to $g_j$ (if $e_j$ has an endpoint in $V_i$) or to $\text{tl}_i^{j+1}$, and then $f(\nw)$ resolves $\{x,y\}$.
  Or it takes a path to $V_i^j$, and then $f(\text{se}_i^{j-1})$ resolves $\{x,y\}$.
\end{proof}

\cref{lem:internal pair resolve.,lem:internal -external pair resolve.} show that that every vertex in $\Delta_i^j$ is distinguished by $S'$.
The common neighbor of $\text{se}_i^{j-1}$ and $\tle$ is distinguished by $\{f(\text{se}_i^{j-1}),f(\nu(\pi_i^j,\tle))\}$.
We are now left with showing that the vertices in the edge gadgets, in the sets $V_i^j$, and in the paths incident to the edge gadgets, are distinguished.

\begin{lemma} \label{lem:internal}
Every pair of distinct vertices $x, y \in G_j$ is resolved by $S'$.
\end{lemma}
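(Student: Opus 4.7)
The plan is to exhibit, for every pair $\{x,y\} \subseteq G_j$ of distinct vertices, an element of $S'$ that resolves it. The guiding picture is that $G_j$ decomposes into \emph{arms}: the twelve paths $P(g_j,u)$ with $u$ a blue vertex or an NW/SE gate on the two rows $i,i'$ with $e_j \in E(V_i,V_{i'})$; the two ``$g_j$-cyan arms'' in $E_i^j \cup E_{i'}^j$ ending at $g_j$; the up to four ``$c_j$-cyan arms'' ending at $c_j$; and the singleton $\{c'_j\}$. For every such arm-basepoint $\nu(g_j,u) \in N(g_j) \setminus \{c_j,c'_j\}$ the forced vertex $f(\nu(g_j,u))$ lies in $F \subseteq S'$. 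These probes do the bulk of the work: from $f(\nu(g_j,u))$, a vertex at depth $d$ on arm $u$ is reached in $d$ steps along the arm, a vertex at depth $d$ on any other $g_j$-arm is reached in $d+2$ steps (any shortest path is forced through $g_j$), and a vertex at depth $d$ on a $c_j$-arm is reached in $d+3$ steps.

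\textbf{Same-arm and easy cross-arm pairs.} For $x \neq y$ on a common arm, the distance from $f(\nu(g_j,u))$ to each is an injective affine function of the depth along that arm, so the probe separates them. For $x,y$ on distinct $g_j$-arms $u,u'$ at depths $d_x,d_y$, the two probes $f(\nu(g_j,u))$ and $f(\nu(g_j,u'))$ cannot \emph{both} fail to resolve, since that would force $d_x = d_y+2$ and $d_y = d_x+2$ simultaneously. The same kind of constant-offset analysis resolves every pair mixing a $g_j$-arm with a $c_j$-arm, as well as every pair of $c_j$-arm vertices at \emph{different} depths from $c_j$.

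\textbf{Pairs involving $c_j$ or $c'_j$.} The critical pair $\{c_j,c'_j\} \in \mathcal P$ is resolved by $S$ by hypothesis. For every other pair containing $c_j$ or $c'_j$, each probe $f(\nu(g_j,u))$ lies at distance exactly $3$ from both $c_j$ and $c'_j$; comparing this fixed value with the depth-based distance to the partner vertex yields a resolving probe, and in the single degenerate case where that depth equals $3$, switching to a probe on another arm introduces an extra $+2$ offset that restores resolution.

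\textbf{Main obstacle: two $c_j$-arms at equal depth.} The only genuinely delicate case is a pair $\{x,y\}$ lying on two distinct $c_j$-arms at the same depth $d$ from $c_j$, since every probe $f(\nu(g_j,u))$ then yields $d+3$ to both. If the two arms lie on opposite sides of $\mathcal{G}(e_j)$ (one in $E_i^j$, the other in $E_{i'}^j$), I use the chosen vertex of $S\cap V_i^j$: it reaches the $V_i^j$-side arm via a short route staying inside the $V_i^j$-side of the construction, but must detour through $c_j$ (at distance at least $t+2$) to reach the $V_{i'}^j$-side arm, which cleanly separates the two distances. If both arms sit on the same side, then three cyan vertices are present there, and the edge introduced in \cref{subsec:finalTouches} between $\nu(g_j,\nw)$ and $\nu(c_j,\bc)$ shortcuts the bottom arm; the probe $f(\nu(g_j,\nw))$ then has different distances to symmetric vertices of $P(c_j,\tc)$ and $P(c_j,\bc)$, closing the argument. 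This last case is precisely the hard part, and the finishing-touches edge is the essential ingredient that makes it go through; everything else reduces to routine length arithmetic on the starburst of forced vertices around $g_j$.
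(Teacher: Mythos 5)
Your proposal is correct and follows essentially the same route as the paper's proof: decompose $G_j$ into the paths radiating from $g_j$ and $c_j$, use the forced vertices $f(\nu(g_j,u))$ as probes for same-arm and cross-arm pairs, fall back on the selected vertices of $S\cap V_i^j$, $S\cap V_{i'}^j$ for equal-depth pairs on opposite-side $c_j$-arms, and invoke the finishing-touches edge between $\nu(g_j,\text{nw}_i^j)$ and $\nu(c_j,\bc)$ together with $f(\nu(g_j,\text{nw}_i^j))$ for equal-depth pairs on same-side $c_j$-arms. You correctly isolated the same delicate case the paper singles out and the same ingredient that handles it.
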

\begin{proof}
  Let $v_{i,\gamma}$ and $v_{i',\gamma'}$ be the two endpoints of $e_j$, and $U_i^j := \{\tle,\tri,\ble,\bri,\text{tl}_{i'}^j,\text{tr}_{i'}^j,\text{bl}_{i'}^j,$ $\text{bl}_{i'}^j,\nw,\se,\text{nw}_{i'}^j,\text{se}_{i'}^j,v_{i,\gamma}^j,v_{i',\gamma'}^j\}$.
  Every pair in $\bigcup_{u \in U_i^j} P(g_j,u)$ is resolved.
  Indeed, similarly to \cref{lem:internal pair resolve}, two distinct vertices $x, y$ on a path $P(g_j,u)$ ($u \in U_i^j$) are resolved by $f(\nu(g_j,u))$, while two vertices on distinct paths $P(g_j,u)$ and $P(g_j,u')$ ($u \neq u' \in U_i^j$) are resolved by at least one of $f(\nu(g_j,u))$ and $f(\nu(g_j,u'))$.

  We now show that any pair in $\Gamma_i^j := E_i^j \cup E_{i'}^j \setminus \{P(g_j,v_{i,\gamma}^j),P(g_j,v_{i,\gamma}^j)\}$ is resolved.
  Two distinct vertices $x, y \in \Gamma_i^j$ are resolved by, say, $f(\nu(g_j,\se))$ if they are on the same path, or more generally if they have different distances to $c_j$.
  Thus let us assume that $x$ and $y$ are at the same distance from $c_j$.
  If $x \in E_i^j$ and $y \in E_{i'}^j$ (or vice versa) then the pair $\{x,y\}$ is resolved by the vertex in $S \cap V_i^j$ or the vertex in $S \cap V_{i'}^j$.
  If $x \neq y \in E_i^j$ (resp. $\in E_{i'}^j$), then $\{x,y\}$ is resolved by $f(\nu(g_j,\nw))$ (resp.~$f(\nu(g_j,\text{nw}_{i'}^j))$).
  This is the reason why we added an edge between $\nu(g_j,\nw)$ and $\nu(c_j,\bc)$ (recall \cref{subsec:finalTouches}).

  We now consider pairs $\{x,y\}$ of $\bigcup_{u \in U_i^j} P(g_j,u) \times \Gamma_i^j$.
  Any of these pairs are resolved by at least one of $f(\nu(g_j,u))$, $f(\nu(g_j,u'))$, $f(\nu(g_j,\nw))$, $f(\nu(g_j,\text{nw}_{i'}^j))$, where $x$ is on the path $P(c_j,u)$ and $u'$ is any vertex in $U_i^j \setminus \{u,\nw,\text{nw}_{i'}^j\}$.
  Finally $c'_j$ is distinguished from all the other vertices in $G''$ but $c_j$ by the forced vertices attached to the neighbors of $g_j$. 
 
  Thus every pair $\{x,y\}$ in $G_j$ is resolved by $F$, except $\{c_j,c'_j\}$ which is resolved by $S$. 
\end{proof}

\begin{lemma}\label{lem:internal-external}
Every pair $\{x,y\} \in G_j \times \overline{G_j}$ is resolved by $F$.
\end{lemma}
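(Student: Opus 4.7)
~My plan is to mirror the proof of \cref{lem:internal-external pair resolve}, using the fourteen forced vertices in $F_j$ attached to the neighbors of $g_j$ distinct from $\{c_j,c'_j\}$, together with the gate-forced vertices $f(\nw)$, $f(\se)$, $f(\text{nw}_{i'}^j)$, $f(\text{se}_{i'}^j)$ and the forced vertices attached to the neighbors of $c_j$. First I would observe that, by \cref{lem:twinsDistinguished,lem:internal pair resolve,lem:internal-external pair resolve,lem:finishingTouchesDistinguished,lem:internal pair resolve.,lem:internal -external pair resolve.}, we may assume $y$ is neither a forced vertex nor the false twin of one, is not in any $\Pi_{i''}^{j''}$, $R_{i''}^{j''}$, or $\Delta_{i''}^{j''}$, and does not lie on any of the four finishing-touches paths of \cref{lem:finishingTouchesDistinguished}. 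In particular, a shortest path from a forced vertex of $F_j$ to $y$ must exit $G_j$ through $g_j$, through $c_j$, through one of the four gates, through a blue vertex in $\{\tle,\tri,\ble,\bri,\text{tl}_{i'}^j,\text{tr}_{i'}^j,\text{bl}_{i'}^j,\text{br}_{i'}^j\}$, or back into $V_i^j\cup V_{i'}^j$ via a cyan vertex.

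Next I would split on the location of $x$ in $G_j$. If $x$ lies on some $P(g_j,u)$ with $u$ a blue vertex, then $f(\nu(g_j,u))$ is at distance $\dist(g_j,x)$ from $x$; if it fails to resolve $\{x,y\}$, the shortest path from $f(\nu(g_j,u))$ to $y$ must exit $G_j$ via $g_j$, forcing $\dist(g_j,y)=\dist(g_j,x)-2$. Picking any $u'\in U_i^j\setminus\{u\}$, the forced vertex $f(\nu(g_j,u'))$ then has distance $2+\dist(g_j,x)$ to $x$ but at most $2+\dist(g_j,y)=\dist(g_j,x)$ to $y$, and so resolves $\{x,y\}$. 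If instead $x$ lies on $P(g_j,u)$ with $u\in\{\nw,\se,\text{nw}_{i'}^j,\text{se}_{i'}^j\}$, the shortest path from $f(\nu(g_j,u))$ to $y$ might exit through $u$ rather than through $g_j$; here I would combine $f(\nu(g_j,u))$ with $f(u)$ and a forced vertex on the opposite gate to break the tie, exploiting that paths through two distinct gates cannot simultaneously match the via-$g_j$ length. For $x\in E_i^j\cup E_{i'}^j$, I would run the symmetric argument on the $c_j$-side using the forced vertices attached to the neighbors of $c_j$, together with the edge $\nu(g_j,\nw)\nu(c_j,\bc)$ introduced in \cref{subsec:finalTouches}. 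Finally, for $x=c'_j$ every forced vertex in $F_j$ is at distance exactly $3$ from $c'_j$; any $y\in\overline{G_j}$ realizing this pattern would have to be adjacent to $g_j$, which forces $y=c_j\in G_j$, a contradiction.

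The main obstacle will be the interface case, where $x$ lies near one of the gates $\nw,\se,\text{nw}_{i'}^j,\text{se}_{i'}^j$ and $y$ sits in a neighbouring propagation gadget or in the edge gadget of an adjacent column. In that regime, a shortest path from a given forced vertex of $F_j$ to $y$ can legitimately leave $G_j$ through several distinct boundary vertices, so I would need to compare several candidate forced vertices simultaneously. The precise path lengths fixed in \cref{subsec:finalTouches} (the four paths of length $t$ or $2t-1$ connecting $\nu(\pi_i^j,\cdot)$ and $\nu(\rho_i^j,\cdot)$ to neighbouring gates, plus the extra cyan-side edge) are designed exactly to break these residual symmetries, and I expect the verification to reduce to showing that, for at least one of the four gate-forced vertices, the shortest path to $y$ is strictly shorter than its via-$g_j$ counterpart to $x$.
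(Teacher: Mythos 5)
Your proposal follows essentially the same route as the paper's proof: reduce to $y$ lying in a different $G_{j'}$ or in some $V_{i''}^{j''}$ (all other locations having been handled by the earlier lemmas), then case-split on the position of $x$ within $G_j$ and resolve the pair using the forced vertices attached to $N(g_j)$ together with the gate forced vertices, comparing the via-$g_j$ distances and invoking a second forced vertex $f(\nu(g_j,u'))$ to break ties. One small correction: the construction attaches forced vertices only to $N(g_j)\setminus\{c_j,c'_j\}$ (there are no forced vertices at the neighbours of $c_j$), so your ``symmetric argument on the $c_j$-side'' for $x\in E_i^j\cup E_{i'}^j$ must be run with those same $F_j$ vertices, whose shortest paths to such $x$ pass through $g_j$ and then $c_j$ --- which is exactly what the paper does with $f(\nu(g_j,\text{nw}_i^j))$, $f(\nu(g_j,\text{nw}_{i'}^j))$ and $f(\nu(g_j,u'))$.
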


\begin{proof}
  Consider an arbitrary pair $\{x,y\} \in G_j \times \overline{G_j}$.
  We can assume that $x$ is not $c'_j$, and that $y$ is in one different $G_{j'}$ or in one $V_{i''}^{j''}$ (since we already showed that the other vertices are distinguished).
  Again let $v_{i,\gamma}$ and $v_{i',\gamma'}$ be the two endpoints of $e_j$, and $U_i^j := \{\tle,\tri,\ble,\bri,\text{tl}_{i'}^j,\text{tr}_{i'}^j,\text{bl}_{i'}^j,$ $\text{bl}_{i'}^j,\nw,\se,\text{nw}_{i'}^j,\text{se}_{i'}^j,v_{i,\gamma}^j,v_{i',\gamma'}^j\}$.
  If $x$ is on a path $P(g_j,u)$, then at least one of $f(\nu(g_j,u))$ and $f(\nu(g_j,u'))$, with $u'$ being any vertex in $U_i^j \setminus \{u\}$, resolves $\{x,y\}$.
  If instead $x$ is on a path $P(c_j,u)$ with $u \in \{v_{i,\gamma-1}^j,v_{i,\gamma+1}^j,v_{i',\gamma'-1}^j,v_{i',\gamma'+1}^j\}$, then at least one of $f(\nu(g_j,\nw))$, $f(\nu(g_j,\text{nw}_{i'}^j))$, $f(\nu(g_j,u'))$, with $u'$ being any vertex in $U_i^j$, resolves $\{x,y\}$.
\end{proof}

 \cref{lem:internal,lem:internal-external} show that every vertex in $G_j$ is distinguished by $S'$.
 We finally show that the vertices in $V_i^j$ are distinguished.
 A pair of distinct vertices $x, y \in V_i^j$ is resolved by $f(\nw)$.
 We thus consider a pair $\{x, y\} \in V_i^j \times \overline{V_i^j}$.
 We can further assume that $y$ is in some $V_{i'}^{j'}$, since all the other vertices have already been shown distinguished.
 Then $\{x,y\}$ is resolved by at least one of $f(\nw)$, $f(\text{nw}_{i'}^{j'})$, the vertex in $S \cap V_i^j$, and the vertex in $S \cap V_{i'}^{j'}$.
 This finishes the proof of \cref{thm:resolve}.
 Thus $S'$ is a solution of the \md-instance.
 
 The reduction is correct and it takes polynomial-time in $|V(G)|$ to compute $G''$.
 The maximum degree of $G''$ is $16$.
 It is the degree of the vertices $g_j$ ($\nw$ and $\se$ have degree at most $11$, $\pi_i^j$ and $\rho_i^j$ have degree $8$, and the other vertices have degree at most $5$).
 The last element to establish \cref{thm:main} is to show that $\pw(G'')$ is in $O(k)$.
 Then solving \md on constant-degree graphs in time $f(\pw)n^{o(\pw)}$ could be used to solve \kmIS in time $f(k)n^{o(k)}$, disproving the ETH. 
 
 \subsection{$G''$ has pathwidth $O(k)$}

 We use the pathwidth characterization of Kirousis and Papadimitriou \cite{Kirousis85} mentioned in the preliminaries, and give a strategy with $O(k)$ searchers cleaning all the edges of $G''$.
 A basic and useful fact is that the searching number of a path is two. 

\begin{lemma}\label{lem:node searching of a path}
Two searchers are enough to clean a path $u_1u_2 \ldots u_n$.
\end{lemma}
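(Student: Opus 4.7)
The plan is to exhibit an explicit two-searcher strategy that cleans the path from left to right by leapfrogging. First, place searcher $A$ on $u_1$ and searcher $B$ on $u_2$; this cleans edge $u_1u_2$. Then, for each $k = 3, 4, \ldots, n$ in turn, displace whichever searcher currently sits on $u_{k-2}$ to the vertex $u_k$. Since its new neighbor $u_{k-1}$ is still occupied by the other searcher, placing a searcher on $u_k$ cleans the edge $u_{k-1}u_k$. After step $k$ the two searchers occupy the consecutive vertices $u_{k-1}$ and $u_k$, and the cleaned edges are exactly $u_1u_2, u_2u_3, \ldots, u_{k-1}u_k$.

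The only point to check is that no recontamination occurs when we vacate $u_{k-2}$. At the moment just before the move, the cleaned edges are $u_1u_2, \ldots, u_{k-2}u_{k-1}$ and searchers sit on $u_{k-2}$ and $u_{k-1}$. After removing the searcher from $u_{k-2}$, every vertex $u_i$ with $i \leqslant k-2$ is free, but in a path the only route from $u_i$ to any contaminated edge $u_ju_{j+1}$ with $j \geqslant k-1$ must traverse $u_{k-1}$, which is still occupied. Hence no free-of-searchers path connects a cleaned edge to a contaminated one, and nothing is recontaminated. Adding the searcher at $u_k$ then cleans $u_{k-1}u_k$ as claimed, re-establishing the invariant.

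There is essentially no obstacle here: the argument reduces to the observation that each internal vertex $u_{k-1}$ is a cut vertex separating $\{u_1,\ldots,u_{k-2}\}$ from $\{u_k,\ldots,u_n\}$, so keeping one searcher anchored on the current ``frontier'' vertex while the other leapfrogs one step ahead is enough to preserve cleanliness throughout the sweep. After $n-1$ steps all edges of the path have been cleaned, proving that two searchers suffice.
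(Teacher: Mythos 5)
Your proposal is correct and follows exactly the same leapfrog strategy as the paper's proof; the only difference is that you spell out the (easy) verification that the occupied frontier vertex blocks any recontamination path, which the paper leaves implicit.
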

\begin{proof}  
We place two searchers at $u_1$ and $u_2$.
This cleans the edge $u_1u_2$.
Then we move the searcher in $u_1$ to $u_3$.
This cleans $u_2u_3$ (while $u_1u_2$ remains clean).
Then we move the searcher in $u_2$ to $u_4$, and so on.
\end{proof}

\begin{lemma}\label{lem:pathwidth}
$\pw(G'') \leqslant 90k+83$.
 \end{lemma}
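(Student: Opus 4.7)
The plan is to exhibit a node-searching strategy on $G''$ using at most $90k+84$ searchers. By the Kirousis--Papadimitriou characterization recalled in \cref{sec:prelim} (pathwidth equals the node searching number minus one), this yields $\pw(G'') \leqslant 90k+83$.

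The key structural observation is that the construction is cylindrical along the column index, and that consecutive columns $j$ and $j+1$ are linked only through the four gates of each of the $k$ propagation gadgets $P_i^{j,j+1}$: the only other inter-column connections are the edges $\text{nw}_i^j\text{nw}_i^{j+1}$ and $\text{se}_i^j\text{se}_i^{j+1}$ (both incident to such gates), the finishing-touches paths ending at a gate, and the pendant forced vertices. Accordingly, I would first permanently place $4k$ searchers at the four gates of each $P_i^{m,1}$, for $i\in[k]$, to ``cut the cylinder'' into a linear sequence of columns, and then sweep through columns $j = 1, \ldots, m-1$: at step $j$, add $4k$ searchers at the gates of $P_i^{j,j+1}$, clean the interior of column $j$ entirely, then release the $4k$ searchers at the gates of $P_i^{j-1,j}$ (or of $P_i^{m,1}$ when $j=1$).

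To clean the interior of column $j$, I would exploit a second structural fact: within a column, the $k$ rows of gadgets are pairwise linked only through the edge gadget $\mathcal{G}(e_j)$, which touches exactly two rows $i$ and $i'$. The other $k-2$ rows can therefore be cleaned sequentially, with a constant-sized pool of movable searchers reused from row to row; the two ``active'' rows together with $\mathcal{G}(e_j)$ and the cyan-vertex paths are then cleaned last using a slightly larger constant pool. For a single row $i$, all structures to clean---the selector path $V_i^j$, the two length-$(t-1)$ paths inside $A_i^j$, the private paths between $\pi_i^j$ (respectively $\rho_i^j$) and the blue/brown/cyan neighbors of $V_i^j$, the paths $P(v_{i,1}^j,p_i^j)$ and $P(v_{i,t}^j,r_i^j)$, the finishing-touches paths, and the pendant forced-vertex gadgets---decompose into paths meeting at a bounded number of ``hub'' vertices (the four gates of the adjacent propagation gadgets, $\pi_i^j$, $\rho_i^j$, $v_{i,1}^j$, and $v_{i,t}^j$). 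By \cref{lem:node searching of a path}, each path admits a two-searcher cleaning strategy, each pendant forced vertex absorbs two more searchers, and keeping a single searcher on each currently active hub while cleaning its incident paths one at a time yields a movable team of size $O(1)$.

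The main obstacle is then a meticulous bookkeeping exercise: tracking which hubs require simultaneous guarding during each phase of a row's cleaning, counting how many paths of the edge gadget $\mathcal{G}(e_j)$ (up to six cyan-vertex paths plus the additional gate-to-$g_j$ paths) must be cleaned concurrently with the two incident rows, and verifying that the sum of the $8k$ boundary searchers (left and right column boundaries), the per-row movable pool, and the edge-gadget-phase pool fits under the claimed $90k+84$ bound. Once the two cut properties above are in place, this reduces to a finite-case analysis with no further conceptual difficulty.
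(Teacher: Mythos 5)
Your overall plan is the same as the paper's: invoke the Kirousis--Papadimitriou equivalence and sweep the cylindrical construction column by column, keeping one inter-column separator pinned to break the cycle and two moving separators around the column currently being cleaned. The difference is the choice of separator: you use only the $4k$ gates of the propagation gadgets between consecutive columns, whereas the paper's set $S_j$ also contains the closed neighborhoods of all the high-degree hubs of column $j$ (namely $g_j$, the cyan vertices, $v_{i,1}^j$, $v_{i,t}^j$, $\pi_i^j$, $\rho_i^j$, plus the four gates), giving $|S_j|\leqslant 30k+27$. That fatter separator is what lets the paper dispose of the within-column cleaning in one line: every non-main component of $G''-(S_j\cup S_{j+1})$ is a bare path, so \cref{lem:node searching of a path} applies directly. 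With your thinner separator the residual column is \emph{not} a union of paths, and the sub-strategy you sketch (guard the $O(1)$ hubs of a row, clean incident paths with two extra searchers, treat the two rows meeting $\mathcal{G}(e_j)$ last together with the edge gadget) is the right fix and does work; but it is exactly the ``meticulous bookkeeping'' you defer, and it is the part that actually needs to be written out, since it is where recontamination between rows and the edge gadget could otherwise bite.

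The one concrete error is the treatment of the gates of $P_i^{m,1}$. You first say they are placed \emph{permanently} to cut the cylinder, but your sweep then ``releases the $4k$ searchers \ldots of $P_i^{m,1}$ when $j=1$''. Taken literally, this breaks the strategy: after step $1$, column $m$ is still entirely contaminated, and its edges (the paths from $V_i^m$ to $\text{sw}_i^m$ and $\text{nw}_i^m$, the path from $g_m$, the edges of $A_i^m$) are incident to the gates of $P_i^{m,1}$; the moment those gates are vacated, the recontamination rule sends the dirt back along the just-cleaned paths from $V_i^1$, $\pi_i^1$ and $\rho_i^1$ into all of column $1$, which your sweep never revisits. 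The fix is what your opening sentence already states: keep those $4k$ searchers in place until column $m$ itself has been cleaned at the very end (this is precisely why the paper keeps all of $S_1$ occupied throughout its sweep). With that correction, and with column $m$ added as a final cleaning phase, the peak count is $12k+O(1)$ searchers, comfortably below the $90k+84$ you need.
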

\begin{proof}
  For every $j \in [m]$, let $S_j := N[g_j]~\cup~X_j~\cup~\bigcup_{i \in [k]} N[\{v_{i,1}^j,v_{i,t}^j,\pi_i^j,\rho_i^j\}] \cup \{\nw,\nee,\sw,$ $\se\}$.
  We notice that $|S_j| \leqslant 17+6+30k+4=30k+27$.
  Another important observation is that $S_1 \cup S_j$ disconnects the first $j$ columns of $G''$ from the rest of $G''$.
  Finally the connected components $G''-(S_j \cup S_{j+1})$ that are not the main component (i.e., containing more than half of the graph if $m \geqslant 4$) are all paths.

  We now suggest the following cleaning strategy with at most $90k+83$ searchers.
  We place one searcher at each vertex of $S_1 \cup S_2 \cup S_3$.
  This requires $90k+81$ searchers.
  By \cref{lem:node searching of a path}, with two additional searchers we clean all the connected components of $G'' - (S_1 \cup S_2 \cup S_3)$ that are paths.
  We then move all the searchers from $S_2$ to $S_4$, and clean all the connected components of $G'' - (S_1 \cup S_3 \cup S_4)$ that are paths.
  Since $S_1 \cup S_3$ is a separator, the edges that were cleaned during the first phase are not recontaminated when we move from $S_2$ to $S_4$.
  We then move the searchers of $S_3$ to $S_5$, and so on.
  Eventually the searchers reach $S_1 \cup S_{m-1} \cup S_m$, and the last contaminated edges are cleaned.
\end{proof}

\section{Perspectives}\label{sec:perspectives}
The main remaining open question is whether or not \md is polytime solvable on graphs with constant treewidth.
In the parameterized complexity language, now we know that \smd/$\tw$ is W[1]-hard, is it in XP or paraNP-hard?  
We believe that the tools and ideas developed in this paper could help answering this question negatively.
The FPT algorithm of Belmonte et al. \cite{Belmonte17} also implies that \md is FPT with respect to $\tl+k$ were $k$ is the size of the resolving set, due to the bound $\Delta \leqslant 2^k+k-1$ \cite{Khuller96}.
What about the parameterized complexity of \md with respect to $\tw+k$?
We conjecture that this problem is W[1]-hard as well, and once again, treewidth will contrast with tree-length.

It appears that bounded connected treewidth or tree-length is significantly more helpful than the mere bounded treewidth when it comes to solving \smd.
We wish to ask for the parameterized complexity of \md with respect to $\ctw$ only (on graphs with arbitrarily large degree).
Finally, it would be interesting to determine if planarity can sometimes help to compute a metric basis.
Therefore we also ask all the above questions in planar graphs.


\end{document}